\documentclass[a4paper,UKenglish,cleveref, autoref, thm-restate]{lipics-v2021}
\usepackage{amsmath, amssymb, amsfonts}
\usepackage{todonotes}
\usepackage{amsmath,amsfonts,amssymb,amsthm}
\usepackage{graphicx,color}
\usepackage{boxedminipage}
\usepackage{algpseudocode}
\usepackage{framed}
\usepackage{xspace}
\usepackage{todonotes}
\usepackage{footnote}
\usepackage{ifthen}
\newboolean{shortver}
\usepackage{nicefrac}

\setboolean{shortver}{false}
\usepackage{mathtools}
\usepackage{mathrsfs}
\usepackage{todonotes}
\usepackage{footnote}
\usepackage{rotating}
\usepackage{adjustbox}

\usepackage{thmtools}
\usepackage{thm-restate}

\newcommand{\cR}{{\mathcal{R}}}

\newcommand{\bR}{{\mathbb{R}}}
\usepackage{mdframed}

\newcommand{\omitted}{{\ifthenelse{\boolean{shortver}}{$\spadesuit$}{\!}}}

\usepackage[most]{tcolorbox}

\newtcolorbox{problemBox}[2]{%
  enhanced,
  colback=white,
  boxrule=0.5pt,
  sharp corners,
  left=1.2mm,right=1.2mm,top=1.0mm,bottom=1.0mm,
  title={\color{black} #1 (#2)},
  attach boxed title to top center={xshift=.10cm, yshift*=-2.5mm},
  boxed title style={size=small, frame hidden, colback=white},
}

\tcbuselibrary{skins}

\usepackage{xspace}
\newcommand{\cO}{\mathcal{O}}
\newcommand{\no}{\texttt{No}\xspace} 
\newcommand{\yes}{\texttt{Yes}\xspace}

\makeatletter
\newcommand*{\rom}[1]{\expandafter\@slowromancap\romannumeral #1@}
\makeatother

\usepackage{mathtools}

\definecolor{anti-flashwhite}{rgb}{0.95, 0.95, 0.96}

\newtheorem{reduction rule}{Reduction Rule}
\newtheorem*{reduction rule*}{Reduction Rule}

 \usepackage{xifthen}
 \usepackage{tabularx}
\usepackage{tikz} 
\usetikzlibrary{arrows.meta,calc,positioning,decorations.pathreplacing}
 \usepackage{thm-autoref}
\usepackage{enumitem}

\newcommand{\Oh}{\mathcal{O}}

\newcommand{\mtcsp}{\textsc{Monotone $2$-CSP}\xspace}
\newcommand{\ddmtcsp}{\textsc{Distinct Domain Monotone $2$-CSP}\xspace}

\newcommand{\cbra}[1]{\left\{ #1 \right\}}
\newcommand{\bin}{\{0,1\}}

\newcommand{\val}{\mathsf{val}}

\newlength{\RoundedBoxWidth}
\newsavebox{\GrayRoundedBox}
\newenvironment{GrayBox}[1]%
   {\setlength{\RoundedBoxWidth}{.93\textwidth}
    \def\boxheading{#1}
    \begin{lrbox}{\GrayRoundedBox}
       \begin{minipage}{\RoundedBoxWidth}}%
   {   \end{minipage}
    \end{lrbox}
    \begin{center}
    \begin{tikzpicture}%
       \node(Text)[draw=black!20,fill=white,rounded corners,%
             inner sep=2ex,text width=\RoundedBoxWidth]%
             {\usebox{\GrayRoundedBox}};
        \coordinate(x) at (current bounding box.north west);
        \node [draw=white,rectangle,inner sep=3pt,anchor=north west,fill=white] 
        at ($(x)+(6pt,.75em)$) {\boxheading};
    \end{tikzpicture}
    \end{center}}     

\newenvironment{defproblemx}[2][]{\noindent\ignorespaces%
                                \FrameSep=6pt%
                                \parindent=0pt%
                \vspace*{-1.5em}
                \ifthenelse{\isempty{#1}}{%
                  \begin{GrayBox}{\textsc{#2}}%
                }{%
                  \begin{GrayBox}{\textsc{#2}  parameterized by~{#1}}%
                }
                \begin{tabular*}{\textwidth}{@{\hspace{.1em}} >{\itshape} p{1.8cm} p{0.8\textwidth} @{}}%
            }{
                \end{tabular*}%
                \end{GrayBox}%
                \ignorespacesafterend
            }  

\newcommand{\cI}{\mathcal{I}}

\usepackage{relsize}

\newtoggle{includeFootnote}
\toggletrue{includeFootnote}

\usepackage{soul}

\newcommand{\p}{\mathsf{P}}

\definecolor{newgray}{gray}{0.25} 

\makeatletter
\renewcommand{\paragraph}{%
  \@startsection{paragraph}{4}%
  {\z@}{1ex \@plus 1ex \@minus .2ex}{-0.5em}%
  {\normalfont\normalsize\bfseries}%
}
\makeatother 
\newcommand{\bd}{\textsf{bd}\xspace}
\newcommand{\impline}{{\sf ImpLines}\xspace}

\newcommand{\indx}{{\sf Ind}\xspace}

\newcommand{\hor}{{\sf hor}\xspace}
\newcommand{\ver}{{\sf vert}\xspace}
\newcommand{\gridpts}{{\sf Gridpts}\xspace}

\providecommand{\ptsend}{{\sf Endpts}\xspace}
\newcommand{\blkend}{{\sf blkend}\xspace}
\newcommand{\block}{{\sf block}\xspace}
\newcommand{\gapnum}{{\sf GapNum}\xspace}
\newcommand{\gapfn}{{\sf GapFn}\xspace}
\newcommand{\exppat}{{\sf ExpPat}\xspace}

\newcommand{\relz}{{\sf realize}\xspace}

\newcommand{\pcbvcfull}{\textsc{Constrained Bipartite Vertex Cover}\xspace}
\newcommand{\pcbvcshort}{\textsc{CBVC}\xspace}

\newcommand{\plscfull}{\textsc{L-Shape Cover}\xspace}
\newcommand{\plscshort}{\textsc{LSC}\xspace}

\newcommand{\pnelscfull}{\textsc{$\NE$-L-Shape Cover}\xspace}
\newcommand{\pnelscshort}{\textsc{$\NE$-LSC}\xspace}

\newcommand{\NE}{\mathrm{NE}\xspace}
\newcommand{\NW}{\mathrm{NW}\xspace}
\newcommand{\SE}{\mathrm{SE}\xspace}
\newcommand{\SW}{\mathrm{SW}\xspace}

\newcommand{\side}{\mathsf{s}} 

\newcommand{\suppline}{\operatorname{line}}

\newcommand{\si}{\diamond}
\newcommand{\cE}{\mathcal{E}}

\newcommand{\predop}{\operatorname{pred}}

\newcommand{\succX}{\mathsf{succ}_X}
\newcommand{\succY}{\mathsf{succ}_Y}

\newcommand{\coord}{\mathsf{coord}}

\newcommand{\prbcfull}{{\sc Boundary Covering with Continuous Axis-Parallel Rectangles}\xspace}
\newcommand{\prbcshort}{{\sc BCCAPR}\xspace} 

\newcommand{\bcdaprfull}{{\sc Boundary Covering with Discrete Axis-Parallel Rectangles}\xspace} 
\newcommand{\bcdaprshort}{{\sc BCDAPR}\xspace}

\newcommand{\prbcLfull}{{\sc Boundary Covering with Continuous Axis-Parallel L-shapes}\xspace}
\newcommand{\prbcLshort}{{\sc BCCAPL}\xspace} 

\newcommand{\rcsp}{\textsc{$3$-Regular $2$-CSP}\xspace}

\title{
Covering Points with Rectangular Boundaries} 

\titlerunning{Covering Points with Rectangular Boundaries} 

\author{Madhumita Kundu}{University of Bergen, Norway}{kundumadhumita.134@gmail.com}{}{}

\author{Daniel Lokshtanov}{University of California Santa Barbara, USA}{daniello@ucsb.edu}{}{}

\author{Soumi Nandi}{The Institute of Mathematical Sciences, India}{nandisoumi1@gmail.com}{https://orcid.org/0000-0002-1000-991X}{}

\author{Saket Saurabh}{The Institute of Mathematical Sciences, India  \and University of Bergen, Norway}{saket@imsc.res.in}{https://orcid.org/0000-0001-7847-6402}{}

\author{Kushal Singanporia}{The Institute of Mathematical Sciences, India}{kushal03132@gmail.com}{}{}

\authorrunning{M. Kundu et al.} 

\Copyright{Jane Open Access and Joan R. Public} 

\ccsdesc[300]{Theory of computation~Fixed parameter tractability}
\ccsdesc[500]{Theory of computation~Computational geometry}
\ccsdesc[300]{Theory of computation~Problems, reductions and completeness}
\keywords{Geometric Covering, Axis-parallel Rectangles, W[1] and NP Hardness, Fixed Parameter Tractability, CSP} 

\category{} 

\relatedversion{} 

\EventEditors{John Q. Open and Joan R. Access}
\EventNoEds{2}
\EventLongTitle{42nd Conference on Very Important Topics (CVIT 2016)}
\EventShortTitle{CVIT 2016}
\EventAcronym{CVIT}
\EventYear{2016}
\EventDate{December 24--27, 2016}
\EventLocation{Little Whinging, United Kingdom}
\EventLogo{}
\SeriesVolume{42}
\ArticleNo{23}

\begin{document}

\maketitle

\begin{abstract}
Geometric covering problems ask for a small family of geometric objects whose union covers a given point set. We study the more restrictive \emph{boundary covering} variant, where every point must lie on the boundary of a chosen object. Motivated by the framework of Langerman and Morin\,[Discret.\ Comput.\ Geom., 2005] for hyperspheres, we initiate the study of boundary covering by axis-parallel rectangles.

We first consider the \emph{discrete} setting, where rectangles must be selected from a given family. We define \bcdaprfull\ (\bcdaprshort): given a point set \(P\subseteq\mathbb{R}^2\), a family \(\mathcal{R}\) of axis-parallel rectangles, and an integer \(k\), decide whether \(P\) can be covered by the boundaries of at most \(k\) rectangles from \(\mathcal{R}\). We prove that \bcdaprshort\ is \(\mathrm{W}[1]\)-hard parameterized by \(k\).

We then study the \emph{continuous} variant, \prbcfull\ (\prbcshort), where rectangles may be placed freely. Given \(P\subseteq\mathbb{R}^2\) and \(k\), the goal is to decide whether \(P\) can be covered by the boundaries of at most \(k\) axis-parallel rectangles. In contrast to the discrete case, we show that \prbcshort\ is fixed-parameter tractable, with running time \(2^{\cO(k\log k)}\cdot n^{\cO(1)}\), where \(n=|P|\). Our algorithm relies on a structural analysis of how \(k\) rectangles interact with the point set, reducing \prbcshort\ to at most \(2^{\cO(k\log k)}\) instances of \ddmtcsp, each solvable in polynomial time. On the hardness side, we prove NP-completeness for boundary covering by axis-aligned \(L\)-shapes and use this reduction to establish NP-completeness of \prbcshort.
\end{abstract}

\newpage
\tableofcontents
\newpage
\section{Introduction}
Geometric covering problems ask for a small collection of geometric objects whose union ``covers'' a given set of points.  Geometric covering and packing problems have been extensively studied from the perspectives of algorithms and complexity.  Since many of these geometric set cover problems are NP-hard, they have also been investigated from the viewpoints of approximation algorithms and parameterized complexity~\cite{LangermanMorin,DBLP:journals/dam/HassinM91,DBLP:conf/esa/Marx05,DBLP:conf/iwpec/Marx06,DBLP:conf/iwpec/KaraK06}.  In this paper we focus on one such geometric set cover problem in the realm of parameterized complexity.

A particularly subtle variant is \emph{boundary covering}, where each input point must lie on the \emph{boundary} of at least one chosen object (rather than anywhere in its interior).  A key starting point for our work is a result of Langerman and Morin~\cite{LangermanMorin}, which is widely regarded as one of the seminal contributions to parameterized computational geometry.  In Section~5.2 of their paper, they study the following boundary-covering problem for spheres:

\medskip
\noindent{\textsc{Covering Points with Spheres}.}
\emph{Given a set $S$ of $n$ points in $\mathbb{R}^d$,
does there exist a set $\mathcal{H}$ of at most $k$ hyperspheres such that each point of $S$
lies on the surface of at least one hypersphere in $\mathcal{H}$?}

\smallskip

  They show fixed-parameter tractability (parameter $k$, for fixed $d$) by casting the problem as an instance of their framework \textsc{Dim-Set-Cover}.  The crucial geometric observation enabling this is a \emph{dimension-drop} property: if one considers the families $R_i$ of all $i$-spheres (points, pairs of points, circles, etc.), then the intersection of an $i$-sphere and a $j$-sphere (assuming neither contains the other) is an $\ell$-sphere for some $\ell < \min\{i,j\}$.  Thus, intersections strictly reduce dimension, which is exactly the structural condition that powers their algorithmic approach. In the planar case ($d=2$), this is a problem about covering points by \emph{circle boundaries}.  It is easy to misread this as a tractability statement for covering points by disks, but these problems are fundamentally different: requiring points to lie on the boundary is much more rigid than allowing them to lie anywhere in the interior.  In fact, the usual disk-cover variants remain W[1]-hard even under strong restrictions (e.g., unit disks; see, e.g.,~\cite{DBLP:conf/esa/Marx05}).  In particular, unless FPT = W[1], there is no algorithm running in time $f(k)\,n^{\Oh(1)}$ that, given $n$ points in the plane and an integer $k$, decides whether they can be covered by (at most) $k$ unit disks.

This distinction motivates us to systematically study boundary-covering questions for other geometric families. Langerman and Morin also point out that their dimension-drop methods do not apply to axis-parallel rectangles, since the intersection of two rectangles may again be a rectangle and therefore need not reduce dimension (see conclusion in~\cite{LangermanMorin}).   Motivated by this limitation, we investigate the boundary-covering analogue for axis-parallel rectangles, formalized as
follows.

\begin{tcolorbox}[enhanced,title={\color{black} {\bcdaprfull} (\bcdaprshort)}, colback=white, boxrule=0.5pt,
	attach boxed title to top center={xshift=.001cm, yshift*=-2.5mm},
	boxed title style={size=small,frame hidden,colback=white}]
	\textbf{Input:} A set $P$ of $n$ points, a set $\mathcal{R}$ of $m$ axis-parallel rectangles in the plane $\mathbb{R}^2$,
    and an integer $k\ge 0$.\\
    \textbf{Parameter:} $k$.\\
	\textbf{Question:} Does there exist a set of at most $k$ rectangles in $\mathcal{R}$ whose \emph{boundaries} cover all
    points in $P$?
\end{tcolorbox}

\paragraph*{Roadmap.}
We first study the parameterized complexity of the above \emph{discrete} rectangle variant and show that it is
\(\mathrm{W}[1]\)-hard with respect to the natural parameter \(k\). This motivates us to consider the \emph{continuous}
setting, where rectangles may be placed freely. For this variant we establish NP-completeness, and then show that,
despite this hardness, the problem admits a fixed-parameter algorithm parameterized by \(k\).

\paragraph*{Context and closest tractable analogue.}
While covering point sets by various geometric objects has been studied extensively, we are not aware of prior work on this boundary-covering variant for axis-parallel rectangles. The closest classical tractable special case is boundary covering by \emph{axis-parallel lines}: a minimum family of horizontal and vertical lines whose union covers all points can be found in polynomial time via a reduction to \textsc{Bipartite Vertex Cover} (equivalently, by computing a maximum matching and applying K\H{o}nig's theorem)~\cite{DBLP:journals/dam/HassinM91,gaur2007covering}. However, once we impose separate upper bounds on the number of lines parallel to the $x$-axis and to the $y$-axis, the problem becomes NP-complete via a simple reduction from \textsc{Constrained Bipartite Vertex Cover}(CBVC) (given a bipartite graph $G=(A,B,E)$ and integers $k_A,k_B$, the question is whether there exists a vertex cover using at most $k_A$ vertices from $A$ and at most $k_B$ vertices from $B$)~\cite{kuo2007efficient}.  Another closely related setting is where we are given a family of segments in the plane and asked to cover all points by this fixed set of segments.  This variant was recently studied by Kowalska and Pilipczuk~\cite{DBLP:conf/stacs/KowalskaP24}, who obtained a detailed parameterized complexity classification, showing that some versions admit FPT algorithms while others are hard.

\subsection{Our Results} 
Our first result establishes that \bcdaprfull\ (\bcdaprshort) is \(\mathrm{W}[1]\)-hard when parameterized by \(k\) even in $2$-dimension. 


\begin{theorem}
\bcdaprfull\ is \(\mathrm{W}[1]\)-hard when parameterized by \(k\).
\end{theorem}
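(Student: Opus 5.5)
We reduce from \textsc{Multicolored Clique} (or equivalently a binary CSP / \textsc{Grid Tiling}). Given $G$ with colour classes $V_1,\dots,V_\ell$ of size $N$, we build $(P,\mathcal{R},k)$ with $k=\ell+\binom{\ell}{2}$: one \emph{vertex rectangle} chosen per class and one \emph{edge rectangle} per pair of classes. The construction will ensure that every solution of size $k$ uses exactly one rectangle from each of these $k$ groups. For this we place, for each group, a private ``anchor'' point that lies on the boundary of exactly the rectangles of that group and of no other rectangle in $\mathcal{R}$. Anchor points are put far apart in general position (e.g.\ along a diagonal with large gaps), so that $k$ anchors force one rectangle per group.

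\textbf{Encoding choices and consistency.} Each rectangle in $\mathcal{R}$ is specified by its four coordinates, which gives us freedom to encode indices. For class $i$ and vertex $v\in V_i$ with index $a\in[N]$, the vertex rectangle $R_{i,a}$ has its bottom side spanning a long horizontal strip, and its top edge placed at height $h_i+a\varepsilon$. For the pair $(i,j)$, on a common vertical ``check line'' $x=c_{i,j}$ we place $N+1$ test points at heights $h_i+a\varepsilon+\varepsilon/2$, $a=0,\dots,N$, i.e.\ in the gaps between consecutive possible top edges. The idea is complementary coverage, as in Marx's hardness reductions. A vertex rectangle with index $a$ reaching $x=c_{i,j}$ with a vertical side does not work directly, so we instead use the following two-sided covering scheme.

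The vertex rectangle $R_{i,a}$ covers, via its top edge $y=h_i+a\varepsilon$, test points placed exactly on the lines $y=h_i+b\varepsilon$ for $b\ge a$. To avoid coverage of the other lines, for each $b$ we use two points at $x$-positions chosen so that a rectangle of index $a$ covers one family while the edge rectangle covers the rest. Concretely, on the segment for pair $(i,j)$ we place points $p_b$ at $(c_{i,j}, h_i+b\varepsilon)$ for all $b$. $R_{i,a}$'s left side is at $x=c_{i,j}$ and spans heights $[h_i, h_i+a\varepsilon]$ (so it covers $p_b$ for $b\le a$ but nothing above, after we perturb other coordinates). An edge rectangle $E_{uv}$ for $u=a'$ in $V_i$ has a side on $x=c_{i,j}$ spanning $[h_i+(a'+1)\varepsilon, \text{top}]$, covering $p_b$ for $b>a'$. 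All the $p_b$ are covered iff $a\ge a'$. A symmetric gadget with reversed order (indices $N-a$) forces $a\le a'$, hence $a=a'$. The same is done for the $j$-coordinate of the edge at a different check line. Because one rectangle has only four sides, each vertex rectangle must serve $\ell-1$ check lines. We therefore arrange all check lines for class $i$ as vertical positions hit by the same long horizontal side. This means a vertex index is encoded by the height of a horizontal side running across all check columns, and test points sit on vertical columns where edge rectangles' vertical sides cover complementary ranges. I would fix this layout in a grid: class $i$ occupies a horizontal band, and pair $(i,j)$ owns a column meeting bands $i$ and $j$.

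\textbf{Correctness and the main obstacle.} A clique yields the obvious $k$ rectangles covering everything. Conversely, anchors force one rectangle per group, and the gadgets force equality of indices, so the edges form a clique. The main difficulty is ensuring \emph{no unintended incidences*}: a long side of one rectangle must not accidentally cover test points of other gadgets or anchors. I would handle this by choosing distinct, generic coordinates (scaled by large distinct multipliers per band and column) and checking, for each point type, exactly which rectangles contain it on their boundary. The construction has polynomial size and $k$ depends only on $\ell$, giving W[1]-hardness.
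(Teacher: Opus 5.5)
\textbf{The consistency gadget is where the argument breaks.} Your first version is a valid Marx-style test for $a\ge a'$: the left side of $R_{i,a}$ on $x=c_{i,j}$ spans $[h_i,h_i+a\varepsilon]$, and an edge-rectangle side on the same line spans $[h_i+(a'+1)\varepsilon,\cdot]$. But it needs $R_{i,a}$ to have a vertical side on the check column of \emph{every} partner $j\neq i$, plus a reversed copy for $a\le a'$. A rectangle has only two vertical sides.

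Your repair encodes $a$ by the height of one long horizontal side crossing all check columns of class $i$, and this destroys the test. A rectangle with no side on the line $x=c_{i,j}$ meets that line in at most two points. So on that column $R_{i,a}$ covers at most the test points at the heights of its two horizontal sides, and never the prefix $\{p_b : b\le a\}$ your inequality needs. (It covers none at all if, as in your first description, the test points sit at $h_i+b\varepsilon+\varepsilon/2$.)

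The edge rectangle covers a run $I$ of consecutive test points. The column is therefore covered only if the complement of $I$, which is an initial plus a final segment of the column, consists of at most those two points. Whenever $p_a$ is not among the first or last two points of the column, this condition does not involve $a$ at all, so no relation between $a$ and $a'$ is enforced.

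More generally, the boundary of one rectangle contains a segment of at most four lines and meets every other line in at most two points. So a single rectangle per colour class cannot pass its index to $\ell-1$ independent checks. The claim that ``the gadgets force equality of indices'' is unsupported, and soundness fails.

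\textbf{What is missing is a selection gadget that leaves index-dependent \emph{holes}, so that consistency becomes a one-sided covering requirement.} The paper reduces from \rcsp with budget $4k'+m'$.
\begin{itemize}
\item The value of $z_i$ is chosen by \emph{four} rectangles $BL^{(i)}_p,LT^{(i)}_p,TR^{(i)}_p,RB^{(i)}_p$, placed cyclically on the sides of a square ${\sf SQ}_i$. Their boundaries cover all variable points only if they share a common index $p$.
\item Such a quadruple leaves uncovered exactly the constraint points $\beta^{(ij_1)}_p,\beta^{(ij_2)}_p,\beta^{(ij_3)}_p$ lying between $a^{(iB)}_p$ and $a^{(iB)}_{p+1}$.
\item Each such hole must be covered by a vertical side of a constraint rectangle, and such a rectangle exists only for satisfying pairs.
\item A guard point $\alpha_{ij}$ on a private line $h_{ij}$, together with the tight budget, forces exactly one constraint rectangle per constraint.
\end{itemize}
All holes of $z_i$ sit in one gap and are read by \emph{different} constraint rectangles. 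Hence the variable gadget needs only its four rectangles no matter how many constraints it serves, while each constraint rectangle spends one vertical side per variable. This is exactly the asymmetry your one-rectangle-per-class design lacks.

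Your anchor points play the role of the guards $\alpha_{ij}$. With this hole mechanism and budget $4\ell+\binom{\ell}{2}$, your \textsc{Multicolored Clique} framing should carry over. Without it, the proposal does not prove the theorem.
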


Since the discrete rectangle variant is hard, we turn to the \emph{continuous} setting, where rectangles may be placed
freely; in particular, we study the following problem.

\begin{tcolorbox}[enhanced,title={\color{black} {\prbcfull} (\prbcshort)}, colback=white, boxrule=0.5pt,
	attach boxed title to top center={xshift=.10cm, yshift*=-2.5mm},
	boxed title style={size=small,frame hidden,colback=white}]
	\textbf{Input:} A finite set \(P=\{p_1,\dots,p_n\}\) of points in the plane \(\bR^2\), and a nonnegative integer
    \(k\in \mathbb{N}\cup\{0\}\).\\
    \textbf{Parameter:} \(k\).\\
	\textbf{Question:} Does there exist a family \(\mathcal{R}=\{R_1,\dots,R_{k'}\}\) of axis-parallel rectangles with
    \(k'\le k\) such that every point of \(P\) lies on the boundary of at least one rectangle in \(\mathcal{R}\)?
\end{tcolorbox}

We remark that Langerman and Morin~\cite{LangermanMorin} studied an analogous \emph{continuous} variant for spheres. Their paper explicitly presents an algorithm for the
continuous version.

As noted above, boundary covering by \emph{axis-parallel lines} (when lines may be chosen freely) is solvable in
polynomial time, whereas the corresponding \emph{discrete} version (where one must choose from a given set of
axis-parallel lines) is NP-complete. Our next contribution shows that, even in the continuous setting, moving beyond
lines to slightly richer axis-parallel shapes already leads to intractability. We first prove NP-completeness for
boundary covering by axis-aligned L-shapes, \prbcLfull\ (\prbcLshort), where an L-shape is the union of one
horizontal and one vertical segment sharing an endpoint. We then use this as a starting point to obtain NP-completeness
for \prbcshort.

\begin{theorem}
\label{thm:lshape-npc}
\prbcLshort is NP-complete.
\end{theorem}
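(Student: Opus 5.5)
We prove membership in NP and then give a polynomial-time reduction from \pcbvcfull, which the introduction notes is NP-complete~\cite{kuo2007efficient}.

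\textbf{Membership in NP.} An L-shape is determined by its corner point, the lengths of its two arms, and its orientation (one of $\NE,\NW,\SE,\SW$). Any solution can be normalized without uncovering a point. Shrink each arm until its far endpoint lies on a point of $P$, or collapse the arm to length zero if it covers no point. Translate each horizontal arm vertically (and each vertical arm horizontally) only if it stays through the same points. After this, every corner coordinate equals a coordinate of some input point, and every arm endpoint is a coordinate of some input point. So a certificate is $k$ L-shapes with coordinates taken from the $\cO(n)$ input coordinates, and checking it takes polynomial time.

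\textbf{Hardness.} We start from a \pcbvcshort instance $G=(A,B,E)$ with budgets $k_A,k_B$. The idea is to simulate "covering with at most $k_A$ horizontal and at most $k_B$ vertical lines." Vertex $a_i\in A$ becomes the horizontal line $y=i$, vertex $b_j\in B$ becomes the vertical line $x=j$, and edge $a_ib_j$ becomes the point $(i,j)$ in a small central grid region. The difficulty is that an L-shape supplies one horizontal and one vertical arm, so the budgets are coupled. We set $k=k_A+k_B$ and attach gadgets so that every useful L-shape spends exactly one arm on a line through the grid and its other arm on a "dummy" gadget.

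\begin{itemize}
\item For each row $i$, place a far-away private point set to the right, on a distinct vertical line $x=X_i$ with huge $X_i$. This lets an L-shape whose horizontal arm lies on $y=i$ put its vertical arm on $x=X_i$ harmlessly.
\item For each column, place an analogous gadget far above.
\item Add two large "budget blocks" in general position, far to the SE and NW. The block forcing $k_A$ consists of $k_A+1$ points arranged so that any single axis-parallel segment contains at most one of them.
\end{itemize}

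The intention is that the vertical arms of exactly $k_A$ shapes must be consumed there, while their horizontal arms can reach back to the grid rows. A cleaner way to enforce this may be to use staircase point configurations, in which no two points share a coordinate. Then each arm covers at most one staircase point, and counting gives exact budgets.

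\textbf{Proof of correctness.} The forward direction is direct: from a constrained vertex cover, build one L-shape per chosen vertex, with the grid arm covering the corresponding row or column and the other arm covering the matching staircase or dummy points. The reverse direction is the main obstacle. We must show that any $k$ L-shapes covering the construction can be transformed so that the arms entering the grid are full rows or columns, with at most $k_A$ horizontal and at most $k_B$ vertical. This requires a counting argument on the gadget points showing each shape has exactly one "free" arm, together with an exchange argument: an arm meeting the grid only partially can be extended to a whole row or column without loss. We would first try to make the gadgets so rigid that every solution is forced up to such extensions, by choosing coordinates so that gadget points share no coordinate with grid points. If rigidity fails, we would pad with extra points to block diagonal reuse of arms.
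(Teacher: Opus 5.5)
Your overall plan is the right one: reduce from \pcbvcshort, put edge points on a grid, and add gadgets that absorb one arm of each L-shape. But the gadgets you describe do not enforce the two separate budgets, and the reverse direction is left as a hope (``we would first try \dots, if rigidity fails \dots'').

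The budgets $k_A$ and $k_B$ must be enforced \emph{per arm type}. Points in general position or on a staircase cannot do that. Any two points $(x_1,y_1),(x_2,y_2)$ with distinct coordinates lie on one L-shape with corner $(x_1,y_2)$, so a block of $N$ such points costs only $\lceil N/2\rceil$ shapes. Even if each arm picks up a single block point, nothing forces those arms to be vertical rather than horizontal. So after paying for the blocks you have no control over how many vertical versus horizontal arms remain for the grid, which is exactly what soundness needs.

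Your private per-row and per-column point sets are also problematic. Points cannot be optional, so each set must be covered even when its vertex is not chosen, and you give no accounting for this within budget $k$. (Minor: if $a_i$ is the row $y=i$ and $b_j$ the column $x=j$, the edge point should be $(j,i)$.)

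The missing idea is to make each guard set \emph{collinear}. The paper encodes $a_i$ by the column $x=i$ and $b_j$ by the row $y=j$. It places $k+1$ points on each of $k_B$ designated columns above-left of the grid, and $k+1$ points on each of $k_A$ designated rows below-right of it.

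\begin{itemize}
\item A horizontal arm meets a column in at most one point, so with at most $k$ shapes every guard column must carry a vertical arm.
\item Distinct columns need distinct shapes, since each shape has one vertical arm. Thus $k_B$ vertical arms are committed off the grid, leaving at most $k-k_B=k_A$ vertical arms on grid columns.
\item Symmetrically, at most $k_B$ horizontal arms lie on grid rows.
\end{itemize}

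Soundness is then a pure count: let $A'$ (resp.\ $B'$) be the vertices whose column (resp.\ row) carries an arm covering an edge point. No normalization, exchange argument, or extra padding is needed. In the forward direction, each shape puts one arm on its guard line and runs the other across the grid. All these shapes are $\NE$-oriented, which the paper reuses later.
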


\begin{theorem}
\label{thm:prbc-npc}
\prbcshort\ is NP-complete.
\end{theorem}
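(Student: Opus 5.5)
Membership of \prbcshort\ in NP is routine: the relevant rectangles can be assumed to have sides on the lines $x=x(p)$ or $y=y(p)$ for $p\in P$. Any rectangle can be shrunk or shifted until each side passes through a point of $P$ or coincides with the line of another side, without uncovering points. A certificate therefore lists at most $k$ rectangles, each described by coordinates of input points, and can be checked in polynomial time. For hardness we reduce from \prbcLshort, which is NP-complete by Theorem~\ref{thm:lshape-npc}. We exploit the fact that a rectangle boundary is the union of two ``opposite'' L-shapes, a $\SW$-type corner and a $\NE$-type corner. The plan is to force every rectangle in a solution to have one of these two L-shapes ``wasted'' on a private gadget, so that only one useful L-shape remains per rectangle.

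\textbf{Construction.} Given an \prbcLshort\ instance $(P,k)$, we first check (or strengthen the reduction of Theorem~\ref{thm:lshape-npc} so) that the hard instances only need L-shapes of a single orientation, say $\NE$-L-shapes with the corner at the top right. Place $P$ inside a box $B$. Far to the south-west of $B$, along a diagonal, place $k$ anchor gadgets $A_1,\dots,A_k$. Each gadget is a small cluster of points that cannot be covered by fewer than one rectangle whose bottom-left corner lies at a prescribed location $c_i$. A suitable cluster is a dense ``staircase'' of points on two perpendicular rays emanating from $c_i$ going up and right, placed so that no two gadgets, and no gadget together with $B$, share a horizontal or vertical line. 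The new budget is $k$. For the forward direction, an $\NE$-L-shape with corner $q$ covering points of $P$ is completed to a rectangle with bottom-left corner $c_i$. We extend the bottom and left sides from $c_i$ far enough that the top side passes through the horizontal arm and the right side passes through the vertical arm. Using the diagonal placement, the extra boundary never hurts, and the corner coordinates can be matched by choosing the anchor rays long enough, or by making each gadget consist of points on two rays so that any rectangle covering them has corner $c_i$ but free top-right corner.

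\textbf{Backward direction (main obstacle).} We must show that any $k$ rectangles covering the new instance yield $k$ L-shapes covering $P$. First, by the separation of gadgets and a counting argument on the staircase points, each gadget $A_i$ needs its own rectangle $R_i$, with bottom-left corner exactly $c_i$, and no other rectangle can help. This gives a bijection between the rectangles and the gadgets. Since $c_i$ lies south-west of $B$ and far away, the bottom and left sides of $R_i$ miss $B$ entirely. Hence $R_i\cap B$ is contained in its top and right sides, which form an $\NE$-L-shape. Restricting to $B$ thus gives $k$ L-shapes covering $P$. The delicate part is designing the anchor gadget so that it rigidly pins one corner while leaving the opposite corner completely free. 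I would first try a gadget made of many points on the two rays from $c_i$ with distinct coordinates, arguing via pigeonhole over a polynomial number of points that any single rectangle covering more than two of them must have sides on both rays. If single-orientation hardness is not directly available from Theorem~\ref{thm:lshape-npc}, I would instead use four gadget families, one per quadrant, and argue each rectangle's useful part is an L-shape of the correct orientation.
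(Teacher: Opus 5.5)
\textbf{Gap in the backward direction.} Your overall plan matches the paper's: reduce from a single-orientation L-shape cover problem, and add far-away gadgets that force every rectangle to waste two sides, so that only an L-shape of one fixed orientation can touch $P$. Single-orientation hardness is available: the completeness direction of the reduction from \pcbvcshort\ only uses $\mathrm{NE}$-oriented shapes (corner at the bottom-left), and a $180^\circ$ rotation gives your top-right-corner version. So the four-family fallback is unnecessary. The forward direction and NP-membership are fine.

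The gap is exactly at the step you flag as delicate: the claim that each gadget $A_i$ is served by its own rectangle whose bottom-left corner is exactly $c_i$, giving a bijection between rectangles and gadgets. This is false for the two-ray gadget. Your proposed justification, that a rectangle covering more than two gadget points must have sides on both rays, is also wrong. A rectangle whose left side lies on $x=a_i$ covers the whole vertical ray of $A_i$ without having any side on $y=b_i$. Pigeonhole only gives that each of the $2k$ ray lines carries a side of some rectangle covering at least three of its points. Nothing ties the two rays of one gadget to the same rectangle.

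Here is a concrete failure. Put the gadgets on the main diagonal: $c_i=(a_i,b_i)$ with $a_1<a_2<a_3$ and $b_1<b_2<b_3$, rays of length $\rho$, and consecutive gadgets more than $\rho$ apart in both coordinates. Then:
\begin{itemize}
\item $[a_1,a_3]\times[b_1,b_3+\rho]$ covers both rays of $A_1$ and the vertical ray of $A_3$;
\item $[a_2,a_3+\rho]\times[b_2,b_3]$ covers both rays of $A_2$ and the horizontal ray of $A_3$.
\end{itemize}
No rectangle is cornered at $c_3$. For $k=3$ the third rectangle is completely unconstrained and can meet $B$ with all four sides. So your conclusion that every rectangle meets $B$ only in its top and right sides fails.

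\textbf{What is missing.} You need the paper's counting lemma, not corner pinning. Call a rectangle a server of a guard set if it has a side on that guard line covering at least three of its points. Show that no rectangle serves more than two guard sets. Then $2k$ guard sets and $k$ rectangles force every rectangle to serve exactly two.

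This is a property of the global placement, not of individual gadgets. The paper obtains it with separate families: $k$ vertical guard columns lying strictly to the right of and below $k$ horizontal guard rows. A rectangle with both vertical sides on guard columns then cannot reach any horizontal guard, and vice versa. A rectangle touching $P_0$ must therefore serve one column with its right side and one row with its top side. Its left and bottom sides then form the $\mathrm{NE}$ L-shape, and no corner is ever pinned.

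Your construction can be repaired in the same spirit. For example, place the gadgets along the anti-diagonal, with $a_i$ increasing, $b_i$ decreasing and large spacing. There one checks that serving two rays of the same direction excludes any third ray. One also checks that a rectangle serving a vertical ray with its right side, or a horizontal ray with its top side, misses $B$. But the corner-pinning bijection has to be replaced by this argument.
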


\textsc{Constrained Bipartite Vertex Cover} is our starting point for the reduction to \prbcLshort~\cite{kuo2007efficient}. 
Theorem~\ref{thm:lshape-npc} provides the core hardness gadget: it captures the essential ``turning'' behavior that
rectangle boundaries must simulate.  Building on this gadget, our reduction for Theorem~\ref{thm:prbc-npc} encodes each
L-shape choice using a constant number of rectangle-boundary constraints while preserving the parameter \(k\).

Having established NP-completeness, we study \prbcshort\ from the perspective of parameterized complexity with respect
to the solution size \(k\). Our main algorithmic result shows that, despite NP-hardness, the problem is fixed-parameter
tractable parameterized by \(k\).

\begin{theorem}
\label{thm:prbc-fpt}
\prbcshort\ is fixed-parameter tractable when parameterized by \(k\) and admits an algorithm with running time
\(2^{\Oh(k \log k)} n^{\Oh(1)}\), where \(n\) is the number of input points.
\end{theorem}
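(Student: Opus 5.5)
Following the abstract, the plan is to enumerate $2^{\Oh(k\log k)}$ combinatorial "guesses" of how the $k$ rectangles are arranged, and to reduce each guess to a polynomial-time instance of \ddmtcsp. First, a normalization. Every rectangle boundary consists of two horizontal and two vertical segments, each lying on one of at most $4k$ supporting lines. Thus any solution uses at most $2k$ distinct $x$-coordinates and at most $2k$ distinct $y$-coordinates. Without loss of generality, each such coordinate can be shifted until it equals some input coordinate, since moving a side without crossing any point keeps the cover valid. Hence all corners lie on the $n\times n$ grid of input coordinates. Brute force over this grid would give $n^{\Oh(k)}$, so the work is to remove the dependence of the exponent on $n$.

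\textbf{Step 1: guess the order type.} Guess the relative order of the at most $2k$ distinct $x$-values $x_1<\dots<x_a$ and the $y$-values $y_1<\dots<y_b$. Then, for each rectangle, guess which pair of indices gives its left and right sides and which pair gives its bottom and top. This is at most $(2k)^{\Oh(k)}=2^{\Oh(k\log k)}$ guesses. Once the guess is fixed, every side of every rectangle is the interval between two symbolic coordinates, on a symbolic line.

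\textbf{Step 2: constraints after a guess.} The unknowns are the values $x_i$ and $y_j$. Each has domain $\{$input coordinates$\}$, and we require $x_1<x_2<\dots$, so different variables take distinct values. The covering condition has two parts. First, every point $p$ must lie on some side. Second, on a vertical line $x=x_i$, the union of the vertical sides on it covers a union of $y$-intervals determined by the guess. A point $p=(p_x,p_y)$ is covered by that line only if $p_x=x_i$ and $p_y$ falls in one of those intervals.

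The key structural claim I would aim for is the following. For each column of input points (points sharing an $x$-coordinate $c$), if $x_i=c$, the points in that column not covered by horizontal sides must lie in the guessed $y$-intervals. I would also guess, with $2^{\Oh(k\log k)}$ extra branching, which line is "responsible" for each leftover region. This should make every covering requirement a binary constraint between one $x$-variable and one $y$-variable, of the form "if $x_i=c$ then $y_j\le f(c)$" or "$y_j\ge g(c)$". Such constraints are monotone, since they are downward or upward closed in the value order. Points lying on no chosen line must be covered by horizontal lines; this forces conditions like "some $y_j=p_y$", which I would handle by guessing which variable hits each distinct "forced" coordinate. I would fold that requirement in via the argument that any point not on a chosen vertical line $x_i$ pins down a horizontal line.

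\textbf{Step 3: solve and assemble.} Each resulting instance is a \ddmtcsp: ordered variables with distinct domains and monotone binary constraints. I would show it is solvable in polynomial time via a closure/propagation argument. Monotone 2-CSPs have lattice (min/max) polymorphisms, so arc consistency followed by taking the minimum surviving value of each variable yields a solution. The algorithm answers yes iff some guess yields a satisfiable instance, giving total time $2^{\Oh(k\log k)}n^{\Oh(1)}$.

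\textbf{Main obstacle.} The hard part is Step 2: showing that, after a bounded guess, "every point is covered" really decomposes into monotone \emph{binary} constraints. A priori it is a disjunction over many sides, and the condition that each unplaced point must hit some horizontal line is a global hitting requirement rather than a local one. I would first try to bound the number of "bad" columns and rows. A column containing points not on horizontal lines must equal some $x_i$, so there are at most $2k$ such columns, and we can guess their order-correspondence. Then every point is either on a guessed line or its row must be a $y_j$. Interval containment along a fixed line then becomes a pair of threshold (monotone) constraints.
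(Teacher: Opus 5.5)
There is a genuine gap, and it is the step you flag yourself.

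\textbf{What fails.} After Step~1 the symbolic coordinates $x_1<\dots<x_a$ and $y_1<\dots<y_b$ still range over all $n$ input coordinates. Nothing in your guess ties them to the point set. So for each of the $n$ points, the covering condition is still a disjunction over $4k$ sides of conditions like ``$x_i=p_x$ and $y_j\le p_y\le y_{j'}$''. That is neither binary nor monotone, and your proposed conversion does not fix it:
\begin{itemize}
\item \emph{The thresholds are not monotone.} In ``if $x_i=c$ then $y_j\le f(c)$'' (or $\ge g(c)$), the threshold comes from the arrangement of points in column $c$, which varies arbitrarily with $c$. So the constraint is not of the form $y_j\diamond f(x_i)$ with $f$ monotone, as \ddmtcsp requires.
\item \emph{Step~3 no longer applies.} Being downward or upward closed in $y_j$ for each fixed value of $x_i$ is a much weaker property. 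Once both kinds of conditionals occur with non-monotone thresholds, in general neither min nor max is a polymorphism.
\item \emph{The bad columns are unknown.} ``At most $2k$ bad columns, guess their order-correspondence'' presupposes knowing \emph{which} of the up to $n$ columns are bad. That set depends on the unknown solution, and guessing it costs $n^{\Oh(k)}$, which is exactly the brute force you set out to avoid.
\item \emph{The hitting requirement stays global.} Your requirement for points off the chosen vertical lines is never localized.
\end{itemize}

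\textbf{The missing idea.} You need an anchor that does not depend on the solution.
\begin{itemize}
\item \emph{Fixed line family.} The paper builds the bipartite graph whose vertices are the distinct rows and columns of $P$ and whose edges are the points. Via K\H{o}nig's theorem it computes, in polynomial time, a \emph{fixed} family \impline of at most $4k$ axis-parallel lines covering $P$. If none exists the answer is no, since the supporting lines of any solution form such a family.
\item \emph{Skeleton.} These lines are constants. For each rectangle side, one guesses which line of \impline it lies on, or that it lies on none.
\item \emph{Exceptional points.} On a fixed line $h$, a rectangle with no side on $h$ meets $h$ in at most two points. Hence at most $2k$ points of $P\cap h$ are not covered by sides aligned with $h$.
\item \emph{Guesses on $h$.} One guesses the order of the aligned endpoints, hence the minimal Dyck blocks, whose unions are contiguous intervals. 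One also guesses how many uncovered points lie in each gap, and which vertical side covers each of them.
\item \emph{Monotone constraints.} Coverage of $h$ then becomes constraints built from the prefix counts $A_h(x)=|\{p\in P\cap h : x(p)<x\}|$ and $B_h$, plus equalities $p_{h,u}=\ell_i$ and unary bounds $b_i\le y(h)\le t_i$. The prefix counts are monotone precisely because $h$ is fixed.
\end{itemize}
Your column-indexed thresholds lack exactly this monotonicity. Without a fixed line family, the reduction to \ddmtcsp does not go through.
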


This result is obtained by carefully analyzing how the $k$ rectangles in a solution can interact with the point set. We first discretize the plane and compute a set of at most $4k$ axis-parallel lines that together contain all input points. If size of the set exceeds $4k$, we directly return \no. We then guess which side of which rectangle aligns with which chosen line, and how the points on each such line are covered by the horizontal sides of the rectangles aligned with it. Points that are not covered by these aligned rectangles are called \emph{exceptional points}; we show that their total number is bounded by $2k$, and we guess how these exceptional points are covered by sides of rectangles that do not align with the chosen lines. 

These guesses allow us to encode the problem as an instance of \ddmtcsp, which is known to be solvable in polynomial time~\cite{DBLP:journals/dcg/AgrawalKLSZ24}. In summary, we obtain at most $2^{\Oh(k \log k)}$ instances of \ddmtcsp  such that the input instance is a \yes-instance if and only if at least one of these \ddmtcsp  instances is satisfiable. This also illustrates the power of \ddmtcsp as a tool for designing FPT algorithms in computational geometry~\cite{DBLP:journals/dcg/AgrawalKLSZ24}. We give a more detailed technical overview of this result in Section~\ref{sec:overview}.

\subsection{Related work.}
Many classical geometric set cover problems remain hard when parameterized by the number $k$ of objects: for example, covering points with unit squares is W[1]-hard (as shown by Marx, see also later expositions)~\cite{DBLP:conf/esa/Marx05,DBLP:conf/stacs/KowalskaP24}, and related separator-based techniques give $n^{\Oh(\sqrt{k})}$-time algorithms for covering with disks/squares rather than FPT running times~\cite{DBLP:journals/talg/MarxP22}.  Covering points by lines also has a rich literature (the general \textsc{Point Line Cover} problem is NP-hard and has been studied from a parameterized/kernelization perspective)~\cite{DBLP:journals/dam/HassinM91,LangermanMorin,DBLP:journals/talg/KratschPR16}.  
\section{Technical Overview of Our Results}
\label{sec:overview}
In this section we provide a high-level overview of our main results and the techniques behind them. We begin with the
\(\mathrm{W}[1]\)-hardness proof for the discrete rectangle variant \bcdaprshort, then outline the NP-hardness proofs for
the continuous setting (via axis-aligned L-shapes and then rectangles), and finally summarize the ideas underlying
our fixed-parameter algorithm for \prbcshort.

\subsection{Technical overview for W[1]-hardness for \bcdaprshort}
We prove \(\mathrm{W}[1]\)-hardness of \bcdaprshort\ by a parameterized reduction from \rcsp, where each constraint
involves exactly two variables and each variable appears in exactly three constraints (see~\cref{sec:constraintSatisfaction} for formal definition). 
The input to this problem is a set $Z = \{z_1$, $z_2$, $\cdots$, $z_{k'}\}$ of variables, a common (finite) domain $D \subseteq \mathbb{N}$ for all the variables $z_i\in Z$, a set $\mathcal{C}$ of $m'=\frac{3k'}{2}$ constraints and for each constraint $C_{ij}\in\mathcal{C}$, involving variables $z_i$ and $z_j$, a set ${\sf sat}_{ij}\subseteq D\times D$ of pairs of assignments to $z_i$ and $z_j$ that satisfy $C_{ij}$. The objective is to check if there is an assignment ${\sf asg}: Z \rightarrow D$ such that each constraint in $\mathcal{C}$ is satisfied.
Let
\(I=(Z,D,\mathcal{C}, \{{\sf sat}_c\}_{c\in\mathcal{C}})\) be an \rcsp instance.
We construct a set of points \(P\), a family of axis-parallel rectangles \(\mathcal{R}\), and a budget
\(k := 4k' + m' = \tfrac{11k'}{2}\), such that \(I\) is satisfiable if and only if the constructed instance admits a
boundary cover by at most \(k\) rectangles.

\smallskip

\noindent\textbf{Variable gadgets.}
For each variable \(z_i\) we place a square \({\sf SQ}_i\) of side length \(t+2\).
Along each side of \({\sf SQ}_i\) we place \(t\) \emph{variable points}, one for each domain value, in cyclic order
around the square (plus one dummy point to separate choices). For every \(a_j\in D\) we introduce four
\emph{variable rectangles} \(BL^{(i)}_j, LT^{(i)}_j, TR^{(i)}_j, RB^{(i)}_j\), each defined by a pair of diagonally
opposite variable points on two consecutive sides of \({\sf SQ}_i\).
These four rectangles are pairwise disjoint and together cover all variable points of \({\sf SQ}_i\).
Crucially, covering all variable points forces choosing \emph{exactly one} such quadruple:
with fewer than four variable rectangles some variable points remain uncovered, and with four rectangles the only way to
cover all variable points is to pick one rectangle of each type with a \emph{common index} \(j\), i.e.,
\(\{BL^{(i)}_j,LT^{(i)}_j,TR^{(i)}_j,RB^{(i)}_j\}\).
Thus, selecting these four rectangles encodes the assignment \(z_i=a_j\). See Figure~\ref{var_gad} for an illustration.

\smallskip
\noindent\textbf{Constraint gadgets.}
Suppose \(z_i\) appears in the three constraints \(C_{ij_1},C_{ij_2},C_{ij_3}\).
For each domain value \(a_p\in D\), we place three \emph{constraint points}
\(\beta^{(ij_1)}_p,\beta^{(ij_2)}_p,\beta^{(ij_3)}_p\) on the bottom side of \({\sf SQ}_i\), between the consecutive
variable points corresponding to \(a_p\) and \(a_{p+1}\). By design, when the variable gadget selects the quadruple
encoding \(z_i=a_p\), these three constraint points are \emph{not} covered by the chosen variable rectangles and must be
covered by additional rectangles, see Figure~\ref{fig:cons_pts_mod} for an illustration.
For each constraint \(C_{ij}\) we introduce a dedicated horizontal line segment \(h_{ij}\) above the variable gadgets
and place on it a \emph{guard point} \(\alpha_{ij}\) that cannot be covered vertically (refer to \Cref{cons_rec}). For every satisfying pair
\((a_p,a_q)\) of \(C_{ij}\), we add a \emph{constraint rectangle} whose boundary covers \(\alpha_{ij}\) horizontally and
covers \(\beta^{(ij)}_p\) and \(\beta^{(ji)}_q\) vertically. Because each \(\alpha_{ij}\) lies on its own line \(h_{ij}\),
covering all guard points forces the solution to pick at least one rectangle per constraint.
See Figure~\ref{fig:final_rec} for illustration.

\smallskip
\noindent\textbf{Budget forcing and correctness.}
We set \(k=4k'+m'\). Any feasible solution must cover all variable points, implying at least \(4\) variable rectangles
per gadget and hence at least \(4k'\) variable rectangles overall. Likewise, each guard point \(\alpha_{ij}\) can only be
covered by a constraint rectangle associated with \(C_{ij}\), so at least \(m'\) constraint rectangles are necessary.
Since the total budget is exactly \(4k'+m'\), every size-\(k\) solution must be \emph{tight}: it selects exactly one
quadruple of variable rectangles per variable and exactly one constraint rectangle per constraint.
This yields a bijection between solutions and assignments: from a satisfying assignment \(\sigma\) we select the
corresponding \(4k'\) variable rectangles and, for each constraint, the unique rectangle corresponding to the satisfying
pair \((\sigma_i,\sigma_j)\); conversely, from any size-\(k\) solution we read off a unique value \(\pi(i)\) per variable
gadget and argue that for every constraint \(C_{ij}\) the selected constraint rectangle exists only if
\((a_{\pi(i)},a_{\pi(j)})\) satisfies \(C_{ij}\). Hence the constructed \bcdaprshort\ instance is a \yes instance if and
only if the original \rcsp instance is satisfiable, establishing \(\mathrm{W}[1]\)-hardness parameterized by \(k\). This result is given in Section \ref{sec:woneHardness}. 

\subsection{Overview of NP-hardness results}
Our NP-hardness proof proceeds in two stages. We begin by formally defining an axis-aligned L-shape.

\begin{definition}[Axis-aligned L-shape]\label{def:lshape}
An \emph{axis-aligned L-shape} is a set \(L = V \cup H\), where \(V\) is a vertical line segment and \(H\) is a
horizontal line segment, such that \(V \cap H=\{c\}\) for some point \(c\), and \(c\) is an endpoint of both \(V\) and
\(H\) (the \emph{corner}). We call $H$ the \emph{horizontal arm} of $L$ and $V$ the \emph{vertical arm} of $L$. A point \(p\in\mathbb{R}^2\) is \emph{covered} by L if \(p\in L\).
\end{definition}

We now define the associated covering problem.

\begin{tcolorbox}[enhanced,title={\color{black} {\plscfull} (\plscshort)}, colback=white, boxrule=0.5pt,
	attach boxed title to top center={xshift=.001cm, yshift*=-2.5mm},
	boxed title style={size=small,frame hidden,colback=white}]
	\textbf{Input:} A finite set \(P\subseteq \bR^2\) of \(n\) points and an integer \(k\ge 0\).\\
    \textbf{Parameter:} \(k\).\\
	\textbf{Question:} Do there exist at most \(k\) axis-aligned L-shapes whose union covers all points of \(P\)?
\end{tcolorbox}

We first show that \plscshort\ is NP-complete via a reduction from \pcbvcfull.
Given a constrained bipartite vertex cover instance $(G=(A,B,E),k_{A},k_{B})$, we embed the edges as points on an $[m]\times[n]$ grid and add two families of \emph{guard points}: $k_{B}$ long vertical columns to the top-left and $k_{A}$ long horizontal rows to the bottom-right.  Choosing an L-shape with a vertical arm on a designated column encodes selecting a vertex in $B$, and choosing an L-shape with a horizontal arm on a designated row encodes selecting a vertex in $A$.  By setting $M=k+1$ guard points per designated line, any solution with at most $k=k_{A}+k_{B}$, L-shapes is forced to dedicate $k_{B}$ distinct shapes to the top-left columns and $k_{A}$ distinct shapes to the bottom-right rows.  The remaining arms in the central grid then correspond exactly to a constrained vertex cover of $G$. For a schematic geometric overview of our reduction see Figure~\ref{fig:lsc-overview}. 

In the second stage, we reduce from \pnelscfull(\pnelscshort), the restriction of \plscshort to $\NE$-oriented L-shapes, to \prbcfull.  Starting from a point set $P_0$ and parameter $k$, we place $k$ vertical guard columns strictly to the right and $k$ horizontal guard rows strictly above $P_0$, each populated with $M=2k+1$ guard points.  Any family of at most $k$ rectangles that boundary-covers all guards must place a rectangle side on each guard line (see Figure~\ref{fig:brc-guards} for an illustration), and a simple geometric argument shows that every such rectangle necessarily uses one vertical guard line as its right side and one horizontal guard line as its top side.  Consequently, each rectangle “pays for’’ exactly two guard sets and its left and bottom sides form an $\NE$-oriented L-shape that lies over $P_0$.  This yields a one-to-one correspondence between rectangle solutions for $P_0$ and $\NE$-L-shape solutions for $P_0$, completing the NP-completeness proof for \prbcshort. These results are proved in Section~\ref{sec:npHardness}.

\subsection{Overview of an FPT algorithm for \prbcshort}
We give an FPT algorithm for \prbcshort\ parameterized by \(k\). The algorithm can be viewed as a parameterized
reduction to \ddmtcsp: given an input instance \((P,k)\), we compute, in time \(f(k)\cdot |P|^{\Oh(1)}\), an equivalent
instance of \ddmtcsp. Consequently, \prbcshort\ is fixed-parameter tractable with respect to \(k\). Before proceeding, we
first define \ddmtcsp.

\paragraph*{{Distinct Domain Monotone $2$-CSP}:} The input to this problem is a set $Z = \{z_1$, $z_2$, $\cdots$, $z_{n'}\}$ of variables, a (finite) domain $D_i \subseteq \mathbb{N}$ for the variable $z_i$, for each $i\in [n']$, and a set $C$ of $m'$ constraints, where each constraint is of the following form: $c = [z_i \si f(z_j)]$, where $i,j \in [n']$, $f: D_{j} \rightarrow \mathbb{N}$ is a monotone function, and $\si \in \{\leq, \geq, =\}$. The objective is to check if there is an assignment ${\sf asg}: Z \rightarrow \mathbb{N}$ such that: i) for each $i \in [n']$, ${\sf asg}(z_i) \in D_i$, and ii) each constraint in $C$ is satisfied, i.e., for each $c = [z_i \si f(z_j)] \in C$, ${\sf asg}(z_i) \si f({\sf asg}(z_j))$ is true.

In the above problem definition, for simplicity in its usage, we allow domains to be empty sets, in which case we trivially have a no-instance of the problem. The problem \mtcsp is a special case of {\sc Distinct Domain Monotone $2$-CSP}, where the domains of all the variables are the same. A polynomial time algorithm for {\sc Monotone $2$-CSP} can be obtained via a simple reduction to {\sc $2$-SAT}~\cite{DBLP:journals/dcg/AgrawalKLSZ24}. A very minor modification to this algorithm for {\sc Monotone $2$-CSP} results in a polynomial time algorithm for {\sc Distinct Domain Monotone $2$-CSP}, which is stated in the following proposition.

\begin{proposition}[\cite{DBLP:journals/dcg/AgrawalKLSZ24}]\label{prop:distinct-domian-csp}
{\sc Distinct Domain Monotone $2$-CSP} has a polynomial time algorithm.
\end{proposition}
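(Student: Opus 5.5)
We reduce {\sc Distinct Domain Monotone $2$-CSP} to {\sc $2$-SAT}, adapting the reduction for \mtcsp. For each variable $z_i$ with domain $D_i$ (if some $D_i=\emptyset$ we answer \no at once), sort $D_i=\{d^i_1<\dots<d^i_{r_i}\}$. Introduce Boolean variables $x_{i,d}$ for $d\in D_i$, meaning ``${\sf asg}(z_i)\ge d$''. Add the clauses $x_{i,d^i_1}$ (always true) and $x_{i,d^i_{s+1}}\Rightarrow x_{i,d^i_s}$ (downward closure). Every assignment of the $x$'s that satisfies these clauses then corresponds bijectively to a value of $z_i$ in $D_i$: the largest $d$ with $x_{i,d}$ true. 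Note that a threshold $t\in\mathbb{N}$ that is not in $D_i$ can be rounded to an element of $D_i$. Thus ``$z_i\ge t$'' equals $x_{i,d}$ for the least $d\in D_i$ with $d\ge t$, or false if no such $d$ exists. Similarly, ``$z_i\le t$'' equals $\neg x_{i,d'}$ for the least $d'\in D_i$ with $d'>t$, or true if no such $d'$ exists.

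\textbf{Encoding constraints.} Consider $c=[z_i\le f(z_j)]$ with $f$ monotone. Assume first that $f$ is non-decreasing. Then $c$ holds iff for every $e\in D_j$: if $z_j\le e$ then $z_i\le f(e)$. Write $e^-$ for the predecessor of $e$ in $D_j$. The condition ``$z_j\le e$'' is $\neg x_{j,e^+}$ with $e^+$ the successor of $e$. So we add, for each $e\in D_j$, the clause $x_{j,e^+}\vee[z_i\le f(e)]$, where the bracket is the literal or constant from the rounding above.

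To see that this suffices, take ${\sf asg}(z_j)=e$. The clause for $e$ forces $z_i\le f(e)$. The clauses for larger $e'$ are satisfied, since $x_{j,e'^+}$ is true there, or $f(e')\ge f(e)$ makes them implied anyway. Conversely, the clauses are implied by $c$ and monotonicity, since $z_j\le e$ gives $f(z_j)\le f(e)$.

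If $f$ is non-increasing, use ``$z_j\ge e\Rightarrow z_i\le f(e)$'' instead, i.e.\ the clauses $\neg x_{j,e}\vee[z_i\le f(e)]$. The case $\ge$ is symmetric. The case $=$ is the conjunction of $\le$ and $\ge$. Every clause has at most two literals, and constant literals are simplified away (an empty clause gives \no). So we obtain a {\sc $2$-SAT} formula of size $\Oh(\sum_i|D_i|+|C|\cdot\max_j|D_j|)$. The $f$-values can be computed by evaluating $f$ on each domain point, assuming $f$ is given explicitly or is evaluable in polynomial time.

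\textbf{Correctness.} Any satisfying CSP assignment induces the threshold truth assignment, and that assignment satisfies all clauses by the implications above. Any satisfying {\sc $2$-SAT} assignment respects downward closure, so it decodes to a value ${\sf asg}(z_i)\in D_i$ whose threshold literals agree with it. The constraint clauses then certify every constraint. Since {\sc $2$-SAT} is solvable in linear time, the whole procedure runs in polynomial time.

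The main obstacle is the rounding of thresholds $f(e)\notin D_i$ to literals of the distinct domains. This is the ``minor modification'' relative to the common-domain algorithm. It must be checked that rounding preserves equivalence: it does, because $z_i$ only takes values in $D_i$, so ``$z_i\le t$'' and ``$z_i\le \max\{d\in D_i: d\le t\}$'' are the same predicate.
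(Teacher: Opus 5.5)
Your proof is correct and follows the same route as the result the paper cites: the threshold (order) encoding of \mtcsp into {\sc $2$-SAT}, where rounding the thresholds $f(e)$ to elements of $D_i$ is exactly the ``very minor modification'' the paper alludes to. One small slip: for $e'>{\sf asg}(z_j)$ the first literal of the clause for $e'$ is false and the clause holds by monotonicity ($f(e')\ge f(e)$), whereas this literal is true for $e'<{\sf asg}(z_j)$; your ``Conversely'' sentence already gives the correct justification.
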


\noindent
Having defined \ddmtcsp, we are now ready to describe the different steps in our reduction to \ddmtcsp.
\begin{itemize}
\setlength{\itemsep}{3pt}
\item \textbf{Step 1: Discretization and a finite candidate family.}
We discretize the plane by selecting $2|P|$ vertical and $2|P|$ horizontal lines, and show that there exists an optimal solution
in which \emph{every} rectangle side lies on one of these chosen lines.
As a consequence, the universe of candidate rectangles is finite; in particular, there is a family of at most $|P|^{\mathcal{O}(1)}$
rectangles (e.g., $\mathcal{O}(|P|^4)$) such that some optimal solution of size at most $k$ can be chosen entirely from this family.

\item \textbf{Step 2: A small set of ``important'' lines ($\impline$.}
Next we show that if $(P,k)$ is a \yes-instance of \prbcshort, then there exists a set of at most $4k$ lines (vertical and horizontal)
whose union contains all points of $P$.
Moreover, such a set can be found using a known routine via a reduction to \textsc{Vertex Cover} on a bipartite graph.
We denote the resulting set of lines by $\impline$.

\item \textbf{Step 3: Exceptional points on a line.}
Fix a line $L\in\impline$. Observe that if a solution rectangle has \emph{no} side aligned with $L$, then its boundary intersects $L$
in at most two points. Therefore, among $k$ rectangles, at most $2k$ points of $P\cap L$ can be covered by rectangles that do not align
with $L$.
We call a point of $P\cap L$ \emph{exceptional} (with respect to a solution) if it is covered by a rectangle whose relevant side is
\emph{not} aligned with $L$. Hence, every $L\in\impline$ has at most $2k$ exceptional points.
In particular, if $L$ contains more than $2k$ points of $P$, then some rectangle side in any feasible solution must align with $L$.

\item \textbf{Step 4: Guessing a skeleton.}
Let $R_1,\dots,R_{k'}$ be a hypothetical solution with $k'\le k$.
We define a \emph{skeleton} to be a function that maps each side of each $R_i$ to an element of $\impline\cup\{\bot\}$:
\begin{itemize}
\item mapping a side to $L\in\impline$ means that we guess this side lies on $L$;
\item mapping a side to $\bot$ means that the side does not align with any line in $\impline$.
\end{itemize}
Intuitively, the skeleton guesses which rectangle sides are ``aligned'' to important lines and which sides are ``free''.

\item \textbf{Step 5: Ordering endpoints on each important line.}
Fix a horizontal line $h\in\impline$.
From the skeleton we know exactly which rectangle sides (top or bottom sides) lie on $h$.
Suppose there are $m$ such sides. Each contributes two endpoints; denote them by $l_i$ and $r_i$ for $i\in[m]$.
We guess the left-to-right order of these $2m$ endpoints along $h$, i.e., a permutation of $\{l_i,r_i:i\in[m]\}$.

We declare such an order \emph{admissible} if, after replacing each $l_i$ by \textsf{left} and each $r_i$ by \textsf{right},
the resulting string belongs to the Dyck language over $\Sigma=\{\textsf{left},\textsf{right}\}$.

\item \textbf{Step 6: Gaps and counting exceptional points.}
For each admissible Dyck string, consider its unique decomposition into \emph{minimal} Dyck blocks,
say $X = X_1X_2\cdots X_q$, where each $X_j$ is a nonempty Dyck word and no proper prefix of $X_j$ is a Dyck word.
This induces $q+1$ \emph{gaps}: before $X_1$, between consecutive blocks, and after $X_q$.
We then guess, for each gap, \emph{how many} exceptional points of $P$ lie in that gap (but not their identities).
Since the number of exceptional points on $h$ is at most $2k$, we only need to guess how they are distributed among the $q{+}1$ gaps.
Formally, we guess an integer $k^\star\le 2k$ and a $(q{+}1)$-tuple of nonnegative integers
\[
(w_1,\ldots,w_{q+1})\in \mathbb{Z}_{\ge 0}^{\,q+1}
\quad\text{such that}\quad
\sum_{i=1}^{q+1} w_i \;=\; k^\star.
\]
The number of such distributions is $\binom{k^\star+q}{q}$, and since $k^\star \le 2k$ and $q \le m \le k$, this quantity is bounded by $8^k$. A tighter analysis gives an improved bound of $6.75^k$. 

\smallskip
\item \textbf{Step 7: Assigning exceptional points to vertical sides.}
Finally, for each exceptional point (abstractly counted in Step~6), we guess which rectangle covers it and whether it is covered
by the left or the right vertical side of that rectangle.

\item \textbf{Step 8: Reduction to \ddmtcsp.}
Once the skeleton, endpoint orders, gap counts, and exceptional-point assignments are fixed, the remaining feasibility constraints
can be expressed as an instance of \ddmtcsp.  We solve that \ddmtcsp instance and accept if any branch succeeds. This reduction step is fairly technical, so we omit its details from the overview.

\end{itemize}

\medskip
\noindent
The number of branches created by the guesses above is bounded by $2^{\Oh(k \log k)}\cdot n^{\mathcal{O}(1)}$, and each branch can be processed
in polynomial time in $n$ (plus the time needed to solve the resulting \ddmtcsp instance).
Therefore the overall algorithm runs in  $2^{\Oh(k \log k)}\cdot n^{\mathcal{O}(1)}$ time.
\section{Notations and Preliminaries}
For $m \in \mathbb{N}$, we denote $\{1, \dots, m\}$ by $[m]$ and $\{0, \dots, m\}$ by $[m]_0$. 
An axis-parallel rectangle $R$ in the plane has the following form $$R=\{(x,y)\in\bR^2\;|\;a\leq x\leq b,\; c\leq y\leq d\},$$ for some $a,b,c,d\in\bR$. Throughout the paper, all rectangles are assumed to be axis-parallel, so we simply refer to them as rectangles. The boundary of a rectangle $R$, denoted by $\bd(R)$, is defined as $$\bd(R)=\{(x,y)\in R\;|\; x=a \text{ or } x=b \text{ or } y=c \text{ or } y=d\}.$$ 

\noindent
The four lines, namely, $x=a$, $x=b$, $y=c$ and $y=d$ are termed as the \emph{boundary lines} of $R$. Furthermore, we call the 
sides of a rectangle $R$ as \emph{left, right, bottom} and \emph{top side} of $R$ and the corresponding lines on which they lie, we call them the \emph{left, right, bottom} and \emph{top boundary line}  of $R$ respectively. A point $p \in P$ is said to be \emph{covered} by a rectangle $R$ if $p$ lies on the boundary of $R$.

Let $\pi$ be an ordering of the elements in the set $\{e_1, \dots, e_n\}$. 
We write $x \prec_{\pi} y$ to indicate that element $x$ appears to the left of element $y$ in the order induced by $\pi$. 
Accordingly, we represent the ordering $\pi$ as
$$
e_1 \prec_{\pi} e_2 \prec_{\pi} \dots \prec_{\pi} e_n.
$$
When the ordering is clear from context, we omit the subscript and simply write $\prec$.

\subsection{Definition of {\sc $3$-Regular $2$-CSP}}\label{sec:constraintSatisfaction}
We now define the source problem for our W[1]-hardness.  
Toward that we first define constraint satisfaction problems (CSPs) of arity two (also called binary constraints).) We follow the notation and definitions of
the seminal paper of Guruswami et al.~\cite{DBLP:conf/stoc/GuruswamiLRS024}. 
Formally, a CSP instance $G$ is a quadruple $(V(G), E(G), D, \{C_{e}\}_{e\in E(G)})$, where:
\begin{itemize}
\setlength{\itemsep}{2pt}
    \item $V(G)$ is the set of variables.
    \item $E(G)$ is the set of constraints.
    Each constraint $e=\cbra{u_e,v_e} \in E(G)$ has arity $2$ and is related to two distinct variables $u_e,v_e\in V(G)$.
    The \emph{constraint graph} is the undirected graph on the vertices $V(G)$ and the edges $E(G)$. Note that we allow multiple constraints between the same pair of variables and thus the constraint graph may have parallel edges.
    \item $D$ is for the alphabet of each variable in $V(G)$. We use $D=[n]$. 
    \item  Given a constraint $e\in E(G)$, $C_e\subseteq D \times D =[n]\times [n]$. Furthermore, given $\{C_{e}\}_{e\in E(G)}$, we can define $\{\Pi_{e}\}_{e\in E(G)}$, the set of validity functions of constraints.  
    Given a constraint $e\in E(G)$, the validity function $\Pi_e(\cdot,\cdot)\colon D\times D\to \bin$ checks whether the constraint $e$ between $u_e$ and $v_e$ is satisfied. That is, $\Pi_e(\cdot,\cdot)$ assigns $1$ if and only of the  tuple is in $C_e$. 
\end{itemize}
We use $|G|=(|V(G)|+|E(G)|)\cdot|D|$ to denote the \emph{size} of a CSP instance $G$.

\paragraph*{Assignment and Satisfaction Value.}

An \emph{assignment} is a function $\sigma\colon V(G)\to D$ that assigns to each variable a value from the alphabet. The \emph{satisfaction value} for a mapping $\sigma$, denoted as $\val(G,\sigma)$, represents the proportion of constraints that $\sigma$ satisfies, i.e., \[\val(G, \sigma)=\frac{1}{|E|}\sum_{e\in E} \Pi_e(\sigma(u_e),\sigma(v_e)).\] 
The satisfaction value for $G$, denoted by $\val(G)$, is the highest satisfaction value across all mappings, i.e., $\val(G) = \max_{\sigma\colon V(G)\to \Sigma} \val(G, \sigma)$. We define an assignment $\sigma$ as a \emph{solution} to a CSP instance $G$ if $\val(G,\sigma) = 1$, and say $G$ is \emph{satisfiable} if and only if $G$ has a solution. When the context is clear, $\sigma$ is omitted in the constraint description; therefore, $\Pi_e(u_e,v_e)$ represents $\Pi(\sigma(u_e),\sigma(v_e))$.

We will be working with the {\sc $3$-Regular $2$-CSP} problem. An input to this problem consists of a 2CSP $G$ with $k$ variables over size-$n$ alphabets, where the constraint graph $G$ is $3$-regular. The question is whether $G$ is satisfiable. 

\begin{proposition}[\cite{DBLP:conf/stoc/GuruswamiLRS024,DBLP:conf/soda/LokshtanovR0Z20}]\label{prop:distinct-domain-2-csp}
\rcsp is \(\mathrm{W}[1]\)-hard when parameterized by the number of variables.
\end{proposition}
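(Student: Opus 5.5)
We plan to prove the proposition by a parameterized reduction from \textsc{Multicolored Clique}, which is W[1]-hard parameterized by the number $k$ of colour classes. We first pass through a general binary CSP and then regularize its constraint graph without changing satisfiability; the number of variables stays bounded by a function of $k$.

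\textbf{Step 1 (clique to 2-CSP).} Let $(H, V_1,\dots,V_k)$ be a \textsc{Multicolored Clique} instance, padded so that every $|V_i|=n$, and assume $k\ge 4$ (smaller $k$ is solvable in polynomial time). We create one variable $x_i$ per colour class, with alphabet $D=[n]$ indexing the vertices of $V_i$. For every pair $i<j$ we add a constraint $e=\{x_i,x_j\}$ with $C_e=\{(a,b): \text{the $a$-th vertex of } V_i \text{ and the $b$-th vertex of } V_j \text{ are adjacent in } H\}$. Satisfying assignments are then exactly multicoloured $k$-cliques. The constraint graph is $K_k$, which is $(k-1)$-regular with $k-1\ge 3$.

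\textbf{Step 2 (degree reduction to exactly $3$).} We replace each $x_i$ by $k-1$ copies $x_i^{(j)}$, one for each $j\ne i$. We arrange these copies in a cycle and join consecutive copies by equality constraints $\{(a,a): a\in D\}$. When $k-1=3$ this is a triangle, and parallel edges would be permitted in any case since the definition allows multigraphs. Each original constraint $\{x_i,x_j\}$ is reattached to the pair $x_i^{(j)}, x_j^{(i)}$ with the same relation $C_e$. Every copy now has two cycle edges and one original edge, so the new constraint graph is $3$-regular. It has $k(k-1)$ variables and size polynomial in $|H|$.

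\textbf{Step 3 (correctness).} If $H$ has a multicoloured clique, we assign to every copy of $x_i$ the chosen vertex of $V_i$. This satisfies all equality constraints, and every original constraint holds by adjacency. Conversely, each cycle of equality constraints is connected, so any satisfying assignment gives all copies of $x_i$ the same value $a_i$. The reattached constraints then force $a_i$ and $a_j$ to be adjacent for every pair $i\neq j$, which yields a multicoloured clique.

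The new parameter is $k(k-1)$, a function of $k$ alone, so this is a parameterized reduction and the proposition follows. The only delicate step is Step 2. There we must check that exact $3$-regularity really holds, which needs the cycle length $k-1\ge 3$ or else the multigraph allowance, and that copying preserves satisfiability exactly. Both checks reduce to the connectivity of each equality cycle, so we do not expect a real obstacle.
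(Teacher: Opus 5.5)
Your argument is correct. After padding with isolated vertices, the equality cycles of length $k-1\ge 3$, together with the perfect matching of reattached constraints $x_i^{(j)}x_j^{(i)}$, give a simple $3$-regular constraint graph on $k(k-1)$ variables. Its satisfying assignments correspond exactly to multicoloured $k$-cliques. For $k\le 3$, state explicitly that you solve the instance directly and output a fixed trivial yes- or no-instance of \rcsp, e.g.\ $K_4$ with all-allowed or empty relations.

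The paper gives no proof here; it imports the proposition from Guruswami et al.\ and Lokshtanov et al. Your self-contained reduction is therefore a different and more elementary route. It suffices for how the paper uses the result: the reduction to \bcdaprshort\ only needs W[1]-hardness and tolerates any computable blow-up of the parameter.

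What your route does not deliver is the sharper form in which hardness of \rcsp\ is commonly stated in that literature: under ETH there is no $f(k)\,|G|^{o(k/\log k)}$-time algorithm for $k$ variables. That version needs the number of variables to stay essentially linear. It comes from Marx's treewidth-based lower bound applied to $3$-regular expanders, not from a clique reduction. Because your construction squares the parameter, it transfers only an $|G|^{o(\sqrt{k})}$-type ETH bound.
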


In short, we consider \rcsp, where every variable appears in exactly three constraints and every constraint involves
exactly two variables.

\subsection{Definition of Distinct Domain Monotone $2$-CSP}
\label{sec:DDMCSP}
The input to this problem is a set $Z = \{z_1$, $z_2$, $\cdots$, $z_{n'}\}$ of variables, a (finite) domain $D_i \subseteq \mathbb{N}$ for the variable $z_i$, for each $i\in [n']$, and a set $C$ of $m'$ constraints, where each constraint is of the following form: $c = [z_i \si f(z_j)]$, where $i,j \in [n']$, $f: D_{j} \rightarrow \mathbb{N}$ is a monotone function, and $\si \in \{\leq, \geq, =\}$. The objective is to check if there is an assignment ${\sf asg}: Z \rightarrow \mathbb{N}$ such that: i) for each $i \in [n']$, ${\sf asg}(z_i) \in D_i$, and ii) each constraint in $C$ is satisfied, i.e., for each $c = [z_i \si f(z_j)] \in C$, ${\sf asg}(z_i) \si f({\sf asg}(z_j))$ is true.

In the above problem definition, for simplicity in its usage, we allow domains to be empty sets, in which case we trivially have a no-instance of the problem. The problem \mtcsp is a special case of {\sc Distinct Domain Monotone $2$-CSP}, where the domains of all the variables are the same. A polynomial time algorithm for {\sc Monotone $2$-CSP} can be obtained via a simple reduction to {\sc $2$-SAT}~\cite{DBLP:journals/dcg/AgrawalKLSZ24}. A very minor modification to this algorithm for {\sc Monotone $2$-CSP} results in a polynomial time algorithm for {\sc Distinct Domain Monotone $2$-CSP}, which is stated in the following proposition.

\begin{proposition}[\cite{DBLP:journals/dcg/AgrawalKLSZ24}]\label{prop:distinct-domian-csp}
{\sc Distinct Domain Monotone $2$-CSP} has a polynomial time algorithm.
\end{proposition}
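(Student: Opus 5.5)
The statement is Proposition~\ref{prop:distinct-domian-csp}, the polynomial-time solvability of \ddmtcsp. The plan is to reduce it to \textsc{2-SAT}, adapting the known algorithm for \mtcsp. First, clean each domain. If some $D_i$ is empty, answer \no. Sort each $D_i=\{d_{i,1}<\dots<d_{i,s_i}\}$. For each variable $z_i$ and each threshold $t\in[s_i]$, introduce a Boolean variable $x_{i,t}$, meaning ``${\sf asg}(z_i)\ge d_{i,t}$''. Add the clauses $x_{i,1}$ and $x_{i,t+1}\Rightarrow x_{i,t}$. A consistent truth assignment then corresponds bijectively to a value in $D_i$: take the largest $t$ with $x_{i,t}$ true. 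The total number of Boolean variables is $\sum_i|D_i|$, which is polynomial in the input.

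Next, encode each constraint $c=[z_i\si f(z_j)]$ with $f$ monotone. The case $\si{=}$ splits into $\le$ and $\ge$, so only those two remain. Assume first that $f$ is nondecreasing and take the constraint $z_i\ge f(z_j)$.
\begin{itemize}
\item For each $u\in[s_j]$, the value $f(d_{j,u})$ lies in a known position relative to $D_i$. Let $t(u)$ be the smallest index with $d_{i,t(u)}\ge f(d_{j,u})$.
\item The constraint becomes: if ${\sf asg}(z_j)=d_{j,u}$, then $x_{i,t(u)}$. By monotonicity it suffices to write, for each $u$, the clause $x_{j,u}\Rightarrow x_{i,t(u)}$. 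If ${\sf asg}(z_j)\ge d_{j,u}$, then $f({\sf asg}(z_j))\ge f(d_{j,u})$, so the clause is implied by the constraint. Conversely, applying the clause at the exact index of ${\sf asg}(z_j)$ gives the constraint back.
\item If no such $t(u)$ exists, add the unit clause $\neg x_{j,u}$.
\end{itemize}
The remaining combinations are handled symmetrically:
\begin{itemize}
\item $z_i\le f(z_j)$ with $f$ nondecreasing becomes clauses of the form $\neg x_{j,u+1}\Rightarrow \neg x_{i,t'}$, with the appropriate threshold $t'$.
\item For nonincreasing $f$, the roles of $x_{j,u}$ and $\neg x_{j,u+1}$ are swapped.
\end{itemize}
Every clause has at most two literals.

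Finally, check equivalence. Any satisfying ${\sf asg}$ yields a satisfying truth assignment by setting $x_{i,t}$ to be the threshold indicators. Conversely, a satisfying truth assignment decodes, through the chain clauses, to values in the domains that satisfy every constraint, by the argument in the previous paragraph. Since \textsc{2-SAT} is polynomial and the formula has polynomial size (at most $\sum_j|D_j|$ clauses per constraint), the proposition follows.

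The main obstacle is stating the threshold translation correctly for all four combinations of $\si$ and monotonicity direction. One must also handle thresholds that fall outside $D_i$ by using unit clauses. The remaining steps are routine.
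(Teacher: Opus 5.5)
Your proof is correct and takes essentially the same route the paper relies on: the paper gives no argument of its own, only citing the reduction of \mtcsp\ to \textsc{2-SAT} from Agrawal et al.\ and noting that a minor modification handles per-variable domains. Your threshold encoding with Boolean variables for ``$z_i \ge d$'' is that reduction, carried out with distinct domains, and your clause translations check out for all combinations of relation and monotonicity direction, including the unit clauses for out-of-range thresholds and the degenerate case $i=j$.
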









	

\section{W[1]-hardness for \bcdaprshort}
\label{sec:woneHardness}
In this section, we prove that the problem becomes \(\mathrm{W}[1]\)-hard when the rectangles are part of the input.
That is, we show that \bcdaprfull\ (\bcdaprshort) is \(\mathrm{W}[1]\)-hard parameterized by the solution size \(k\).
Our proof is via a parameterized reduction from \textsc{3-Regular 2-Constraint Satisfaction Problem} (\rcsp).

\subsection{Reduction from {\sc $3$-Regular $2$-CSP}}
The reduction is from \rcsp instance, where  each constraint consists of exactly $2$ variables and each variable appears in exactly $3$ constraints.

Let $I$ be an instance to $3$-regular $2$-CSP with the set of variables $Z = \{z_1, \dots, z_{k'}\}$ where each variable $z_i$ takes value from a domain $D = \left\{a_1, \dots, a_t\right\}$, a set of constraints $\mathcal{C}$ with $|\mathcal{C}|=m'$, where each constraint contains $2$ variables and each variable appears in exactly $3$ constraints (so $m'=\frac{3k'}{2}$) and for each constraint $C_{ij}\in\mathcal{C}$ involving variables $z_i$ and $z_j$ a set ${\sf sat}_{ij}\subseteq D\times D$ of pairs of assignments to $z_i$ and $z_j$ that satisfy $C_{ij}$.. We construct an instance  $(P, \mathcal{R},k)$ of \bcdaprshort as follows.

\paragraph*{\underline{\large Variable Gadgets:}}

For each variable $z_i$, we construct the variable gadget as follows.

\begin{figure}[htbp]
    \centering
\includegraphics[width=0.5\linewidth]{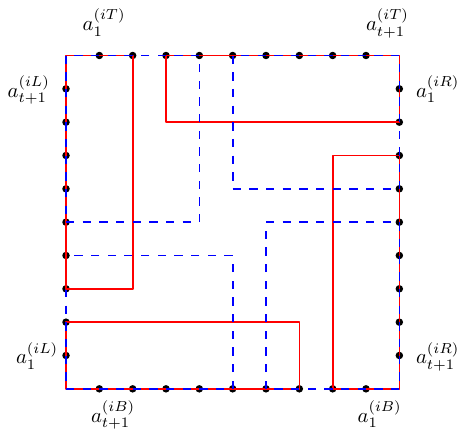}
    \caption{Variable gadget ${\sf SQ}_i$ corresponding to the variable $z_i$}
\label{var_gad}
\end{figure}

\begin{itemize}
    \item The variable gadget corresponding to $z_i$ is a square ${\sf  SQ}_i$ of side length $t+2$ embedded in the plane $\mathbb{R}^2$. 
We denote its four sides by $L$, $R$, $T$, and $B$, representing the \emph{left}, 
\emph{right}, \emph{top}, and \emph{bottom} sides, respectively. We describe the construction for one side of the square ${\sf SQ}_i$, say $L$; the
others are defined analogously. Corresponding to the domain
$D = \left\{a_1, \dots, a_{t}\right\}$,
we introduce a set of points $\left\{a^{(iL)}_1, \dots, a^{(iL)}_{t}\right\}$,
one point for each element of the domain, preserving the natural order of the
domain elements. We add one additional dummy point $a^{(iL)}_{t+1}$ on top of of $a^{(iL)}_{t}$. These points are placed on side $L$ such that any two consecutive points are at unit distance from each other.

We repeat this placement for each of the four sides of the square, thereby creating four copies of every domain element, one on each side. For any two consecutive sides of the square, the placement of these copies follows a
\emph{cyclic order} (see Figure~\ref{var_gad}). We refer to all these points collectively as the \emph{variable points}.

\item Corresponding to each domain value $a_j \in D$, we create four types of
\emph{variable rectangles} denoted by
$BL^{(i)}_j$, $LT^{(i)}_j$, $TR^{(i)}_j$, and $RB^{(i)}_j$. Each rectangle is uniquely
determined by a pair of diagonally opposite points lying on two corresponding sides of
${\sf SQ}_i$. 

For example, the rectangle $BL^{(i)}_j$ is defined by the two points
$a^{(iB)}_{j+1}$ and $a^{(iL)}_j$ lying on the bottom and left sides of ${\sf SQ}_i$,
respectively. These two points uniquely determine the axis-parallel rectangle
$BL^{(i)}_j$. Similarly, we define the other three rectangles as follows:
\begin{itemize}
    \item $LT^{(i)}_j$ is defined by the two points
    $a^{(iL)}_{j+1}$ and $a^{(iT)}_j$;
    \item $TR^{(i)}_j$ is defined by the two points
    $a^{(iT)}_{j+1}$ and $a^{(iR)}_j$;
    \item $RB^{(i)}_j$ is defined by the two points
    $a^{(iR)}_{j+1}$ and $a^{(iB)}_j$.
\end{itemize}

\item Observe that for each variable $z_i$ and each domain value $a_j \in D$, the four
rectangles
\[
\{BL^{(i)}_j,\, LT^{(i)}_j,\, TR^{(i)}_j,\, RB^{(i)}_j\}
\]
are pairwise disjoint and together cover all variable points placed on the boundary of
the square ${\sf SQ}_i$. Selecting these four rectangles corresponding to a domain value
$a_j$ represents assigning the variable $z_i$ the value $a_j$. For example, in
Figure~\ref{var_gad}, selecting the four red rectangles corresponds to the assignment
$z_i = a^{(i)}_2$, while selecting the four dashed blue rectangles corresponds to the
assignment $z_i = a^{(i)}_{4}$.

\item Furthermore, for a fixed variable $z_i$, unless all four rectangles corresponding to
some domain value $a_j \in D$ are selected, all the variable points on the corresponding
square ${\sf SQ}_i$ are not covered. Hence, covering all variable points of ${\sf SQ}_i$
forces the selection of all four rectangles associated with exactly one domain value.
\end{itemize}

We arrange the variable gadgets, namely the squares
${\sf SQ}_1, \dots, {\sf SQ}_{k'}$, next to each other so that their
bottom sides lie on the same horizontal line; see
Figure~\ref{arr_var_gad}.

\begin{figure}[htbp]
    \centering
    \includegraphics[width=0.5\linewidth]{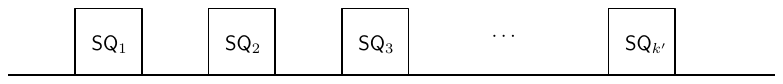}
    \caption{Arrangement of variable gadgets}
    \label{arr_var_gad}
\end{figure}

\paragraph*{\underline{\large Constraint Gadgets:}}  
For each constraint $C_{ij}\in\mathcal{C}$ containing the two variables $z_i$ and $z_j$, we add some additional points and rectangles namely, \emph{constraint points} and \emph{constraint rectangles} respectively.

\begin{itemize}
    \item Consider a variable gadget ${\sf SQ}_i$ corresponding to the variable $z_i$. Suppose $z_i$ appears in the constraints $C_{ij_1}, C_{ij_2}$ and $C_{ij_3}$. Now for each domain value $a_p\in D$, on bottom side of ${\sf SQ}_i$ we add $3$ corresponding constraint points $\beta^{(ij_1)}_p$, $\beta^{(ij_2)}_p$, $\beta^{(ij_3)}_p$ between the variable points $a^{(iB)}_p$ and $a^{(iB)}_{p+1}$.

Observe that, there are exactly $3$ constraint points in ${\sf SQ}_i$ between $a^{(iB)}_j$ and $a^{(iB)}_{j+1}$ 
 that are not covered by any of the four variable rectangles
corresponding to the assignment $z_i = a_p$. For reference, see \Cref{fig:cons_pts_mod}.
\begin{figure}[htbp]
    \centering \includegraphics[width=0.4\linewidth]{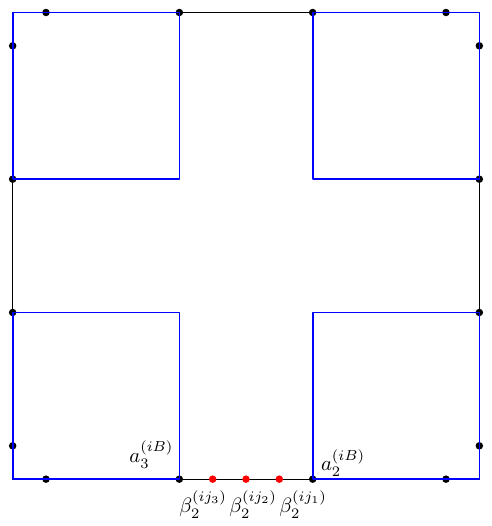}
    \caption{Constraint points corresponding to domain value $a_2$ in the variable gadget ${\sf SQ}_i$}
\label{fig:cons_pts_mod}
\end{figure}

\begin{figure}[htbp]
    \centering
\includegraphics[width=0.5\linewidth]{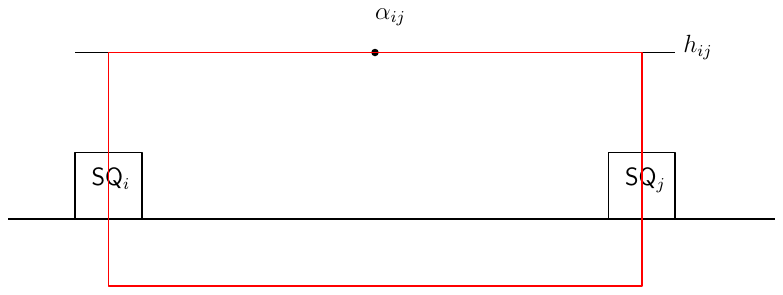}
    \caption{Constraint guard point $\alpha_{ij}$ corresponding to variables $z_i, z_j$}
    \label{cons_rec}
\end{figure}

    \item For every constraint $C_{ij} \in \mathcal{C}$, we add a unique horizontal line segment $h_{ij}$ above the variable gadgets, spanning from ${\sf SQ}_i$ to ${\sf SQ}_j$.  
    We make sure no constraint rectangles except those corresponding to the constraint $C_{ij}$ will have a horizontal side on this line. 
    We place a \emph{constraint guard point} $\alpha_{ij}$ on $h_{ij}$ such that no rectangle covers it vertically, see figure~\ref{cons_rec}.

    \item We now construct \emph{constraint rectangles} to cover the constraint points
together with the corresponding guard points. For every pair of assignments
$(a_p, a_q) \in D \times D$ that satisfies the constraint
$C_{ij}$, we add a constraint rectangle defined as follows. The rectangle is chosen so that it covers the point $\alpha_{ij}$ horizontally,
and the two points $\beta^{(ij)}_p$ and $\beta^{(ji)}_q$ vertically. In this
way, the rectangle simultaneously covers the constraint guard point and the two associated constraint points in the variable gadgets.

For example, in Figure~\ref{fig:final_rec}, if the assignments
$z_i = a_{1}$ and $z_j = a_{2}$ satisfy the constraint $C_{ij}$,
we add the green (dash-dotted) constraint rectangle that covers the red constraint point $\beta^{(ij)}_{1}$ corresponding to $C_{ij}$ in
${\sf SQ}_i$ and the constraint blue point $\beta^{(ji)}_{2}$ corresponding to $C_{ij}$ in ${\sf SQ}_j$ vertically,
and the point $\alpha_{ij}$ horizontally.

\begin{figure}[htbp]
    \centering
    \includegraphics[width=0.8\linewidth]{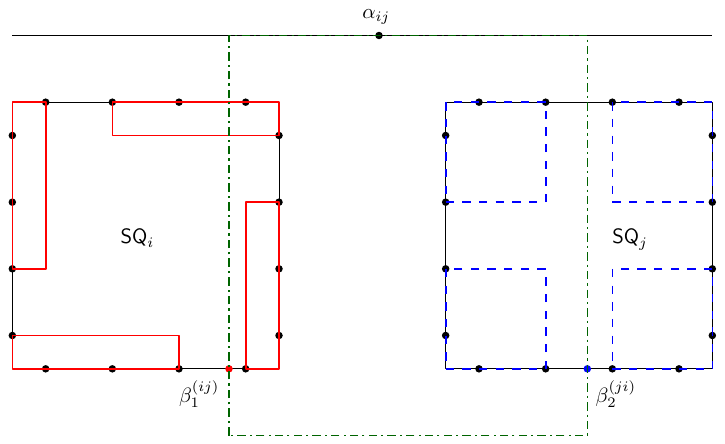}
    \caption{Constraint rectangle covering constraint points and constraint guard point}
    \label{fig:final_rec}
\end{figure}

\end{itemize}

\paragraph*{\underline{\large Parameter:}}

The total number of rectangles allowed in the solution is
\begin{align*}
    k &= 4 \cdot (\text{number of variables}) + (\text{number of constraints})\\
    &= 4 k' + m'\\
    &= 4 k' + \tfrac{3k'}{2} = \tfrac{11k'}{2}
\end{align*}

\begin{theorem}
\bcdaprfull is $\mathrm{W}[1]$-hard parameterized by the number of rectangles.
\end{theorem}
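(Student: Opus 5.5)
The plan is to show that the construction above is a parameterized reduction from \rcsp, which is \(\mathrm{W}[1]\)-hard by Proposition~\ref{prop:distinct-domain-2-csp}. First, the instance \((P,\mathcal{R},k)\) has polynomial size: there are \(\Oh(k' t)\) points and \(\Oh(k' t + m' t^2)\) rectangles. It is built in polynomial time, and its parameter is \(k = 11k'/2\), a function of \(k'\) alone. What remains is to prove that \(I\) is satisfiable if and only if \(P\) can be boundary-covered by at most \(k\) rectangles of \(\mathcal{R}\).

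For the forward direction, fix a satisfying assignment \(\sigma\) with \(\sigma(z_i)=a_{\pi(i)}\). For each variable \(z_i\), select the quadruple \(BL^{(i)}_{\pi(i)}, LT^{(i)}_{\pi(i)}, TR^{(i)}_{\pi(i)}, RB^{(i)}_{\pi(i)}\). For each constraint \(C_{ij}\), select the constraint rectangle associated with the satisfying pair \((a_{\pi(i)},a_{\pi(j)})\). This uses exactly \(4k'+m'\) rectangles. Coverage is checked as follows:
\begin{itemize}
\item the quadruples cover all variable points;
\item each \(\alpha_{ij}\) is covered by its chosen constraint rectangle;
\item every constraint point \(\beta^{(ij)}_p\) with \(p\neq\pi(i)\) is covered by the horizontal side of a selected variable rectangle. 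This holds because the bottom sides of \(BL^{(i)}_{\pi(i)}\) and \(RB^{(i)}_{\pi(i)}\) together span the bottom of \({\sf SQ}_i\) except the open gap between \(a^{(iB)}_{\pi(i)}\) and \(a^{(iB)}_{\pi(i)+1}\);
\item the three points \(\beta^{(ij_\ell)}_{\pi(i)}\) inside that gap are covered vertically by the three constraint rectangles chosen for \(C_{ij_1},C_{ij_2},C_{ij_3}\).
\end{itemize}

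For the reverse direction, take a solution \(S\) with \(|S|\le k\). I would prove two counting lemmas. The first is that covering the variable points of \({\sf SQ}_i\) needs at least four variable rectangles of gadget \(i\), and that four suffice only if they form a quadruple with a common index \(j\). The argument examines the corner-adjacent points, such as \(a^{(iL)}_{t+1}\) and \(a^{(iB)}_1\): each lies on exactly one rectangle of each type, so any cover needs one \(BL\), one \(LT\), one \(TR\) and one \(RB\). The cyclic offset \(j\) versus \(j+1\) then forces the indices to agree, since otherwise some variable point on a side falls strictly between the two consecutive chosen rectangles. Constraint rectangles never cover variable points, because their vertical sides pass through the \(\beta\) points strictly between variable points. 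The second lemma is that each \(\alpha_{ij}\) is covered only by constraint rectangles of \(C_{ij}\). Summing, \(|S|\ge 4k'+m'=k\), so equality holds: \(S\) contains exactly one consistent quadruple per variable, with index \(\pi(i)\), and exactly one constraint rectangle per constraint.

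Now consider the uncovered gap on the bottom of \({\sf SQ}_i\). The points \(\beta^{(ij_\ell)}_{\pi(i)}\) lie there and are not covered by any variable rectangle. The one constraint rectangle chosen for \(C_{ij}\) is the only candidate to cover \(\beta^{(ij)}_{\pi(i)}\). For this I need the geometric fact that a constraint rectangle of \(C_{ij}\) for the pair \((a_p,a_q)\) passes through \(\beta^{(ij)}_{p'}\) only when \(p'=p\), and through no other constraint's \(\beta\) points. The same holds on the \(z_j\) side, so the chosen rectangle corresponds to \((a_{\pi(i)},a_{\pi(j)})\in{\sf sat}_{ij}\). Hence \(\sigma(z_i)=a_{\pi(i)}\) satisfies every constraint.

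The main obstacle is making this geometry rigorous. One must fix explicit coordinates for the \(\beta\) points, the squares, and the lines \(h_{ij}\) (each \(h_{ij}\) at a distinct height, with \(\alpha_{ij}\) at a distinct \(x\)-coordinate avoiding every vertical side). This ensures there are no accidental incidences: a constraint rectangle's top side must not hit other \(\alpha\)'s, and its vertical sides must not hit variable points or unrelated \(\beta\)'s. I would first attempt to handle this with a general-position perturbation: give each \(\beta^{(ij)}_p\) a unique \(x\)-coordinate, and place each \(h_{ij}\) at a distinct height above all squares. Then I would verify the incidence claims case by case, with the corner-point argument for the quadruple lemma being the delicate part.
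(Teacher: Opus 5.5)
Your proposal is correct and follows the paper's route: the same reduction from \rcsp, the same completeness construction, and the same tight count ($4k'$ variable rectangles plus $m'$ constraint rectangles) forcing one consistent quadruple per gadget and one constraint rectangle per constraint, from which the satisfying assignment is read off. Your corner-point and cyclic-inequality argument for the quadruple lemma (each corner-adjacent point lies only on rectangles of a single type) and your plan to fix explicit coordinates fill in steps the paper only asserts. When you fix those coordinates, also place the bottom sides of the constraint rectangles strictly below the common bottom line of the squares; otherwise those sides would cover variable points and the counting argument would no longer apply.
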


\subsection{Correctness of the Reduction}

\begin{lemma}
The \rcsp instance $I=(Z,D,\mathcal{C})$ is a \yes instance if and only
if \bcdaprfull(\bcdaprshort) is a \yes instance.
\end{lemma}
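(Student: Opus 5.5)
We prove both directions separately, reading the geometry directly off the construction.

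\emph{Forward direction.} Let $\sigma\colon Z\to D$ be a satisfying assignment. For each variable $z_i$ with $\sigma(z_i)=a_p$, we select the four variable rectangles $BL^{(i)}_p, LT^{(i)}_p, TR^{(i)}_p, RB^{(i)}_p$. By the observation in the construction, these cover all variable points of ${\sf SQ}_i$. On the bottom side of ${\sf SQ}_i$, the variable rectangles for index $p$ leave uncovered exactly the three constraint points $\beta^{(ij_1)}_p,\beta^{(ij_2)}_p,\beta^{(ij_3)}_p$. We must also check that the constraint points $\beta^{(ij_\ell)}_{p'}$ with $p'\neq p$ lie on the boundary of one of the selected variable rectangles. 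This holds because those points sit on the bottom segment strictly inside the span of $BL^{(i)}_p$ or $RB^{(i)}_p$ (between $a^{(iB)}_{p'}$ and $a^{(iB)}_{p'+1}$); we verify it from the coordinates. For each constraint $C_{ij}$, the pair $(\sigma(z_i),\sigma(z_j))=(a_p,a_q)$ lies in ${\sf sat}_{ij}$, so the constraint rectangle for $(a_p,a_q)$ exists. We select it: it covers $\alpha_{ij}$ and the two remaining points $\beta^{(ij)}_p$ and $\beta^{(ji)}_q$. The total is $4k'+m'=k$ rectangles covering all of $P$.

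\emph{Backward direction.} Let $\mathcal{S}\subseteq\mathcal{R}$ with $|\mathcal{S}|\le k$ cover $P$. The plan is a counting argument followed by decoding.

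First, each guard point $\alpha_{ij}$ lies only on $h_{ij}$, and only constraint rectangles of $C_{ij}$ have a side there. Hence $\mathcal{S}$ contains at least one constraint rectangle per constraint, so at least $m'$ constraint rectangles in total.

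Second, constraint rectangles cover the bottom side of a square only at the $\beta$ points, with vertical sides. So all variable points must be covered by variable rectangles. I must argue that the variable points of one ${\sf SQ}_i$ need at least four variable rectangles of that square. Any variable rectangle of ${\sf SQ}_i$ touches only two consecutive sides of ${\sf SQ}_i$, and a variable rectangle of a different square touches none of them, since the squares are disjoint.

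Combining the two counts with the budget, equality holds everywhere: exactly four variable rectangles per square and exactly one constraint rectangle per constraint. The construction claims that four rectangles covering all variable points of ${\sf SQ}_i$ must share a common index $j$. This gives a well-defined value $\pi(i)$.

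Now consider the constraint points $\beta^{(ij)}_{\pi(i)}$, which the variable rectangles leave uncovered. They must be covered by the unique selected constraint rectangle of $C_{ij}$; constraint rectangles for other constraints pass vertically through different $\beta$'s, and I assume generic $x$-coordinates to ensure this. That rectangle covers exactly $\beta^{(ij)}_p$ and $\beta^{(ji)}_q$ for its defining pair $(a_p,a_q)$. Hence $p=\pi(i)$, $q=\pi(j)$, and $(a_{\pi(i)},a_{\pi(j)})\in{\sf sat}_{ij}$, so $\pi$ satisfies $I$.

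\emph{Main obstacle.} The hardest step is rigorously justifying the "exactly one quadruple" claim and the non-interference claims. For the former, I would argue via the points adjacent to the square corners and propagate around the cycle. The covering rectangle of $a^{(iL)}_j$ on side $L$ determines the index of the next rectangle on side $T$ through the shared point $a^{(iL)}_{j+1}$/$a^{(iT)}_j$ pattern. With only four rectangles, each must cover a contiguous block of one side pair, which forces a common index. For the latter, I would fix explicit coordinates (distinct $x$-offsets for the $\beta$'s and distinct heights for the lines $h_{ij}$) so that no unintended incidences occur.
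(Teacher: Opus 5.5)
Your proposal is correct and follows the paper's proof. In the forward direction you select the same four variable rectangles per square, plus the constraint rectangle of the satisfying pair for each constraint. In the backward direction you use the same tightness count: at least four variable rectangles per square, at least one constraint rectangle per constraint via the guard points $\alpha_{ij}$, and a total budget of $4k'+m'$. You then decode $\pi(i)$ from the uncovered points $\beta^{(ij)}_{\pi(i)}$, exactly as the paper does.

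Your version is more careful than the paper's in two places the paper only asserts. You check that the points $\beta^{(ij)}_{p'}$ with $p'\neq p$ lie on $BL^{(i)}_p$ or $RB^{(i)}_p$. You also sketch the extreme-point/cyclic argument for a common index. Note, though, that ``touches two consecutive sides'' alone only gives a lower bound of two rectangles. It is the index-$1$ and dummy points on each side, each coverable by only one rectangle type, that force four.
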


\begin{proof}
\textbf{\underline{\large Completeness:}}
Let a \rcsp admits a satisfying assignment
$\sigma = (\sigma_1, \dots, \sigma_{k'})$, where $\sigma_i \in D$. For each $i\in [k']$, let $\sigma_i = a_{p_i}$.
We select the four variable rectangles
$BL^{(i)}_{p_i}$, $LT^{(i)}_{p_i}$, $TR^{(i)}_{p_i}$, and $RB^{(i)}_{p_i}$
corresponding to the value $a_{p_i}\in D$ in the variable gadget ${\sf SQ}_i$.

By the placement of the variable points on ${\sf SQ}_i$, these four rectangles
together cover all variable points in ${\sf SQ}_i$. Moreover, by the placement
of the constraint points, there are exactly $3$ constraint points lying between
$a^{(iB)}_{p_i}$ and $a^{(iB)}_{p_i+1}$ on the bottom side of ${\sf SQ}_i$, that are not covered by the selected variable
rectangles.

Now consider any constraint $C_{ij} \in \mathcal{C}$ involving variables
$z_i$ and $z_j$. Since $\sigma$ is a satisfying assignment, the pair
$(\sigma_i, \sigma_j) = (a_{p_i}, a_{p_j}) \in D \times D$ satisfies
$C_{ij}$. By the construction of the constraint gadget, there exists a
constraint rectangle that covers the constraint guard point $\alpha_{ij}$ horizontally
and simultaneously covers the two constraint points
$\beta^{(ij)}_{p_i}$ and $\beta^{(ji)}_{p_j}$ vertically. We select this rectangle in the
solution.

Thus, for each variable $z_i$ we select exactly $4$ variable rectangles, and
for each constraint we select exactly $1$ constraint rectangle. In total, we
select
\[
4k' + m' \;=\; \frac{11k'}{2} \;=\; k
\]
rectangles. Clearly, these rectangles cover all variable points and constraint points on every variable gadget
${\sf SQ}_i$ as well as all constraint guard points, and hence cover all points
in the constructed instance.

\noindent
\textbf{\underline{\large Soundness:}} 
Let $\widehat{R}_1, \dots, \widehat{R}_{k}$ be a solution to \bcdaprshort instance.
We first show that any solution of size $k$ must satisfy certain
properties.

\begin{itemize}

    \item Consider any variable gadget ${\sf SQ}_i$. We need to select at least $4$ variable rectangles in ${\sf SQ}_i$ since those rectangles are the only ones that cover the variable points. Moreover, to cover all the variable points in ${\sf SQ}_i$ using $4$ variable rectangles one has to choose one rectangle of each type, namely $\{BL^{(i)}_{*}, LT^{(i)}_{*}, TR^{(i)}_{*}, RB^{(i)}_{*}\}$, where $*\in [t]$. Furthermore, the value of $*$ must be same for all $4$ selected variable rectangles, that is $\{BL^{(i)}_{j}, LT^{(i)}_{j}, TR^{(i)}_{j}, RB^{(i)}_{j}\}$, for same $j\in [t]$. As argued above, covering all variable points in a
single gadget requires at least four variable rectangles. Therefore, any
$k$ sized solution must select at least $4k'$ variable rectangles in total. This implies that at  $m' = 3k'/2$ constraint rectangles are in the solution.

 \item Note that every constraint guard point $\alpha_{ij}$ can only be covered by some rectangle whose top horizontal side lies on the unique line $h_{ij}$. Moreover, by construction, since the lines $h_{ij}$ are different for each of the constrain guard points, this implies that, any solution of size $k$ must pick at least $m' = 3k'/2$ constraint rectangles. Thus covering all constraint guard points
requires selecting at least $m'$ constraint rectangles. Hence any feasible
solution must contain at least $4k + m'$ rectangles overall. Since the budget
of the solution is exactly $k = 4k' + m'$, it follows that any $k$ size solution must contain exactly $4k' + m'$ rectangles.
\end{itemize}

Now let $\widehat{R}_1, \dots, \widehat{R}_{4k'}$ denote the
collections of variable rectangles selected in the solution, where we can group them to set of $4$ rectangles, namely $\widetilde{\mathcal{R}}_1, \dots, \widetilde{\mathcal{R}}_{k'}$ where for each
$i \in [k]$, 
\[
\widetilde{\mathcal{R}}_i^
= \left\{BL^{(i)}_{p},\, LT^{(i)}_{p},\, TR^{(i)}_{p},\, RB^{(i)}_{p}\right\}
\]
for some $p \in [t]$. In other words, for every variable gadget ${\sf SQ}_i$,
the solution selects exactly four variable rectangles corresponding to a single
domain value $a_p \in D$. Now we construct an assignment for {\sc $3$-regular $2$-CSP} as follows: for every variable $z_i$, let $\pi(i) \in [t]$ be the index of the variable rectangles that the solution picked to cover the variable points in ${\sf SQ}_i$. We assign $z_i = a_{\pi(i)}$. Now we show that $(a_{\pi(1)}, \dots, a_{\pi(k')})$ is a valid assignment, that is, every constraint is satisfied. Consider any constraint $C_{ij}$ involving the two variables $z_i$ and $z_j$. By the way of assignment, we know that the variable rectangles $\left\{BL^{(i)}_{\pi(i)},\, LT^{(i)}_{\pi(i)},\, TR^{(i)}_{\pi(i)},\, RB^{(i)}_{\pi(i)}\right\}$ and $\left\{BL^{(j)}_{\pi(j)},\, LT^{(j)}_{\pi(j)},\, TR^{(j)}_{\pi(j)},\, RB^{(j)}_{\pi(j)}\right\}$ are in the solution. So the uncovered constraint points in ${\sf SQ}_i$ lie in between $a_{\pi(i)}^{iB}$ and $a_{\pi(i)+1}^{iB}$ and those in ${\sf SQ}_j$ lie in between $a_{\pi(j)}^{jB}$ and $a_{\pi(j)+1}^{jB}$. Each of these points must be covered vertically by constraint rectangles. By construction, there is exactly one constraint rectangle corresponding to $C_{ij}$ that passes through $\beta^{(ij)}_{\pi_i}$ and $\beta^{(ji)}_{\pi_j}$ vertically. Thus this constraint rectangle must be picked in the solution to cover these two points. 
Moreover, this rectangle also covers the constraint guard point $\alpha_{ij}$ horizontally.  
Thus by construction of the constraint rectangles we get that the assignment $z_i=a_{\pi(i)}$ and $z_j=a_{\pi(j)}$ satisfies the constraint $C_{ij}$.

Since this argument holds for every constraint in $\mathcal{C}$, the
assignment $(a_{\pi(1)}, \dots, a_{\pi(k')})$satisfies all constraints of the \rcsp instance.
Therefore, the original \rcsp instance is satisfiable.
\end{proof}

\section{NP-hardness for {\sc L-Shape Cover} and \prbcshort}
\label{sec:npHardness}
In this section we prove that \plscfull(\plscshort) and \prbcshort\ are NP-complete. Our reductions proceed in two steps. First, we establish NP-hardness of \plscshort\ via a reduction from \pcbvcfull~\cite{kuo2007efficient}. Next, we prove NP-completeness of \prbcshort\ by reducing from a restricted variant of \prbcshort. Recall the definition of axis-aligned L-shape in \Cref{def:lshape}.

\subsection{Reduction to \plscshort}
\label{sec:reduction}
In this subsection we give reduction from \pcbvcshort to \plscshort. 

\paragraph*{\underline{\Large{Construction.}}} Fix an instance $(G=(A, B,E),k_{A},k_{B})$ with
$A=\{a_1,\dots,a_m\}$ and $B=\{b_1,\dots,b_n\}$.
We construct $(P,k)$ where $k \coloneqq k_{A}+k_{B}$ and set $M \coloneqq k+1$. Here $P$ is the set of points. 

\medskip
\noindent\textbf{Edge points (central grid).}
For each edge $a_ib_j\in E$, let the point $p_{ij}\coloneqq(i,j)\in\mathbb{Z}^2$.
Let $P_{\mathrm{edge}}$ denote the set of all such points.

\medskip
\noindent\textbf{Top-left column guards (forcing $k_{B}$ vertical arms).}
Let
$X^{\mathrm{TL}} \coloneqq \{-1,-2,\dots,-k_{B}\}.$
For each $x\in X^{\mathrm{TL}}$, let
\[
Q_x \coloneqq \{(x,n+1),(x,n+2),\dots,(x,n+M)\}.
\]
be a set of $M$ points.
Define $P_{\mathrm{TL}} \coloneqq \bigcup_{x\in X^{\mathrm{TL}}} Q_x$.

\medskip
\noindent\textbf{Bottom-right row guards (forcing $k_{A}$ horizontal arms).}
Let $Y^{\mathrm{BR}} \coloneqq \{-1,-2,\dots,-k_{A}\}.$ 
For each $y\in Y^{\mathrm{BR}}$, let
\[
Q_y \coloneqq \{(m+1,y),(m+2,y),\dots,(m+M,y)\}.
\]
be a set of $M$ points.
Define $P_{\mathrm{BR}} \coloneqq \bigcup_{y\in Y^{\mathrm{BR}}} Q_y$.


\medskip
\noindent\textbf{Final point set.}
Define $P \coloneqq P_{\mathrm{edge}}\cup P_{\mathrm{TL}}\cup P_{\mathrm{BR}}$.

\begin{remark}[guards only on designated lines]
We place guard points only on the $k_{B}$ designated columns in $X^{\mathrm{TL}}$ and the $k_{A}$ designated rows in $Y^{\mathrm{BR}}$.
\end{remark}

A schematic geometric overview of our reduction can be found in Figure~\ref{fig:lsc-overview}. 

\begin{figure}[t]
\centering
\begin{tikzpicture}[scale=0.75, every node/.style={font=\small}]
  \def\m{6}
  \def\n{4}
  \def\kA{2}
  \def\kB{2}

  \pgfmathtruncatemacro{\mone}{\m-1}
  \pgfmathtruncatemacro{\none}{\n-1}

  \draw[thick] (0,0) rectangle (\m,\n);
  \node[above right] at (0,0) {$[m]\times[n]$};

  \foreach \x in {1,...,\mone} { \draw[gray!35] (\x,0) -- (\x,\n); }
  \foreach \y in {1,...,\none} { \draw[gray!35] (0,\y) -- (\m,\y); }

  \fill (1,1) circle (1.5pt);
  \fill (4,3) circle (1.5pt);
  \fill (5,2) circle (1.5pt);
  \node[right] at (6,2) {$P_{\mathrm{edge}}$};

  \draw[thick, dashed] (-\kB-1,\n+1) rectangle (-0.4,\n+3.2);
  \node[above] at (-1.2,\n+3.2) {$P_{\mathrm{TL}}$};
  \foreach \x in {-1,-2} {
    \draw[gray!60] (\x,\n+1.05) -- (\x,\n+3.15);
    \foreach \t in {0,0.4,0.8,1.2} { \fill (\x,\n+1.15+\t) circle (1.3pt); }
  }

  \draw[thick, dashed] (\m+1,-\kA-1) rectangle (\m+3.2,-0.4);
  \node[below] at (\m+2.0,-\kA-1) {$P_{\mathrm{BR}}$};
  \foreach \y in {-1,-2} {
    \draw[gray!60] (\m+1.05,\y) -- (\m+3.15,\y);
    \foreach \t in {0,0.4,0.8,1.2} { \fill (\m+1.15+\t,\y) circle (1.3pt); }
  }

  \draw[very thick] (-1,\n+1.05) -- (-1,\n+3.10);
  \draw[very thick] (-1,2) -- (\m,2);
  \fill (-1,2) circle (2pt);

  \draw[very thick] (3,-1) -- (\m+3.10,-1);
  \draw[very thick] (3,-1) -- (3,\n);
  \fill (3,-1) circle (2pt);

\end{tikzpicture}
\caption{Schematic view of the construction: $P_{\mathrm{edge}}$ encodes edges; $P_{\mathrm{TL}}$ forces $k_{B}$ vertical arms on the designated top-left columns; $P_{\mathrm{BR}}$ forces $k_{A}$ horizontal arms on the designated bottom-right rows.}
\label{fig:lsc-overview}
\end{figure}
\subsection{Forcing lemmas}

\begin{lemma}[Top-left forces vertical arms on $X^{\mathrm{TL}}$]
\label{lem:TL-forces-vertical}
Let $\mathcal{L}$ be a family of at most $k$ L-shapes covering $P$.
Then for each $x\in X^{\mathrm{TL}}$, there exists an L-shape in $\mathcal{L}$ whose vertical arm lies on the line $\{(x,y):y\in\mathbb{R}\}$.
Moreover, the $k_{B}$ columns in $X^{\mathrm{TL}}$ require $k_{B}$ \emph{distinct} L-shapes.
\end{lemma}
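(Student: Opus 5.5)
The plan is a pigeonhole argument on the column $Q_x$, followed by a check that no single L-shape can serve two designated columns.

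\textbf{Step 1: an L-shape meets a vertical line it does not lie on in at most one point.} Fix $x\in X^{\mathrm{TL}}$ and the vertical line $\ell_x=\{(x,y):y\in\mathbb{R}\}$. Take an L-shape $L=V\cup H$ whose vertical arm $V$ does not lie on $\ell_x$. Then $V\cap\ell_x=\emptyset$. The horizontal arm $H$ is contained in a single horizontal line, so $H\cap\ell_x$ is at most one point. Hence $|L\cap Q_x|\le 1$.

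\textbf{Step 2: some vertical arm lies on $\ell_x$.} Suppose no L-shape in $\mathcal{L}$ has its vertical arm on $\ell_x$. By Step~1, the shapes of $\mathcal{L}$ together cover at most $|\mathcal{L}|\le k$ points of $Q_x$. But $|Q_x|=M=k+1>k$, so some point of $Q_x$ is uncovered, a contradiction. This gives the first claim.

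\textbf{Step 3: distinctness.} For each $x\in X^{\mathrm{TL}}$, choose $L_x\in\mathcal{L}$ whose vertical arm lies on $\ell_x$. An L-shape has exactly one vertical arm, and that arm lies on exactly one vertical line. If $L_x=L_{x'}$, its vertical arm would lie on both $\ell_x$ and $\ell_{x'}$, which forces $x=x'$. So the map $x\mapsto L_x$ is injective, and the $k_B$ columns need $k_B$ distinct L-shapes.

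\textbf{Degenerate cases.} The only point needing care is Step~1 when an arm is degenerate, for example a zero-length arm reducing to the corner point. Such an arm still lies within a single horizontal or vertical line, so the intersection bound still holds, and the argument is unaffected.
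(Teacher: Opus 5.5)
Your proposal is correct and follows the paper's proof. Both use the same pigeonhole on the $M=k+1$ points of $Q_x$: a shape without its vertical arm on $\ell_x$ covers at most one of them. Both then get distinctness from the fact that each vertical arm lies on a single vertical line, and your injection $x\mapsto L_x$ states this a bit more precisely than the paper's wording.
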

\begin{proof}
Fix $x\in X^{\mathrm{TL}}$. The $M$ guard points on that column have distinct $y$-coordinates.
A horizontal segment lies on a single $y$ and hence can cover at most one of them.
If no L-shape uses a vertical arm on $x$, then these $M=k+1$ points require more than $k$ distinct L-shapes, contradicting $|\mathcal{L}|\le k$.
Thus some L-shape must place its vertical arm on $x$.

Finally, one L-shape has its vertical arm on a single $x$-coordinate, so it cannot cover all the guard points on two different columns.
Hence the $k_{B}$ columns require $k_{B}$ distinct L-shapes.

\end{proof}

\begin{lemma}[Bottom-right forces horizontal arms on $Y^{\mathrm{BR}}$]
\label{lem:BR-forces-horizontal}
Let $\mathcal{L}$ be a family of at most $k$ L-shapes covering $P$.
Then for each $y\in Y^{\mathrm{BR}}$, there exists an L-shape in $\mathcal{L}$ whose horizontal arm lies on the line $\{(x,y):x\in\mathbb{R}\}$.
Moreover, the $k_{A}$ rows in $Y^{\mathrm{BR}}$ require $k_{A}$ \emph{distinct} L-shapes.
\end{lemma}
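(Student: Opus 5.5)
This is the mirror image of Lemma~\ref{lem:TL-forces-vertical}, with the roles of the two coordinates exchanged. The plan is to repeat that argument with horizontal and vertical swapped: a pigeonhole count for the existence part, then a distinctness argument.

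For the existence part, fix $y\in Y^{\mathrm{BR}}$ and consider the $M=k+1$ guard points of $Q_y$. These are $(m+1,y),\dots,(m+M,y)$, so they all share the $y$-coordinate $y$ and have pairwise distinct $x$-coordinates. Any vertical segment lies on a single line $\{x=c\}$, so it contains at most one point of $Q_y$.

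Suppose no L-shape in $\mathcal{L}$ has its horizontal arm on the line $\{(x,y):x\in\mathbb{R}\}$. Consider any $L=V\cup H\in\mathcal{L}$. Its horizontal arm $H$ lies on some line $y'\neq y$ and so meets no point of $Q_y$. Its vertical arm $V$ meets at most one point of $Q_y$. Hence each L-shape covers at most one point of $Q_y$. Covering all $k+1$ of them would then require more than $k$ L-shapes, which contradicts $|\mathcal{L}|\le k$.

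For distinctness, note that each L-shape has exactly one horizontal arm, and that arm lies on a single horizontal line. Therefore no L-shape can serve as the horizontal-arm witness for two different rows $y\neq y'$ of $Y^{\mathrm{BR}}$. The witnesses found in the existence part are thus pairwise distinct, which gives $k_A$ distinct L-shapes.

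The only point needing care is the definition of covering. A point lying on $H$ means its $y$-coordinate equals that of $H$, so an L-shape whose horizontal arm is on another line can only contribute through its vertical arm. There is no genuine obstacle here beyond this observation.
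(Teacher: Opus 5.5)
Your proof is correct and follows the paper's argument: the pigeonhole count on the $M=k+1$ guard points of $Q_y$ gives existence, since without an arm on that row each L-shape covers at most one of them. Distinctness follows because each L-shape has a single horizontal arm on a single line. Your phrasing of the distinctness step, that the witnesses for different rows cannot coincide, is slightly cleaner than the paper's.
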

\begin{proof}
Fix $y\in Y^{\mathrm{BR}}$. The $M$ guard points on that row have distinct $x$-coordinates.
A vertical segment lies on a single $x$ and hence can cover at most one of them.
If no L-shape uses a horizontal arm on $y$, then these $M=k+1$ points require more than $k$ distinct L-shapes, contradicting $|\mathcal{L}|\le k$.
Thus some L-shape must place its horizontal arm on $y$.

As one L-shape has its horizontal arm on a single $y$-coordinate, it cannot cover all the guard points on two different rows.
Hence the $k_{A}$ rows require $k_{A}$ distinct L-shapes.
\end{proof}

\begin{corollary}[Counting available central-grid arms]
\label{cor:free-arms}
Let $\mathcal{L}$ be a cover of $P$ by at most $k$ L-shapes.
Then:
\begin{itemize}
\item at most $k-k_{B} = k_{A}$ shapes in $\mathcal{L}$ can have their vertical arm on a line $x=i$ with $i\in[m]$, and
\item at most $k-k_{A} = k_{B}$ shapes in $\mathcal{L}$ can have their horizontal arm on a line $y=j$ with $j\in[n]$.
\end{itemize}
\end{corollary}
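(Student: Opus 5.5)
The corollary follows from Lemma~\ref{lem:TL-forces-vertical} and Lemma~\ref{lem:BR-forces-horizontal} by counting. I would prove the first item; the second is symmetric, with the roles of rows and columns and of $k_A$ and $k_B$ exchanged. The plan is to exhibit $k_B$ shapes in $\mathcal{L}$ that are forced to have their vertical arms on the guard columns in $X^{\mathrm{TL}}$. These shapes then cannot have a vertical arm on any central column $x=i$ with $i\in[m]$.

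\textbf{Step 1.} Apply Lemma~\ref{lem:TL-forces-vertical}. For each $x\in X^{\mathrm{TL}}=\{-1,\dots,-k_B\}$, this gives a shape $L_x\in\mathcal{L}$ whose vertical arm lies on the line $\{(x,y):y\in\mathbb{R}\}$.

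\textbf{Step 2.} Check that the shapes $L_x$ are pairwise distinct. An L-shape has exactly one vertical arm, and that arm lies on a single vertical line. So $L_x=L_{x'}$ would force $x=x'$. This is the ``moreover'' part of the lemma, and it yields a set $\mathcal{L}_B=\{L_x : x\in X^{\mathrm{TL}}\}$ of exactly $k_B$ shapes.

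\textbf{Step 3.} Show that no shape in $\mathcal{L}_B$ has its vertical arm on a line $x=i$ with $i\in[m]$. Every $x\in X^{\mathrm{TL}}$ is negative while every $i\in[m]$ is positive, and the unique vertical arm of $L_x$ lies on $x$. Hence every shape whose vertical arm lies on a central column belongs to $\mathcal{L}\setminus\mathcal{L}_B$, and
\[
|\mathcal{L}\setminus\mathcal{L}_B|\le k-k_B=k_A .
\]
This proves the first item.

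\textbf{Second item.} Use Lemma~\ref{lem:BR-forces-horizontal} in the same way. It gives $k_A$ distinct shapes with horizontal arms on the negative rows in $Y^{\mathrm{BR}}$. These cannot also have a horizontal arm on a row $y=j$ with $j\in[n]$, so at most $k-k_A=k_B$ shapes can have their horizontal arm on a central row.

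\textbf{Main subtlety.} The point requiring care is that an L-shape has a \emph{unique} vertical arm and a \emph{unique} horizontal arm. This is guaranteed by Definition~\ref{def:lshape}, since $L=V\cup H$ with $V$ vertical and $H$ horizontal meeting in the corner $c$. I would also note the degenerate case where an arm is a single point. Even then the arm has a single well-defined coordinate, namely the $x$-coordinate of $V$ or the $y$-coordinate of $H$, so the argument in Step~3 still applies.
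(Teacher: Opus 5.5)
Your proposal is correct and follows the paper's argument: Lemma~\ref{lem:TL-forces-vertical} gives $k_B$ distinct shapes whose vertical arms lie on the negative guard columns, so at most $k-k_B=k_A$ of the remaining shapes can have a vertical arm on a column $x=i$ with $i\in[m]$. The second item follows symmetrically from Lemma~\ref{lem:BR-forces-horizontal}.
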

\begin{proof}
By Lemma~\ref{lem:TL-forces-vertical}, there are $k_{B}$ distinct shapes whose vertical arms are committed to the guard columns $x\in X^{\mathrm{TL}}\subseteq \mathbb{Z}_{<0}$.
Thus among at most $k$ shapes, at most $k-k_{B}=k_{A}$ shapes can have vertical arms on any other $x$-coordinate, in particular on $x=i$ for $i\in[m]$.
The horizontal statement follows symmetrically from Lemma~\ref{lem:BR-forces-horizontal}.
\end{proof}





\begin{remark}[overlaps are allowed]
An L-shape may simultaneously hit one top-left guard column and one bottom-right guard row; we do not forbid this.
\end{remark}
\subsection{Correctness}

\begin{lemma}[Completeness]
\label{lem:completeness}
If the \pcbvcshort\ instance is a \yes-instance, then the constructed \plscshort\ instance $(P,k)$ is a \yes-instance.
\end{lemma}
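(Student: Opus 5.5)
Assume a vertex cover $S=S_A\cup S_B$ of $G$ with $|S_A|\le k_A$ and $|S_B|\le k_B$. We may take $|S_A|=k_A$ and $|S_B|=k_B$ exactly, padding with arbitrary extra vertices; if $k_A>m$ or $k_B>n$, the surplus L-shapes simply get a degenerate central arm. Write $S_A=\{a_{i_1},\dots,a_{i_{k_A}}\}$ and $S_B=\{b_{j_1},\dots,b_{j_{k_B}}\}$. The cover will have exactly $k=k_A+k_B$ L-shapes, one per vertex of $S$: each L-shape has one arm covering an entire guard line and the other arm covering the edge points of its vertex. This matches the schematic in Figure~\ref{fig:lsc-overview}.

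\textbf{Shapes for $S_B$.} Pair the $r$-th guard column $x=-r\in X^{\mathrm{TL}}$ with $b_{j_r}$ for $r\in[k_B]$. The L-shape $L^B_r$ has its corner at $(-r,j_r)$. Its vertical arm is the segment from $(-r,j_r)$ up to $(-r,n+M)$; since $j_r\le n<n+1$, this contains all of $Q_{-r}$. Its horizontal arm runs from $(-r,j_r)$ right to $(m,j_r)$, so it contains every point $(i,j_r)$ with $i\in[m]$, and in particular every edge point $p_{ij_r}$ of an edge incident to $b_{j_r}$. The corner is a shared endpoint of both arms, so this is a valid NE-type L-shape.

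\textbf{Shapes for $S_A$.} Symmetrically, pair the $s$-th guard row $y=-s\in Y^{\mathrm{BR}}$ with $a_{i_s}$ for $s\in[k_A]$. The L-shape $L^A_s$ has its corner at $(i_s,-s)$. Its horizontal arm runs right to $(m+M,-s)$ and covers $Q_{-s}$. Its vertical arm runs up to $(i_s,n)$ and covers every $p_{i_sj}$.

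\textbf{Verification.} All guard points lie on some guard line and are therefore covered. Every edge $a_ib_j\in E$ has an endpoint in $S$, so $p_{ij}$ lies on the horizontal arm of some $L^B_r$ (if $b_j\in S_B$) or on the vertical arm of some $L^A_s$ (if $a_i\in S_A$). The total count is $k_B+k_A=k$.

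No real obstacle is expected. The only care needed is to check that the coordinate ranges line up: guard columns sit at negative $x$ above $y=n$, and guard rows sit at negative $y$ to the right of $x=m$. This ensures each arm can be extended from the grid to the full guard line while keeping a single common corner.
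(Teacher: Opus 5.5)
Your proof is correct and follows the paper's construction: one NE-oriented L-shape per guard line, with one arm extended to cover all $M$ guards on that line and the other arm placed on the grid line of a cover vertex. The only difference is how surplus guard lines are filled: you pad the cover (or use a degenerate arm), whereas the paper reuses vertices of $A'$ or $B'$. Your padding also handles $A'=\emptyset$ or $B'=\emptyset$, which the paper's ``choose any $a_i\in A'$'' phrasing glosses over.
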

\begin{proof}
Let $A'\subseteq A$, $B'\subseteq B$ with $|A'|\le k_{A}$, $|B'|\le k_{B}$ be a constrained vertex cover.
We build a family of $k=k_{A}+k_{B}$ L-shapes covering $P$.

\smallskip
\noindent\emph{Bottom-right shapes (encode choices in $A$).}
For each $y\in Y^{\mathrm{BR}}$ (there are $k_{A}$ such rows), choose any $a_i\in A'$ (with repetition allowed if $|A'|<k_{A}$),
and place an L-shape with corner at $(i,y)$ whose horizontal arm extends right to $x=m+M$ (covering $(m+1,y),\dots,(m+M,y)$)
and whose vertical arm extends up through the central grid (lying on $x=i$).

\smallskip
\noindent\emph{Top-left shapes (encode choices in $B$).}
For each $x\in X^{\mathrm{TL}}$ (there are $k_{B}$ such columns), choose any $b_j\in B'$ (with repetition allowed if $|B'|<k_{B}$),
and place an L-shape with corner at $(x,j)$ whose vertical arm extends up to $y=n+M$ (covering $(x,n+1),\dots,(x,n+M)$)
and whose horizontal arm extends right through the central grid (lying on $y=j$).

All guard points are covered by construction. Consider any edge-point $(i,j)\in P_{\mathrm{edge}}$ (i.e., $a_ib_j\in E$).
Since $A'\cup B'$ is a vertex cover, either $a_i\in A'$ or $b_j\in B'$.
In the first case, some bottom-right L-shape has vertical arm on $x=i$ and covers $(i,j)$; in the second case, some top-left L-shape has horizontal arm on $y=j$ and covers $(i,j)$.
Hence all points in $P$ are covered using $k$ L-shapes.
\end{proof}

\begin{lemma}[Soundness]
\label{lem:soundness}
If the constructed \plscshort\ instance $(P,k)$ is a \yes-instance, then the original \pcbvcshort\ instance is a \yes-instance.
\end{lemma}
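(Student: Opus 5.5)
Given a family $\mathcal{L}$ of at most $k=k_A+k_B$ L-shapes covering $P$, the plan is to read off $A'$ and $B'$ from the arms that meet the central grid and check that they form a constrained vertex cover. Define
\[
A' \coloneqq \{a_i : \text{some shape in } \mathcal{L} \text{ has its vertical arm on } x=i,\ i\in[m]\},
\]
\[
B' \coloneqq \{b_j : \text{some shape in } \mathcal{L} \text{ has its horizontal arm on } y=j,\ j\in[n]\}.
\]
Each shape contributes at most one vertical-arm line and at most one horizontal-arm line. Hence $|A'|$ is at most the number of shapes whose vertical arm lies on a line $x=i$ with $i\in[m]$. By Corollary~\ref{cor:free-arms} this is at most $k_A$, and symmetrically $|B'|\le k_B$. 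So the size constraints come directly from the forcing lemmas (Lemmas~\ref{lem:TL-forces-vertical} and~\ref{lem:BR-forces-horizontal}) through the corollary.

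Next, take any edge $a_ib_j\in E$. Its point $p_{ij}=(i,j)$ lies on some shape $L\in\mathcal{L}$, so it lies on a vertical arm or on a horizontal arm of $L$. If it lies on the vertical arm, that arm is on the line $x=i$, and since $i\in[m]$ we get $a_i\in A'$. If it lies on the horizontal arm, that arm is on $y=j$ with $j\in[n]$, so $b_j\in B'$. If $p_{ij}$ is the corner, both conclusions hold. Therefore every edge is covered by $A'\cup B'$, and $(A',B')$ is a constrained vertex cover of $G$ with $|A'|\le k_A$ and $|B'|\le k_B$.

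The step that needs the most care is the cardinality bound, namely making sure Corollary~\ref{cor:free-arms} really applies. The $k_B$ shapes guaranteed by Lemma~\ref{lem:TL-forces-vertical} are distinct and have their vertical arms on negative $x$-coordinates. A shape has only one vertical arm, so none of these $k_B$ shapes can also have a vertical arm on some $x=i$ with $i\ge 1$. This leaves at most $k-k_B=k_A$ shapes available to define $A'$, and the symmetric argument gives at most $k_B$ for $B'$. The number of L-shapes in $\mathcal{L}$ may also be strictly less than $k$; this only strengthens both bounds, so the counting argument is unaffected.
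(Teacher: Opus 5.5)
Your proposal is correct and follows essentially the same route as the paper: you read off $A'$ and $B'$ from the arms lying on grid lines, bound $|A'|\le k_A$ and $|B'|\le k_B$ via Corollary~\ref{cor:free-arms}, and check edge coverage point by point. The only difference is cosmetic. The paper's definition of $A'$ and $B'$ also requires the arm to cover some point of $P_{\mathrm{edge}}$, whereas yours does not; this yields a possibly larger set, but one still covered by the same bound, so nothing in the argument changes.
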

\begin{proof}
Let $\mathcal{L}$ be a family of at most $k$ L-shapes covering $P$.

Define
\[
\begin{aligned}
A' &\coloneqq \Bigl\{a_i \in A \;:\; 
   \exists\, L \in \mathcal{L} \text{ whose vertical arm lies on } x=i 
   \text{ and covers some point of } P_{\mathrm{edge}} \Bigr\},\\[4pt]
B' &\coloneqq \Bigl\{b_j \in B \;:\; 
   \exists\, L \in \mathcal{L} \text{ whose horizontal arm lies on } y=j 
   \text{ and covers some point of } P_{\mathrm{edge}} \Bigr\}.
\end{aligned}
\]

\smallskip
\noindent\emph{Budgets.}
By Corollary~\ref{cor:free-arms}, at most $k_{A}$ shapes can have vertical arms on lines $x=i$ with $i\in[m]$,
hence $|A'|\le k_{A}$. Similarly, $|B'|\le k_{B}$.

\smallskip
\noindent\emph{Edge coverage.}
Let $a_ib_j\in E$ and consider its point $(i,j)\in P_{\mathrm{edge}}$.
This point is covered by some $L\in\mathcal{L}$, and hence lies on the vertical arm (implying $a_i\in A'$) or on the horizontal arm (implying $b_j\in B'$). 
Thus $(A',B')$ is a valid constrained vertex cover.
\end{proof}

\begin{theorem}
\plscshort\ is NP-complete.
\end{theorem}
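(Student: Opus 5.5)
The plan is to combine membership in NP with the reduction from \pcbvcshort\ built above, whose correctness is Lemmas~\ref{lem:completeness} and~\ref{lem:soundness}. NP-hardness of \pcbvcshort\ is taken from~\cite{kuo2007efficient}.

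\emph{Membership in NP.} A certificate consists of at most $k$ L-shapes, each given by its corner $c$, the length and direction of its vertical arm, and the length and direction of its horizontal arm. We can assume all of these are polynomial-size rationals. Indeed, we may trim and shift each L-shape so that every arm endpoint and every corner coordinate is a coordinate of some input point: the corner can take its $x$-coordinate from a point on the vertical arm and its $y$-coordinate from a point on the horizontal arm. If an arm covers no point, we may collapse it to the corner. Once the certificate is in this form, we check in polynomial time that every point of $P$ lies on some arm. We may also assume $k\le |P|$, since otherwise the answer is trivially \yes: use one degenerate L-shape per point.

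\emph{NP-hardness.} Given an instance $(G=(A,B,E),k_A,k_B)$ of \pcbvcshort, we build $(P,k)$ as in the construction. This gives $|E|+(k_A+k_B)(k+1)$ points with integer coordinates bounded polynomially. We may assume $k_A\le |A|$ and $k_B\le |B|$ without changing the answer, so the size of $(P,k)$ is polynomial. Lemma~\ref{lem:completeness} gives the forward direction and Lemma~\ref{lem:soundness} gives the reverse direction, so the reduction is correct.

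The one step I would double-check is Lemma~\ref{lem:soundness}, specifically the use of Corollary~\ref{cor:free-arms} to show $|A'|\le k_A$. The point to verify is that a shape whose vertical arm lies on a guard column $x<0$ cannot also have its vertical arm on some $x=i$ with $i\in[m]$. This holds because each L-shape has exactly one vertical arm.

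A second point: a horizontal arm from a top-left shape may cover some bottom-right guard, or the reverse. This causes no trouble, since soundness only uses the arms that cover edge points.

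With these checks in place, the theorem follows.
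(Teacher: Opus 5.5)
Your proposal is correct and follows the paper's route: NP membership, plus the reduction from \pcbvcshort, whose correctness is given by the completeness and soundness lemmas. Your extra care fills in details the paper leaves implicit, namely normalizing certificates to input coordinates and capping $k_A\le|A|$, $k_B\le|B|$ so that the $(k_A+k_B)(k+1)$ guard points are polynomially many.
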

\begin{proof}
Membership in NP was observed earlier. NP-hardness follows from the polynomial-time reduction in Section~\ref{sec:reduction} together with Lemmas~\ref{lem:completeness} and~\ref{lem:soundness}.
\end{proof}

\subsection{Orientations and restriction to $\NE$-oriented L-shapes}
\label{sec:ne-restriction}
For our final reduction, we will use the fact that our construction so far employs only a single orientation of the L-shape. We therefore begin by defining all possible orientations of an axis-aligned L-shape.

\begin{definition}[Oriented L-shapes]
Let $c=(x_c,y_c)$ be the corner and let $L = V\cup H$ be an L-shape with corner $c$.
We say $L$ is
\begin{itemize}
\item \textbf{$\NE$-oriented} if the vertical arm has $c$ as its \emph{lower} endpoint and the horizontal arm has $c$ as its \emph{left} endpoint;
\item \textbf{$\NW$-oriented} if the vertical arm has $c$ as its \emph{lower} endpoint and the horizontal arm has $c$ as its \emph{right} endpoint;
\item \textbf{$\SE$-oriented} if the vertical arm has $c$ as its \emph{upper} endpoint and the horizontal arm has $c$ as its \emph{left} endpoint;
\item \textbf{$\SW$-oriented} if the vertical arm has $c$ as its \emph{upper} endpoint and the horizontal arm has $c$ as its \emph{right} endpoint.
\end{itemize}
See Figure~\ref{fig:orientations}. 
\end{definition}

\begin{figure}[t]
\centering
\begin{tikzpicture}[scale=1.0, line cap=round, line join=round, every node/.style={font=\small}]
  \def\arm{1.15}
  \def\sep{3.5}

  \begin{scope}[shift={(0,0)}]
    \draw[very thick] (0,0) -- (0,\arm);
    \draw[very thick] (0,0) -- (\arm,0);
    \fill (0,0) circle (2pt);
    \node[below left] at (0,0) {$c$};
    \node[below] at (0,-0.55) {$\NE$};
  \end{scope}

  \begin{scope}[shift={(\sep,0)}]
    \draw[very thick] (0,0) -- (0,\arm);
    \draw[very thick] (0,0) -- (-\arm,0);
    \fill (0,0) circle (2pt);
    \node[below right] at (0,0) {$c$};
    \node[below] at (0,-0.55) {$\NW$};
  \end{scope}

  \begin{scope}[shift={(0,-2.2)}]
    \draw[very thick] (0,0) -- (0,-\arm);
    \draw[very thick] (0,0) -- (\arm,0);
    \fill (0,0) circle (2pt);
    \node[above left] at (0,0) {$c$};
    \node[below] at (0,-1.65) {$\SE$};
  \end{scope}

  \begin{scope}[shift={(\sep,-2.2)}]
    \draw[very thick] (0,0) -- (0,-\arm);
    \draw[very thick] (0,0) -- (-\arm,0);
    \fill (0,0) circle (2pt);
    \node[above right] at (0,0) {$c$};
    \node[below] at (0,-1.65) {$\SW$};
  \end{scope}
\end{tikzpicture}
\caption{The four orientations of axis-aligned L-shapes.}
\label{fig:orientations}
\end{figure}

\paragraph*{Restriction to $\NE$-oriented L-shapes}
We define \pnelscfull\ (\pnelscshort) as a specialization of \plscshort\ in which we are allowed to use only \emph{\(\NE\)-oriented} axis-aligned L-shapes.  Formally, given a point set \(P\) and an integer \(k\), the question is whether there exists \(k' \le k\) such that \(P\) can be covered by \(k'\) \(\NE\)-oriented axis-aligned L-shapes.

Our reduction already produces \(\NE\)-solutions. In the completeness construction (Lemma~\ref{lem:completeness}), every L-shape we place is \(\NE\)-oriented: each top-left shape has a vertical arm extending upward to reach \(y>n\) and a horizontal arm extending rightward into the grid, while each bottom-right shape has a horizontal arm extending rightward to reach \(x>m\) and a vertical arm extending upward into the grid. Therefore, we obtain the following.

\begin{theorem}
\pnelscshort\ is NP-complete.
\end{theorem}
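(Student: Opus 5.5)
We prove that \pnelscshort\ is NP-complete by reusing the reduction of Section~\ref{sec:reduction} from \pcbvcshort, without modification, and checking that both directions survive the orientation restriction.

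\emph{Membership in NP.} A certificate consists of at most \(k\) \(\NE\)-oriented L-shapes. We may assume every arm endpoint has coordinates among those of \(P\). Indeed, each arm can be shortened to span exactly the points it covers; the corner keeps its \(x\)-coordinate from the vertical arm and its \(y\)-coordinate from the horizontal arm, both of which are coordinates of input points. Such a certificate has polynomial size, and checking coverage of \(P\) takes polynomial time.

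\emph{Completeness.} This direction is immediate from Lemma~\ref{lem:completeness}, because every L-shape placed there is \(\NE\)-oriented. A bottom-right shape has its corner at \((i,y)\) with \(y<0\). Its horizontal arm runs rightward to \(x=m+M\), and its vertical arm runs upward through the grid, which lies above \(y\). So the corner is the lower endpoint of the vertical arm and the left endpoint of the horizontal arm. A top-left shape has its corner at \((x,j)\) with \(x<0\). Its vertical arm goes up to \(y=n+M\) and its horizontal arm goes right into the grid, which lies to the right of \(x\). In both cases we must make sure the arm lengths are chosen to reach all grid points on the relevant line; we extend them to \(y=n\), respectively \(x=m\). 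The resulting family is therefore an \(\NE\)-solution of size \(k\).

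\emph{Soundness.} Any cover by at most \(k\) \(\NE\)-oriented L-shapes is in particular a cover by at most \(k\) arbitrary L-shapes. Lemma~\ref{lem:soundness} then applies verbatim and yields a constrained vertex cover with \(|A'|\le k_A\) and \(|B'|\le k_B\). That lemma relies only on Lemmas~\ref{lem:TL-forces-vertical} and~\ref{lem:BR-forces-horizontal} and Corollary~\ref{cor:free-arms}, none of which uses orientation.

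The only point requiring care, and the one I expect to be the main obstacle, is the corner bookkeeping in the completeness direction, namely verifying that the arms really extend in the NE directions. Since both families of guards sit on the negative side of the corner's non-guard coordinate, this is a direct check. Combining the three parts, the reduction is polynomial and \pnelscshort\ is NP-complete.
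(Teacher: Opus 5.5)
Your proposal is correct and follows the paper's own argument. The paper likewise reuses the \pcbvcshort\ reduction unchanged and observes that every L-shape built in Lemma~\ref{lem:completeness} is $\NE$-oriented; it inherits soundness from Lemma~\ref{lem:soundness} implicitly, since an $\NE$-cover is in particular an L-shape cover. Your explicit arm lengths and your NP-membership argument go beyond what the paper writes. One small fix in the membership step: if an arm covers no input point, its coordinate need not come from $P$. In that case, first slide the corner along the other arm to the leftmost covered point of the horizontal arm, or to the lowest covered point of the vertical arm.
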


\subsection{Reduction from \pnelscshort}
\label{subsec:brc-reduction}

In this section we prove that \prbcshort\ is NP-complete. We give a reduction from \pnelscshort, which is NP-complete
(Section~\ref{sec:ne-restriction}). Let $(P_0,k)$ be an instance of \pnelscshort. By applying an affine scaling and translation,
we may assume that $P_0 \subseteq [m]\times[n]$ for some integers $m,n\ge 1$.

Set $M\coloneqq2k+1$. We construct a point set
\[
P \;\coloneqq\; P_0 \ \cup\ P^{\mathrm{V}}_{\mathrm{guard}} \ \cup\ P^{\mathrm{H}}_{\mathrm{guard}}
\]
and ask whether $P$ can be boundary-covered by at most $k$ axis-parallel rectangles.

\paragraph*{Vertical guard columns (to the right, but below all horizontal guards).}
For each $t\in[k]$, define
\[
X_t \coloneqq m+M+t
\qquad\text{and}\qquad
V_t \coloneqq \{(X_t,n+1),(X_t,n+2),\dots,(X_t,n+M)\}.
\]
Put $P_{\mathrm{guard}}^{V} \coloneqq \bigcup_{t=1}^{k} V_t$. 

\paragraph*{Horizontal guard rows (strictly above the vertical guards, and strictly left of them).}
For each $s\in[k]$, define
\[
Y_s \coloneqq n+M+s
\qquad\text{and}\qquad
H_s \coloneqq \{(m+1,Y_s),(m+2,Y_s),\dots,(m+M,Y_s)\}.
\]
Put $P_{\mathrm{guard}}^{H} \coloneqq \bigcup_{s=1}^{k} H_s$. 

\begin{figure}[t]
\centering
\begin{tikzpicture}[scale=0.65, every node/.style={font=\small}]
  \def\m{6}
  \def\n{4}
  \def\k{3}
  \def\M{4} 

  \pgfmathtruncatemacro{\xL}{\m+1}
  \pgfmathtruncatemacro{\xR}{\m+\M}
  \pgfmathtruncatemacro{\xVG}{\m+\M+2}
  \pgfmathtruncatemacro{\yB}{\n+1}
  \pgfmathtruncatemacro{\yT}{\n+\M}
  \pgfmathtruncatemacro{\yHG}{\n+\M+2}

  \draw[thick] (0,0) rectangle (\m,\n);
  \node[above right] at (0,0) {$[m]\times[n]$};

  \fill (1,1) circle (1.4pt);
  \fill (4,3) circle (1.4pt);
  \fill (5,2) circle (1.4pt);
  \node[right] at (5,2) {$P_0$};

  \foreach \t in {0,1,2}{
    \pgfmathtruncatemacro{\xx}{\xVG+\t}
    \draw[gray!60] (\xx,\yB-0.2) -- (\xx,\yT+0.2);
    \foreach \yy in {1,...,\M}{ \fill (\xx,\n+\yy) circle (1.3pt); }
  }
  \node[above] at (\xVG+3.5,\yT+0.5) {vertical guards $V_t$};

  \foreach \s in {0,1,2}{
    \pgfmathtruncatemacro{\yy}{\yHG+\s}
    \draw[gray!60] (\xL-0.2,\yy) -- (\xR+0.2,\yy);
    \foreach \xx in {1,...,\M}{ \fill (\m+\xx,\yy) circle (1.3pt); }
  }
  \node[right] at (\xR+0.7,\yHG+2) {horizontal guards $H_s$};

  \pgfmathtruncatemacro{\xRect}{\xVG+1}
  \pgfmathtruncatemacro{\yRect}{\yHG+1}
  \draw[very thick] (2,1.2) rectangle (\xRect,\yRect);

  \draw[->, thick] (\xR,\yRect-0.2) -- (\xRect-0.3,\yRect-0.2);
  \node[above] at ({(\xR+\xRect)/1.85},\yRect) {$x\le m+M < X_t$};

  \draw[->, thick] (\xRect+0.2,\yT) -- (\xRect+0.2,\yRect-0.3);
  \node[right] at (\xRect+0.2,{(\yT+\yRect)/2}) {$y\le n+M < Y_s$};
\end{tikzpicture}

\caption{Schematic placement of guards: vertical guard columns lie strictly to the right of $x=m+M$, while horizontal guard rows lie strictly above $y=n+M$ and entirely within the strip $m+1\le x\le m+M$. Thus the two guard regions are disjoint.}
\label{fig:brc-guards}
\end{figure}
\paragraph*{Placement invariant.}
Every point of each $H_s$ satisfies $x\le m+M$, while each vertical guard column is at $x=X_t\ge m+M+1$; hence all horizontal guards lie strictly to the left of all vertical guards.
Similarly, every point of each $V_t$ satisfies $y\le n+M$, while each horizontal guard row is at $y=Y_s\ge n+M+1$; hence all vertical guards lie strictly below all horizontal guards
(see Figure~\ref{fig:brc-guards}).

\medskip
We now prove the key structural claim used in the soundness direction: in any solution of size at most $k$, every rectangle must ``pay for'' exactly two guard sets (and hence, in particular, it cannot be a rectangle that only serves $P_0$).

\begin{lemma}[Guard-side forcing and counting]
\label{lem:brc-structure}
Let $M\coloneqq2k+1$, and let $\mathcal{R}$ be a family of at most $k$ axis-parallel rectangles such that
\(
P \subseteq \bigcup_{R\in\mathcal{R}} \bd(R)
\),
where $P=P_0\cup\bigcup_{t\in[k]}V_t\cup\bigcup_{s\in[k]}H_s$ in the instance constructed above.

For a rectangle $R$ and $t\in[k]$, we say that $R$ \emph{serves} $V_t$ if
 $x=X_t$ is a vertical side of $R$ and 
$\bd(R)$ contains at least three point of $V_t$. Similarly, $R$ \emph{serves} $H_s$ if  $y=Y_s$ is a horizontal side of $R$ and 
$\bd(R)$ contains at least three point of $H_s$.

Then the following statements hold.
\begin{enumerate}
\item Every vertical guard set $V_t$ is served by some rectangle in $\mathcal{R}$, and every horizontal guard set $H_s$
is served by some rectangle in $\mathcal{R}$.
\item Each rectangle in $\mathcal{R}$ serves points from \emph{at most two} guard sets among
$\{V_1,\dots,V_k, H_1,\dots,$ $H_k\}$.
Moreover, if a rectangle serves two vertical guard sets, then it serves no horizontal guard set; and symmetrically,
if it serves two horizontal guard sets, then it serves no vertical guard set.
\item Consequently, $\lvert\mathcal{R}\rvert=k$, and every rectangle in $\mathcal{R}$ serves \emph{exactly two} guard sets.
In particular, no rectangle in $\mathcal{R}$ serves fewer than two guard sets.
\end{enumerate}
\end{lemma}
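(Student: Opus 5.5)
Part 1 follows from a pigeonhole argument in the style of Lemma~\ref{lem:TL-forces-vertical}. Fix $V_t$. Its $M=2k+1$ points lie on the vertical line $x=X_t$ and have distinct $y$-coordinates. A rectangle without a vertical side on $x=X_t$ meets this line in at most two points: its bottom and top sides each cross it at most once. So if no rectangle has a vertical side on $x=X_t$, the at most $k$ rectangles cover at most $2k<M$ points of $V_t$, which is a contradiction.

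This shows only that some rectangle has a side on the line, not that a single rectangle carries three points. To get that, I would bound more carefully. Each rectangle with a side on $x=X_t$ contributes a segment. Each rectangle without such a side contributes at most $2$ points. Suppose every rectangle with a side on the line covers at most $2$ points of $V_t$. Then the total coverage is at most $2k<2k+1$, again a contradiction. Hence some rectangle has a side on $x=X_t$ and covers at least three points of $V_t$, i.e.\ it serves $V_t$. The horizontal guard sets $H_s$ are handled symmetrically.

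For part 2, a rectangle has exactly two vertical sides and two horizontal sides, and serving $V_t$ uses a vertical side on $x=X_t$. So a rectangle serves at most two vertical and at most two horizontal guard sets. The real content is the exclusion claim; I expect this to be the main obstacle. Suppose $R$ serves two vertical guard sets $V_t$ and $V_{t'}$. Then its left and right sides lie on $x=X_t$ and $x=X_{t'}$, so $R$ is contained in the strip $x\ge m+M+1$. Every point of every $H_s$ has $x\le m+M$, so $\bd(R)$ contains no point of any $H_s$, and $R$ serves no horizontal guard set. The case of two horizontal guard sets is symmetric: $R$ then lies in $y\ge n+M+1$, above all of $V_t$. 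If $R$ serves at most one set of each orientation, it serves at most two sets in total. Combining the cases, every rectangle serves at most two guard sets.

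For part 3, count incidences. By part 1, all $2k$ guard sets $V_1,\dots,V_k,H_1,\dots,H_k$ are served. By part 2, each of the at most $k$ rectangles serves at most two of them. Hence $2k\le 2|\mathcal{R}|\le 2k$, which forces $|\mathcal{R}|=k$. Equality throughout also forces every rectangle to serve exactly two guard sets, with no guard set served twice.
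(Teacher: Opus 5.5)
Your proof is correct and follows the same route as the paper: a two-points-per-crossing pigeonhole for (1), the strip argument ($x\ge m+M+1$ versus $x\le m+M$, and symmetrically in $y$) for the exclusion in (2), and double counting for (3). Your refined count in (1) assumes every rectangle covers at most two points of $V_t$ and derives $2k<2k+1$; this properly justifies the ``at least three points'' clause of serving, which the paper's proof asserts after only showing that some rectangle has a vertical side on $x=X_t$.
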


\begin{proof}
We repeatedly use the following fact.

\smallskip
\noindent\textbf{Fact.}
Fix a vertical line $\ell:x=c$. If a rectangle $R$ does \emph{not} have a vertical side on $\ell$, then
$\bd(R)\cap \ell$ consists of at most two points (where $\ell$ meets the top and bottom sides of $R$).
An analogous statement holds for a horizontal line.

\smallskip
\noindent\emph{Proof of (1).}
Fix $t\in[k]$. Suppose no rectangle in $\mathcal{R}$ has a vertical side on $x=X_t$.
By the Fact, each rectangle then covers at most $2$ points of $V_t$, so all rectangles together cover at most
$2|\mathcal{R}|\le 2k$ points of $V_t$. But $|V_t|=M=2k+1$, a contradiction.
Hence some rectangle, say $R\in\mathcal{R}$, has a vertical side on $x=X_t$ and $\bd(R)$ contains at least three point of $V_t$, that is, serves $V_t$.
The argument for each $H_s$ is symmetric.

\smallskip
\noindent\emph{Proof of (2).}
A rectangle has exactly two vertical sides and two horizontal sides; hence it can serve at most two distinct sets
among $\{V_1,\dots,V_k\}$ and at most two distinct sets among $\{H_1,\dots,H_k\}$.

Now assume $R$ serves two vertical guard sets, say $V_{t_1}$ and $V_{t_2}$ with $t_1\neq t_2$.
Then the two vertical sides of $R$ lie on $x=X_{t_1}$ and $x=X_{t_2}$, and therefore every boundary point of $R$
has
\[
x \;\ge\; \min\{X_{t_1},X_{t_2}\}\;\ge\; m+M+1.
\]
However, every point in any horizontal guard set $H_s$ has $x\in\{m+1,\dots,m+M\}$, so $\bd(R)$ cannot contain
any point of any $H_s$. Thus $R$ serves no horizontal guard set. The horizontal--vertical symmetric statement is analogous.
Therefore each rectangle serves points from at most two guard sets in total.

\smallskip
\noindent\emph{Proof of (3).}
By (1), all $2k$ guard sets must be served by rectangles in $\mathcal{R}$.
By (2), each rectangle serves at most two guard sets.
Hence at least $k$ rectangles are necessary to serve all $2k$ guard sets.
Since $|\mathcal{R}|\le k$, we conclude $|\mathcal{R}|=k$ and that every rectangle serves exactly two guard sets.
\end{proof}

\begin{lemma}[Completeness]
\label{lem:brc-complete}
If $(P_0,k)$ is a \yes-instance of \pnelscshort, then the constructed instance $(P,k)$ is a \yes-instance of \prbcshort.
\end{lemma}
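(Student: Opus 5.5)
Given a solution $\mathcal{L}=\{L_1,\dots,L_{k'}\}$ of at most $k$ $\NE$-oriented L-shapes covering $P_0$, I will build $k$ rectangles by turning each L-shape into a rectangle whose left and bottom sides contain its arms, and whose right and top sides lie on one vertical and one horizontal guard line. First I pad $\mathcal{L}$ to exactly $k$ shapes by adding copies (or degenerate L-shapes) if $k'<k$, and index them $L_1,\dots,L_k$. For each $L_t$ with corner $c_t=(x_t,y_t)$, vertical arm $V$ and horizontal arm $H$, I may assume $c_t\in[m]\times[n]$ after clipping or translating. Points of $P_0$ on $L_t$ have $x\ge x_t$ and $y\ge y_t$, and all of them lie in $[m]\times[n]$. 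So I can shorten the arms to the part meeting $[1,m]\times[1,n]$ without losing coverage, and if the corner lies outside this box, move it toward the box along the arms while keeping everything covered. The simplest way is to replace $L_t$ by the NE-L-shape with the same corner whose arms are extended to infinity, since extending arms never loses points.

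Define $R_t$ as the rectangle $[x_t, X_t]\times[y_t, Y_t]$. Its left side is the segment $x=x_t$, $y\in[y_t,Y_t]$, which contains the vertical arm of $L_t$ intersected with $y\le n<Y_t$. Its bottom side is $y=y_t$, $x\in[x_t,X_t]$, which contains the horizontal arm up to $x\le m<X_t$. Hence every point of $P_0$ covered by $L_t$ lies on $\bd(R_t)$. The right side $x=X_t$ spans $y\in[y_t,Y_t]\supseteq[n+1,n+M]$ because $y_t\le n$, so it covers all of $V_t$. The top side $y=Y_t$ spans $x\in[x_t,X_t]\supseteq[m+1,m+M]$ because $x_t\le m<m+1$ and $X_t>m+M$, so it covers all of $H_t$. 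Thus the $k$ rectangles $R_1,\dots,R_k$ cover $P_0\cup\bigcup_t V_t\cup\bigcup_s H_s$, with $V_t$ and $H_t$ handled by $R_t$.

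The only delicate point is ensuring $x_t\le m$ and $y_t\le n$, so that the guard lines are actually reached. An L-shape whose corner has $x_t>m$ or $y_t>n$ covers no point of $P_0$ on the arm leaving the box. I handle this by a normalization step. If $L_t$ covers no point of $P_0$, replace it with the L-shape cornered at $(1,1)$. Otherwise, set $x_t'=\min(x_t,m)$ and $y_t'=\min(y_t,n)$. If $x_t>m$, the vertical arm covers nothing of $P_0$, and the covered points all lie on $y=y_t$ with $x\ge x_t>m$, which is impossible. So in fact $x_t\le m$ whenever $L_t$ covers a point, and symmetrically $y_t\le n$. The normalization is therefore just replacing useless shapes, and the construction goes through.
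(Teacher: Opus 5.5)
Your proof is correct and follows the paper's construction. Each L-shape becomes a rectangle whose left and bottom sides carry its arms and whose right and top sides lie on a guard pair $(X_t,Y_t)$, with padding for $k'<k$. Your normalization step, showing $x_t\le m$ and $y_t\le n$ for every shape that covers a point of $P_0$ and replacing useless shapes by one cornered at $(1,1)$, makes explicit something the paper only assumes. That fact is needed for the right and top sides to reach $V_t$ and $H_t$, and it is all you need: the earlier remark about clipping or translating the corner into $[m]\times[n]$ is unnecessary and would not generally preserve coverage, but your final argument never uses it.
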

\begin{proof}
Let $L_1,\dots,L_{k'}$ be $\NE$-oriented L-shapes covering $P_0$, where $k'\le k$.
Write the corner of $L_i$ as $c_i=(\alpha_i,\beta_i)$. Since $P_0\subseteq [m]\times[n]$ and each $L_i$ is $\NE$-oriented,
we may assume (by extending arms if needed) that $L_i$ covers points of $P_0$ only along the ray on the line $x=\alpha_i$ above $y=\beta_i$
and along the ray on the line $y=\beta_i$ to the right of $x=\alpha_i$.

We now build at most $k$ rectangles whose boundaries cover $P$.

\smallskip
\noindent\textbf{Step 1: assign each L-shape to a guard pair.}
Choose any injective map $\pi:[k']\to[k]$ and set $t_i\coloneqq\pi(i)$.
Also choose any injective map $\sigma:[k']\to[k]$ and set $s_i\coloneqq\sigma(i)$.
(If $k'<k$, leave some indices of $[k]$ unused; they will be handled in Step~2.)

\smallskip
\noindent\textbf{Step 2: create rectangles.}
For each $i\in[k']$, define a rectangle $R_i$ by:
\[
\text{left side } x=\alpha_i,\qquad \text{bottom side } y=\beta_i,\qquad \text{right side } x=X_{t_i},\qquad \text{top side } y=Y_{s_i}.
\]
By construction, $\bd(R_i)$ contains the left and bottom sides of $R_i$, hence contains $L_i$ (as a subset),
and also contains the full guard lines $x=X_{t_i}$ and $y=Y_{s_i}$.

If $k'<k$, for each remaining index $t\in[k]\setminus \{t_i:i\in[k']\}$ pick an unused $s\in[k]\setminus \{s_i:i\in[k']\}$
(there are equally many of each), and create an auxiliary rectangle with, say, left side $x=0$, bottom side $y=0$, right side $x=X_t$,
and top side $y=Y_s$. These auxiliary rectangles only help cover the remaining guard sets and are irrelevant to covering $P_0$.

\smallskip
\noindent\textbf{Step 3: coverage.}
Each rectangle $R_i$ covers all points of $V_{t_i}$ on its right side $x=X_{t_i}$, and all points of $H_{s_i}$ on its top side $y=Y_{s_i}$.
Thus all guard points are covered.

Now consider any point $p\in P_0$. Since $L_1,\dots,L_{k'}$ cover $P_0$, the point $p$ lies on $L_i$ for some $i$.
As $L_i\subseteq \bd(R_i)$, we have $p\in \bd(R_i)$. Hence $p$ is boundary-covered by our rectangles.

Therefore $(P,k)$ is a \yes-instance of \prbcshort.
\end{proof}


\begin{lemma}[Soundness: extracting $\NE$-L-shapes]
\label{lem:brc-sound-shape}
Let $\mathcal{R}$ be a family of at most $k$ rectangles whose boundaries cover $P$.
Then $P_0$ can be covered by at most $k$ $\NE$-oriented L-shapes.
\end{lemma}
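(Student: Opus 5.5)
We reduce to the structure given by Lemma~\ref{lem:brc-structure}. We extract one $\NE$-oriented L-shape from each rectangle of $\mathcal{R}$ whose boundary meets $P_0$, and we discard the others. Since $|\mathcal{R}|\le k$, this gives at most $k$ L-shapes, so the whole task is to show that every point of $P_0$ lying on $\bd(R)$ lies on the left side or the bottom side of $R$. The left and bottom sides of a rectangle meet at its bottom-left corner, the vertical one going up and the horizontal one going right, so their union is exactly an $\NE$-oriented L-shape.

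\textbf{Step 1.} Fix $R\in\mathcal{R}$ with $\bd(R)\cap P_0\neq\emptyset$. By part~(3) of Lemma~\ref{lem:brc-structure}, $R$ serves exactly two guard sets. All points of $P_0$ have $x\le m$ and $y\le n$.
\begin{itemize}
\item If $R$ serves two vertical sets $V_{t_1},V_{t_2}$, then both vertical sides of $R$ have $x\ge m+M+1>m$. The whole boundary then lies in $x>m$, which contradicts $\bd(R)\cap P_0\ne\emptyset$.
\item Symmetrically, $R$ cannot serve two horizontal sets.
\end{itemize}
Hence $R$ serves exactly one $V_t$ and exactly one $H_s$. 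So $x=X_t$ is a vertical side of $R$ and $y=Y_s$ is a horizontal side of $R$.

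\textbf{Step 2.} We show that $x=X_t$ is the right side and $y=Y_s$ is the top side.
\begin{itemize}
\item If $x=X_t$ were the left side, every boundary point would satisfy $x\ge X_t>m$. Again no point of $P_0$ could be covered.
\item Likewise, if $y=Y_s$ were the bottom side, every boundary point would satisfy $y\ge Y_s>n$.
\end{itemize}
Thus the right side lies on $x=X_t>m$ and the top side lies on $y=Y_s>n$.

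\textbf{Step 3.} Consequently, no point of $P_0$ lies on the right or top side of $R$. Every point of $P_0\cap\bd(R)$ therefore lies on the left side or the bottom side. Let $L_R$ be the union of these two sides. Its corner is the bottom-left corner of $R$, its vertical arm goes upward and its horizontal arm goes rightward, so $L_R$ is $\NE$-oriented.

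This also handles degenerate rectangles. A point $p\in P_0$ on $\bd(R)$ has $x\le m<X_t$ and $y\le n<Y_s$, so the left side lies strictly left of the right side and the bottom side lies strictly below the top side. Hence both arms of $L_R$ are genuine segments with the corner as a common endpoint, as Definition~\ref{def:lshape} requires.

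\textbf{Conclusion.} Every point of $P_0$ is covered by some rectangle in $\mathcal{R}$, and hence by the corresponding $L_R$. The family $\{L_R\}$ therefore covers $P_0$ with at most $k$ $\NE$-oriented L-shapes.

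The main obstacle is the case analysis in Steps~1 and~2. In particular, we must check that ``serving'' a guard set pins the relevant side to the guard line, and that the orientation (right versus left, top versus bottom) is forced whenever the rectangle touches $P_0$. The rest is bookkeeping.
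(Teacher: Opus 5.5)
Your proof is correct and follows the paper's argument essentially step for step. You use Lemma~\ref{lem:brc-structure}(3) to show that each rectangle touching $P_0$ serves exactly one vertical and one horizontal guard set, force those guard lines to be its right and top sides, and take the union of its left and bottom sides as the $\NE$-oriented L-shape; your remark that such rectangles have $a<b$ and $c<d$, so both arms are non-degenerate, is a small point the paper leaves implicit.
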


\begin{proof}
Let
\[
\mathcal{R}_0 \;\coloneqq\; \{\,R\in \mathcal{R} : \bd(R) \cap P_0 \neq \emptyset\,\}.
\]
Clearly $|\mathcal{R}_0|\le |\mathcal{R}|\le k$.

Fix any rectangle $R\in\mathcal{R}_0$. By Lemma~\ref{lem:brc-structure}(3), every rectangle in $\mathcal{R}$
serves exactly two guard sets. We claim that the two guard sets served by $R$ must be \emph{one} vertical guard set
and \emph{one} horizontal guard set.

Indeed, if $R$ served two vertical guard sets, then by Lemma~\ref{lem:brc-structure}(2) every boundary point of $R$
would have $x\ge m+M+1$, whereas every point of $P_0\subseteq[m]\times[n]$ has $x\le m$, contradicting
$\bd(R)\cap P_0\neq\emptyset$. Symmetrically, $R$ cannot serve two horizontal guard sets. Hence there exist
indices $t(R)\in[k]$ and $s(R)\in[k]$ such that $R$ serves $V_{t(R)}$ and serves $H_{s(R)}$.

We next claim that for such an $R$, the side on $x=X_{t(R)}$ is the \emph{right} side of $R$ and the side on
$y=Y_{s(R)}$ is the \emph{top} side of $R$. Since $R\in\mathcal{R}_0$, there exists a point of $P_0$ on $\bd(R)$,
hence $\bd(R)$ contains a point with $x\le m$ and a point with $y\le n$. Thus the left side of $R$ has
$x\le m$ and the bottom side of $R$ has $y\le n$. But $X_{t(R)}\ge m+M+1>m$ and $Y_{s(R)}\ge n+M+1>n$,
so $x=X_{t(R)}$ cannot be the left side and $y=Y_{s(R)}$ cannot be the bottom side; therefore they are the right and top sides,
respectively.

For each rectangle $R\in\mathcal{R}_0$, define $L(R)$ to be the union of the \emph{left} and \emph{bottom} sides of $R$.
Then $L(R)$ is an $\NE$-oriented L-shape.

Now fix any point $p\in P_0$. Since $\mathcal{R}$ boundary-covers $P$, the point $p$ lies on $\bd(R)$ for some
$R\in\mathcal{R}$, and hence for some $R\in\mathcal{R}_0$ by definition of $\mathcal{R}_0$.
For this rectangle $R$, we have shown that its right side is at $x=X_{t(R)}>m$ and its top side is at $y=Y_{s(R)}>n$.
Because $p$ satisfies $x(p)\le m$ and $y(p)\le n$, it cannot lie on the right or top side of $R$.
Therefore $p$ lies on the left side or the bottom side of $R$, i.e., $p\in L(R)$.

Hence the family $\{\,L(R): R\in\mathcal{R}_0\,\}$ covers $P_0$ and has size at most $|\mathcal{R}_0|\le k$.
\end{proof}

\begin{theorem}
\label{thm:brc-npc}
\prbcshort\ is NP-complete.
\end{theorem}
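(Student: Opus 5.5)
The plan is to prove NP-completeness in two parts: membership in NP, and NP-hardness by combining the reduction of Section~\ref{subsec:brc-reduction} with Lemmas~\ref{lem:brc-complete} and~\ref{lem:brc-sound-shape}. The hardness source is \pnelscshort, which was shown NP-complete in Section~\ref{sec:ne-restriction}.

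\textbf{Membership in NP.} The certificate consists of at most $k$ rectangles, each given by four coordinates. To keep the certificate polynomial in size, I will argue that it suffices to use coordinates from the input. Take any solution and shrink each rectangle side until it reaches the extreme coordinate of a point it covers; if a side covers no point, slide it to some input coordinate. More carefully, I will show that each vertical side may be moved to an $x$-coordinate of $P$ and each horizontal side to a $y$-coordinate of $P$, without uncovering any point. A side covering a point already lies on such a coordinate. A side covering no point can be moved freely, say onto the same line as the opposite side, but only while the rectangle's other sides keep their covered points. To handle this cleanly, I will choose the new coordinate between the extreme coordinates of the points covered by the perpendicular sides, clamped to the nearest input coordinate, so those sides still contain their points. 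This is the discretization of Step~1 of the overview. With all coordinates drawn from polynomially many candidates, the certificate has polynomial size, and it is checked by testing each point against each boundary.

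\textbf{NP-hardness.} Given an instance $(P_0,k)$ of \pnelscshort, first rescale so that $P_0\subseteq[m]\times[n]$. Then build $P=P_0\cup P^{V}_{\mathrm{guard}}\cup P^{H}_{\mathrm{guard}}$ with $M=2k+1$. The construction adds $2kM=O(k^2)$ points, and we may assume $k\le|P_0|$, since otherwise the answer is trivially yes. So the construction is polynomial, provided the coordinates stay of polynomial bit-length. For that, I will first replace the coordinates of $P_0$ by their ranks; this preserves NE-L-shape coverability because only the relative order matters. Lemma~\ref{lem:brc-complete} gives the forward direction and Lemma~\ref{lem:brc-sound-shape} gives the backward direction, so $(P_0,k)$ is a yes-instance if and only if $(P,k)$ is.

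\textbf{Main obstacle.} I expect the NP-membership discretization to be the fiddliest step: making sure that moving an empty side never causes another side of the same rectangle to lose its points. The hardness part is a direct assembly of the results already established.
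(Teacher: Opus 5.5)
Your proposal is correct and follows the paper's proof. Membership in NP is shown by a certificate of rectangle coordinates, and NP-hardness comes from composing the reduction from \pnelscshort{} with Lemmas~\ref{lem:brc-complete} and~\ref{lem:brc-sound-shape}. You also fill in details the paper leaves implicit. You bound the certificate size by discretizing, which the paper only supplies separately as Lemma~\ref{lem:rect-discretization}; there a point-free side is snapped to the nearest input coordinate with no input coordinate strictly in between, and that is the precise form of your somewhat loosely worded ``clamping'' step. You also keep the construction polynomial by assuming $k\le|P_0|$ and rank-compressing the coordinates.
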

\begin{proof}
Membership in NP is immediate: a certificate consists of the corner coordinates of at most $k$ rectangles, and one can verify in polynomial time
that every input point lies on the boundary of at least one rectangle.

For NP-hardness, we use the reduction from \pnelscshort\ described in Section~\ref{subsec:brc-reduction}.
Lemma~\ref{lem:brc-complete} proves completeness of the reduction.
For soundness, Lemma~\ref{lem:brc-sound-shape} shows how to extract from any boundary-cover of $P$ by at most $k$ rectangles
a cover of $P_0$ by at most $k$ $\NE$-L-shapes.
Thus $(P_0,k)$ is a \yes-instance of \pnelscshort\ if and only if $(P,k)$ is a \yes-instance of \prbcshort,
and the reduction is polynomial-time.
\end{proof}



\section{An FPT algorithm for \prbcshort}
\label{sec:fpt-algorithm}
In this section we present an FPT algorithm for \prbcshort\ parameterized by $k$.
The algorithm follows the outline from Section~\ref{sec:overview}; each step is implemented and analyzed in its own subsection below.
As a result, \prbcshort\ is fixed-parameter tractable with respect to $k$.

\subsection{Discretization of coordinates}
We begin by \emph{discretizing the plane} using the coordinate values induced by the input point set
$P=\{p_1,\dots,p_n\}\subseteq \mathbb{R}^2$, where $p_i=(x_i,y_i)$ for each $i\in[n]$.
Let $X$ and $Y$ denote the sets of distinct $x$- and $y$-coordinates appearing in $P$, respectively:
\[
X := \{\,x_i \mid (x_i,y_i)\in P\,\}
\qquad\text{and}\qquad
Y := \{\,y_i \mid (x_i,y_i)\in P\,\}.
\]
Let $\gridpts_X=\langle \alpha_1,\alpha_2,\dots,\alpha_{N_x}\rangle$ be the increasing ordering of $X$,
and let $\gridpts_Y=\langle \beta_1,\beta_2,\dots,\beta_{N_y}\rangle$ be the increasing ordering of $Y$,
where $N_x=|X|\le n$ and $N_y=|Y|\le n$.
Thus $\alpha_1<\alpha_2<\cdots<\alpha_{N_x}$ and $\beta_1<\beta_2<\cdots<\beta_{N_y}$.

\noindent

\paragraph*{Discretization.}\label{def:discretization}
Given a point set $P\subseteq \mathbb{R}^2$, let
$\gridpts_X=\langle \alpha_1,\dots,\alpha_{N_x}\rangle$ and
$\gridpts_Y=\langle \beta_1,\dots,\beta_{N_y}\rangle$
be the increasing orderings of the distinct $x$- and $y$-coordinates appearing in $P$.
We draw the vertical lines $x=\alpha_j$ for all $j\in[N_x]$ and the horizontal lines $y=\beta_\ell$ for all $\ell\in[N_y]$.
These lines form an axis-parallel grid whose \emph{grid points} are the intersection points
\[
(\alpha_j,\beta_\ell)\qquad\text{for all } j\in[N_x]\text{ and }\ell\in[N_y].
\]
By construction, every point of $P$ lies on a grid point of this grid (indeed, on an intersection of one drawn vertical and one drawn horizontal line). We now prove the main lemma of this section, showing that we may assume, without loss of generality, that every rectangle in an optimal solution has all four sides aligned with grid lines. This yields the following lemma.

\begin{lemma}[Grid-aligned solution for rectangles]
\label{lem:rect-discretization}
Let $(P,k)$ be an instance of \prbcshort and let
$X=\{x(p):p\in P\}$ and $Y=\{y(p):p\in P\}$.
Let $\gridpts_X=\langle \alpha_1<\alpha_2<\cdots<\alpha_{N_x}\rangle$ and
$\gridpts_Y=\langle \beta_1<\beta_2<\cdots<\beta_{N_y}\rangle$ be the sorted lists of distinct coordinates in $X$ and $Y$.
If $(P,k)$ is a \yes-instance, then there exists a solution $\mathcal{R}$ of size at most $k$ such that for every rectangle
$R\in\mathcal{R}$, all four sides of $R$ lie on grid lines of the form $x=\alpha_j$ and $y=\beta_\ell$ (equivalently, all four
corners of $R$ are grid points $(\alpha_j,\beta_\ell)$).
\end{lemma}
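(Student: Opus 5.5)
Starting from an arbitrary solution $\mathcal{R}$ of size at most $k$, we transform it rectangle by rectangle, one side at a time, into a grid-aligned solution without uncovering any point and without increasing the number of rectangles. First we discard every rectangle whose boundary contains no point of $P$; this only shrinks the solution. Then, for a fixed remaining rectangle $R=[a,b]\times[c,d]$, let $P_R=P\cap\bd(R)$ be the points it covers. Rather than preserving all of $\bd(R)\cap P$, it suffices to preserve, for each point of $P_R$, coverage by \emph{some} rectangle of the modified family.

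The key step is a snapping operation on a single side, say the left side $x=a$ with $a\notin X$. Since $a\notin X$, no point of $P$ lies on the open left side, so the points of $P_R$ with $x=a$ can only be at corners, which is impossible because corners have $x=a$ as well. Hence every point of $P_R$ lies on the top side, bottom side or right side, and has $x$-coordinate in $(a,b]$ (or equal to $b$). We move the left side rightwards to $a'=\min\{x(p): p\in P_R\}$ if this minimum exists and $a'\in X$; then every covered point on the top or bottom side still lies on the shrunken side since its $x$-coordinate is at least $a'$, and points on the right side are untouched. If $P_R$ has all its points on the right side only (so $a'=b$), we instead move the left side to any grid coordinate $\alpha_j\le b$, or, if $b\in X$ and none smaller exists, simply to $a'=b$, producing a degenerate rectangle (a segment), which is still an axis-parallel rectangle in the paper's definition with boundary equal to that segment. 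We handle the right, top and bottom sides symmetrically, each time using that a side lying on a non-grid line contains no input point, so all points it protects lie on the perpendicular sides within the new range.

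The order matters slightly: after snapping the left and right sides to grid $x$-coordinates, the top and bottom sides have new extents, but the points they covered still have $x$-coordinates in the new interval $[a',b']$ by the choice of $a',b'$ as extremes of covered coordinates, so subsequently snapping $c,d$ vertically in the same way is safe. The main obstacle I expect is exactly this bookkeeping: making sure that when one side shrinks, points previously covered at a corner or on a perpendicular side are not lost, and handling degenerate cases (all covered points on one side, or a rectangle collapsing to a segment or point). I would treat it by always defining the new coordinate as the extreme coordinate of the currently covered points on that axis, which automatically lies in $X$ (resp. $Y$) and keeps every covered point on the boundary. Repeating for all rectangles yields the grid-aligned solution claimed in Lemma~\ref{lem:rect-discretization}.
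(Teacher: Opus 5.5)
Your argument is correct and is essentially the paper's: both rest on the observation that a side lying on a non-grid line contains no input point, so it can be slid onto a grid line while every point covered by the other three sides stays covered. The paper runs a minimal-counterexample argument (lexicographically minimizing the number of non-grid sides) and snaps the side to the nearest grid line on either side of $a$, whereas you iterate directly, discard empty rectangles first, and snap inward to the extreme coordinate of the currently covered points. Your choice has the minor advantage that the moved side can never overshoot the opposite side. As you correctly note, degenerate (segment or point) rectangles can arise in either version, and they are admissible under the paper's definition of a rectangle.
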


\begin{proof}
Let $\mathcal{F}$ be the set of all feasible solutions of size at most $k$, i.e., families $\cR$ of at most $k$
rectangles such that $P\subseteq \bigcup_{R\in\cR}\partial R$.
Fix a solution $\cR\in\mathcal{F}$ that lexicographically minimizes the pair
\[
\Phi(\cR)\ :=\ \Bigl(\#\mathrm{NG}(\cR),\ \sum_{R\in\cR}\bigl(|a_R|+|b_R|+|c_R|+|d_R|\bigr)\Bigr),
\]
where $\#\mathrm{NG}(\cR)$ denotes the total number of sides of rectangles in $\cR$ that are \emph{not} aligned with
any grid line, and where each rectangle $R$ is written as
$R=\{(x,y)\in\mathbb{R}^2: a_R\le x\le b_R,\ c_R\le y\le d_R\}$.

We claim that $\#\mathrm{NG}(\cR)=0$, which implies the desired grid-alignment.
Assume for contradiction that $\#\mathrm{NG}(\cR)>0$, and choose a rectangle
$R=\{(x,y): a\le x\le b,\ c\le y\le d\}\in\cR$ having a non-grid side.
By symmetry, assume that the left side $x=a$ is not a grid line; in particular, $a\notin X$.
Let $S:=\{y : (a,y)\in P\}$ be the set of $y$-coordinates of points of $P$ on this side.
Since $a\notin X$, we have $S=\emptyset$: no point of $P$ lies on the line $x=a$.

Let $\alpha_s:=\max\{\alpha\in X : \alpha<a\}$ and $\alpha_{s+1}:=\min\{\alpha\in X : \alpha>a\}$.
(If one of these does not exist, interpret it as $-\infty$ or $+\infty$; the argument below uses only the fact that
there is no $x$-coordinate of a point of $P$ in the open interval between $a$ and the chosen grid line.)
Since $a\notin X$, at least one of $\alpha_s$ or $\alpha_{s+1}$ is strictly closer to $a$ than any other grid line.
Pick $\alpha^\star\in\{\alpha_s,\alpha_{s+1}\}$ such that $|a-\alpha^\star|$ is minimized (break ties arbitrarily),
and define the modified rectangle
\[
R^\star\ :=\ \{(x,y)\in\mathbb{R}^2 : \alpha^\star \le x \le b,\ c\le y\le d\}.
\]
Let $\cR^\star := (\cR\setminus\{R\})\cup\{R^\star\}$.

\smallskip
\noindent\emph{Feasibility is preserved.}
We show that $\cR^\star$ still boundary-covers $P$.
Let $p\in P$ be any point that is covered by $\partial R$ and consider how it can lie on $\partial R$.
Since $a\notin X$, no point of $P$ lies on the side $x=a$ (by $S=\emptyset$), so $p$ must lie on one of the other
three sides: $x=b$, $y=c$, or $y=d$.

If $p$ lies on $x=b$, then it is still on $x=b$, hence still on $\partial R^\star$.
If $p$ lies on $y=c$ (respectively $y=d$), then $p=(x,c)$ (resp.\ $(x,d)$) with $a\le x\le b$.
Because there is no $x$-coordinate of any point of $P$ strictly between $a$ and $\alpha^\star$, this $x$ also satisfies
$\alpha^\star \le x \le b$, and hence $p$ still lies on the bottom (resp.\ top) side of $R^\star$.
Therefore every point of $P$ that was covered \emph{only} by $\partial R$ remains covered by $\partial R^\star$, and all
other points are unaffected; thus $\cR^\star\in\mathcal{F}$.

\smallskip
\noindent\emph{The potential strictly decreases.}
The rectangle $R^\star$ has the same three sides $x=b$, $y=c$, $y=d$ as $R$, but its left side is now $x=\alpha^\star$,
which \emph{is} a grid line. Hence $\#\mathrm{NG}(\cR^\star) < \#\mathrm{NG}(\cR)$, so
$\Phi(\cR^\star) < \Phi(\cR)$ lexicographically, contradicting the choice of $\cR$.

This contradiction shows that $\#\mathrm{NG}(\cR)=0$, i.e., every side of every rectangle in $\cR$ lies on a grid line.
Since grid lines intersect only at grid points, all corners are grid points as claimed.
\end{proof}

In what follows we work entirely with such grid-aligned rectangles and squares.

\subsection{Covering by few lines via Vertex Cover}
We now prove that if there is a solution with at most $k$ rectangles, then the points in $P$ can be covered by at most $4k$ grid lines (vertical or horizontal). 


Consider the grid induced by the coordinate sets $X$ and $Y$ from the discretization step (see Section~\ref{def:discretization}). Construct a bipartite graph $G = (V_{\sf vert} \uplus V_{\sf hor}, E)$ where
\begin{itemize}
  \item $V_{\sf vert}$ has one vertex $v_j$ for each distinct vertical line $x = \lambda_j$, i.e., $V_{\sf vert} = \{v_j \mid \lambda_j \in X\}$.
  \item $V_{\sf hor}$ has one vertex $h_\ell$ for each distinct horizontal line $y = \gamma_\ell$, i.e., $V_{\sf hor} = \{h_\ell \mid \gamma_\ell \in Y\}$.
  \item The edge set $E(G)$ consists of edges $e_p = (v_j, h_\ell)$ for each point $p = (\lambda_j, \gamma_\ell) \in P$.
\end{itemize}

A subset $C \subseteq V_{\sf vert} \uplus V_{\sf hor}$ is a \emph{vertex cover} of $G$ if every edge $e_p \in E(G)$ has at least one endpoint in $C$. In the next lemma, we relate the solution to \prbcshort to a vertex cover of $G$.

\begin{lemma}[Few-line cover of points]
\label{lem:line-cover}
Let $(P,k)$ be an instance of \prbcshort. If $(P,k)$ has a solution of size at most $k$ , then the bipartite graph $G$ has a vertex cover $C$ of size at most $4k$. Moreover, such a minimum vertex cover can be found in polynomial time.

\end{lemma}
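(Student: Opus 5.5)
We first apply Lemma~\ref{lem:rect-discretization}: if $(P,k)$ is a \yes-instance, there is a solution $\cR=\{R_1,\dots,R_{k'}\}$ with $k'\le k$ in which every rectangle has all four sides on grid lines $x=\alpha_j$ or $y=\beta_\ell$, i.e., on lines indexing vertices of $G$. (Even without this lemma the argument goes through: a side on a line not in $X$ or $Y$ covers no point and may be ignored.) From $\cR$ we define $C$ as follows: for each $R\in\cR$, put the vertices of $V_{\sf vert}$ corresponding to the lines of its left and right sides into $C$, and the vertices of $V_{\sf hor}$ corresponding to the lines of its bottom and top sides. Lines not in $X$ (resp.\ $Y$) are skipped. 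Each rectangle contributes at most four vertices, so $|C|\le 4k'\le 4k$.

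Next we check that $C$ is a vertex cover. Take any edge $e_p=(v_j,h_\ell)$ with $p=(\lambda_j,\gamma_\ell)\in P$. Since $\cR$ is a solution, $p\in\bd(R)$ for some $R\in\cR$. Hence $p$ lies on a side of $R$. If that side is vertical, it lies on the line $x=\lambda_j$, so $v_j\in C$. If it is horizontal, it lies on $y=\gamma_\ell$, so $h_\ell\in C$. In either case $e_p$ is covered. Since multiple points give distinct edges only when their coordinates differ, and repeated points just repeat the same edge, nothing more is needed.

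For the algorithmic claim, $G$ is bipartite with at most $2n$ vertices and at most $n$ edges. By K\H{o}nig's theorem, a minimum vertex cover can be obtained from a maximum matching (Hopcroft--Karp) in polynomial time, as in~\cite{DBLP:journals/dam/HassinM91,gaur2007covering}. If the minimum cover has size greater than $4k$, the contrapositive of the first part lets us answer \no. Otherwise the cover yields the set \impline of at most $4k$ lines containing all points of $P$.

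There is no real obstacle here. The only point needing care is that a covered point lies on a side whose supporting line is exactly one of the lines indexing the endpoints of its edge. This is immediate because a vertical side containing $p$ has $x$-coordinate $x(p)$, and similarly a horizontal side has $y$-coordinate $y(p)$.
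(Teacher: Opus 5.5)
Your proof is correct and follows the same route as the paper. You put the supporting lines of each rectangle's four sides into $C$, note that a covered point lies on a side whose supporting line is an endpoint of its edge, and compute a minimum cover via K\H{o}nig's theorem. Your explicit treatment of sides on non-grid lines, and your use of the contrapositive to reject when $|C^\star|>4k$, are slightly cleaner than the paper's pigeonhole remark.
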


\begin{proof}
Let $\mathcal{R}= \{R_1,\dots,R_t\}$, $t \le k$ be a solution to \prbcshort. For each rectangle in $\cR$, it has has two vertical and two horizontal boundary lines. For $G$, we construct the set $C\subseteq V(G)$ as follows: for each rectangle $R_i, i \in [t]$, insert the vertices corresponding to its two vertical and two horizontal boundary lines into $C$. Clearly $|C| \le 4t \le 4k$. Now we show that $C$ is a vertex cover for $G$.

Now fix any point $p = (\lambda_j, \gamma_\ell) \in P$. Since $\cR$ is a solution to \prbcshort, the point $p$ lies on the boundary of some rectangle $R_i \in \cR$. If $p$ happens to be a corner of $R_i$, we break ties by treating it as lying on the horizontal side of $R_i$.
Now either
\begin{itemize}
  \item $p$ lies on a vertical side (left or right boundary) of $R_i$, in which case $v_j \in C$, or
  \item $p$ lies on a horizontal side (top or bottom boundary) of $R_i$, in which case $h_\ell \in C$.
\end{itemize}
Further by the construction, $G$ only has edges corresponding to the points in $P$. Thus we have that for each edge $e_p=(v_j,h_\ell) \in E(G)$, at least one endpoint is in $C$~\cite{korte2008combinatorial}. Thus $C$ is a vertex cover.

We know that, for bipartite graphs, a minimum vertex cover can be computed in polynomial time using K\H{o}nig's theorem together with a maximum matching algorithm. Thus, we can compute a minimum vertex cover $C^\star$ and check whether $|C^\star| \le 4k$.

If $|C^\star| > 4k$, then by the pigeonhole principle, at least one of $|C^\star \cap V_{\sf vert}|$ or $|C^\star \cap V_{\sf hor}|$ must exceed $2k$. Since these vertices correspond to vertical and horizontal lines in $\mathbb{R}^2$, covering all points in $P$ would then require at least $2k{+}1$ horizontal or vertical sides. However, any solution using at most $k$ rectangles has at most $2k$ horizontal and at most $2k$ vertical sides. Therefore, no solution of size at most $k$ can exist in this case.
\end{proof}

\noindent
\textbf{Solution Supporting Important Lines:} Let $C$ be a vertex cover of $G$ obtained via \Cref{lem:line-cover}, with $|C| \le 4k$.  
By the construction of $G$, the set $C$ corresponds to at most $4k$ vertical or horizontal lines that together cover all points in $P$.  
We denote this collection of lines by $\impline$. Clearly $|\impline| \leq 4k$.  
Since the vertex cover $C$ can be computed algorithmically, the set $\impline$ can be computed as well~\cite{korte2008combinatorial}.

We remark that this step is not new: it is exactly the classical polynomial-time algorithm for computing a minimum-size set of axis-parallel lines that covers a given point set in the plane (see, e.g.,~\cite{gaur2007covering,DBLP:journals/dam/HassinM91}).


\subsection{Exceptional points: definition and basic bounds}


Let $\impline_{\hor}$ and $\impline_{\ver}$ denote the sets of horizontal and vertical lines in $\impline$, respectively.
Since $\impline=\impline_{\hor}\uplus \impline_{\ver}$ and $|\impline|\le 4k$, we have
$|\impline_{\hor}|\le 4k$ and $|\impline_{\ver}|\le 4k$.

We now state a few basic properties of the lines in $\impline$. We present them for horizontal lines; the vertical case is
completely analogous. Fix a line $L\in \impline_{\hor}$, and let $P_L := P\cap L = \{p_1,\ldots,p_t\}$,
where the points are ordered by increasing $x$-coordinate. Writing $p_j=(x_j,y_L)$, we have $
x_1 < x_2 < \cdots < x_t$.  A rectangle $R$ can interact with $L$ in only two relevant ways:
either one of its horizontal sides lies on $L$, or $L$ cuts through the interior of $R$ (in which case $L$ meets $\bd(R)$
in exactly two points).

\begin{lemma}[Intersection of a horizontal line with a rectangle]
\label{lem:line-rect-intersection}
Let $L$ be a horizontal line $y=y_L$ and let $R=[a,b]\times[c,d]$ be an axis-parallel rectangle with $a<b$ and $c<d$.
Then exactly one of the following holds.
\begin{enumerate}
\item $\bd(R)\cap L=\emptyset$ (this happens when $y_L\notin[c,d]$);
\item $L$ coincides with a horizontal side of $R$ (i.e., $y_L\in\{c,d\}$), and then $\bd(R)\cap L = [a,b]\times\{y_L\}$;
\item $c<y_L<d$, and then $\bd(R)\cap L=\{(a,y_L),(b,y_L)\}$.
\end{enumerate}
In particular, if no horizontal side of $R$ lies on $L$, then $|\bd(R)\cap L|\le 2$.
\end{lemma}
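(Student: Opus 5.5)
The plan is to compute $\bd(R)\cap L$ directly from the definitions of $R$ and $\bd(R)$, splitting into cases according to where $y_L$ falls relative to the interval $[c,d]$. These three cases are mutually exclusive and exhaustive: either $y_L\notin[c,d]$, or $y_L\in\{c,d\}$, or $c<y_L<d$. Exhaustiveness is clear, and exclusivity holds because $c<d$. This gives the ``exactly one'' part once each case is matched to its stated conclusion.

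\emph{Case 1.} Suppose $y_L\notin[c,d]$. Every point $(x,y)\in\bd(R)\subseteq R$ satisfies $c\le y\le d$, so no point of $\bd(R)$ has $y=y_L$. Hence the intersection is empty.

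\emph{Case 2.} Suppose $y_L=c$ (the case $y_L=d$ is symmetric). A point $(x,y_L)$ lies in $R$ exactly when $a\le x\le b$. Any such point satisfies the boundary condition $y=c$, so it lies in $\bd(R)$. Therefore $\bd(R)\cap L=[a,b]\times\{y_L\}$.

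\emph{Case 3.} Suppose $c<y_L<d$. A point $(x,y_L)\in R$ has $y_L\neq c$ and $y_L\neq d$, so it lies in $\bd(R)$ only if $x=a$ or $x=b$. Both $(a,y_L)$ and $(b,y_L)$ do belong to $R$, so they are in the intersection. These two points are distinct since $a<b$, so $\bd(R)\cap L=\{(a,y_L),(b,y_L)\}$.

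For the ``in particular'' clause: if no horizontal side of $R$ lies on $L$, then $y_L\notin\{c,d\}$, so only Case~1 or Case~3 can occur. These give intersections of size $0$ and $2$, so $|\bd(R)\cap L|\le 2$.

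There is no genuine obstacle here. The only step that needs care is verifying in Case~3 that the boundary condition leaves exactly the two vertical-side points, which uses $y_L\ne c,d$.
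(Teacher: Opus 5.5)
Your proof is correct and uses the same three-way case split on the position of $y_L$ relative to $[c,d]$ as the paper. You also spell out two things the paper leaves implicit: that the three cases are mutually exclusive, and that $(a,y_L)$ and $(b,y_L)$ are distinct because $a<b$.
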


\begin{proof}
If $y_L\notin[c,d]$, then $L$ lies strictly above or below $R$ and hence does not meet $\bd(R)$.
If $y_L=c$ or $y_L=d$, then $L$ coincides with the bottom or top side of $R$, and the intersection is the entire segment
$[a,b]\times\{y_L\}$.
Finally, if $c<y_L<d$, then $L$ crosses the interior of $R$ and meets the boundary exactly at the two vertical sides,
namely at $(a,y_L)$ and $(b,y_L)$.
\end{proof}

\medskip
This yields the promised ``two modes'' for covering points on $L$ (aligned vs. crossing); see Figure~\ref{fig:aligned-crossing}.

\begin{enumerate}
\item \textbf{Aligned coverage:} 
A point $p \in P_L$ is said to be \emph{aligned covered} by a rectangle $R$ if the line $L$
coincides with either the top or the bottom side of $R$, and the point $p$ lies on that
side of $R$.

\item \textbf{Crossing coverage:} 
A point $p \in P_L$ is said to be \emph{crossing covered} by a rectangle $R$ if $p$ lies on
either the left or the right side of $R$ and $R$ intersects the line $L$ at
exactly two points where $p$ is one of them. 

\end{enumerate}

\begin{definition}[Exceptional points on a line]
\label{def:exceptional-point}
Let $\mathcal{R}$ be a family of rectangles. A point $p\in P_L$ is \emph{exceptional (w.r.t.\ $\mathcal{R}$)} if it is covered by
some rectangle $R\in\mathcal{R}$ that has no horizontal side on $L$.
\end{definition}

\begin{lemma}[Few exceptional points per line]
\label{lem:few-exceptional}
Let $\mathcal{R}$ be a family of at most $k$ rectangles. For any horizontal line $L$,
the number of exceptional points of $P_L$ (w.r.t.\ $\mathcal{R}$) is at most $2k$.
\end{lemma}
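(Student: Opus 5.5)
The plan is a charging argument: map every exceptional point to a rectangle that has no horizontal side on $L$ and covers it, and then show that each rectangle receives at most two points. Fix a horizontal line $L$ and, for every exceptional point $p\in P_L$, choose one witness rectangle $R_p\in\mathcal{R}$. By Definition~\ref{def:exceptional-point}, $R_p$ has no horizontal side on $L$ and $p\in\bd(R_p)$. This gives a map $p\mapsto R_p$ from the set of exceptional points of $P_L$ into the subfamily $\mathcal{R}_L\subseteq\mathcal{R}$ of rectangles with no horizontal side on $L$.

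The key step bounds the fibres of this map. For $R\in\mathcal{R}_L$, every point $p$ with $R_p=R$ lies in $\bd(R)\cap L$. Lemma~\ref{lem:line-rect-intersection} applies, and since case~2 is excluded for $R$, we get $|\bd(R)\cap L|\le 2$. One bookkeeping issue is the degenerate rectangle, where $a=b$ or $c=d$, because the lemma assumes $a<b$ and $c<d$. In that situation I would argue directly. If $c=d$ and $y_L=c$, then $L$ contains a horizontal side, which is excluded. If $a=b$, then $\bd(R)$ is a vertical segment, which meets $L$ in at most one point. (Alternatively, one can assume nondegenerate rectangles without loss of generality.)

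Summing over the fibres gives that the number of exceptional points is at most $\sum_{R\in\mathcal{R}_L}|\bd(R)\cap L|\le 2|\mathcal{R}_L|\le 2|\mathcal{R}|\le 2k$. I expect no real obstacle here. The only care needed is in the handling of degenerate rectangles and the injective-style charging, which avoids double counting when a point is covered by several non-aligned rectangles.
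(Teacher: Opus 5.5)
Your proposal is correct and follows the same route as the paper: a rectangle with no horizontal side on $L$ meets $L$ in at most two boundary points by Lemma~\ref{lem:line-rect-intersection}, and summing over at most $k$ rectangles gives the bound $2k$. Your explicit witness map and your handling of degenerate rectangles are extra bookkeeping that the paper leaves implicit, but they do not change the argument.
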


\begin{proof}
By Lemma~\ref{lem:line-rect-intersection}, any rectangle $R$ with no horizontal side on $L$ satisfies
$|\bd(R)\cap L|\le 2$. Hence such a rectangle can cover at most two points of $P_L$ in the crossing mode.
Summing over at most $k$ rectangles gives at most $2k$ exceptional points on $L$.
\end{proof}

Every other point on $L$, that is, every point of $P_L$ that is \emph{not} exceptional, must be covered in the aligned mode:
namely, it must lie on the top or bottom side of some rectangle whose horizontal side is contained in $L$.
In particular, non-exceptional points may appear in arbitrarily large numbers along $L$, but they can only be covered by rectangles
that have a horizontal side on $L$. All definitions and lemmas stated above for horizontal lines extend verbatim to vertical lines in $\impline_{\ver}$
(by swapping the roles of $x$ and $y$, and ``horizontal'' with ``vertical'').

\begin{figure}[t]
  \centering
  \begin{tikzpicture}[scale=1.0]

    \draw[thick] (0,0) -- (12,0);
    \node at (12.4,0) {$L$};

    \foreach \x/\lab in {
        1/1, 2.2/2, 3.6/3, 5.0/4,
        7.0/5, 9.0/6,
        10.2/7, 11.2/8
    } {
      \fill (\x,0) circle (2pt);
      \node[below] at (\x,-0.1) {$p_{\lab}$};
    }

    \draw[blue,thick]
      (0.6,0) -- (4.4,0) -- (4.4,1) -- (0.6,1) -- cycle;
    \node[blue] at (2.5,1.35) {$R_1$};

    \draw[blue,thick]
      (7,-0.5) -- (9,-0.5) -- (9,1.0) -- (7,1.0) -- cycle;
    \node[blue] at (8,1.15) {$R_2$};

    \draw[red,thick] (7,0) -- (7,1.6);
    \node[red] at (7,1.85) {\small Crossing Coverage};

    \draw[red,thick] (2.2,0) -- (2.2,1.6);

    \node[red] at (2.2,1.85) {\small Aligned Coverage};

  \end{tikzpicture}
  \caption{
    $\{p_1, \dots, p_8\}$ are the points along a horizontal line $L$. Coverage of $p_1, p_2, p_3$ by $R_1$ is example of aligned coverage where as coverage of $p_5, p_6$ by $R_2$ is crossing coverage.
  }
  \label{fig:aligned-crossing}
\end{figure}

\subsection{Skeletons}
\label{subsec:skeletons}

Recall that by \Cref{lem:line-cover} we can compute, in polynomial  time, a set of at most $4k$ grid-aligned lines $\impline \;=\; \impline_{\ver}\ \uplus\ \impline_{\hor}$  
such that every point of $P$ lies on at least one line of $\impline$.

Intuitively, a solution by at most $k$ rectangles must ``use'' many of these lines as rectangle sides; otherwise too many points
on a line would have to be covered by rectangles that merely \emph{cross} the line, which is impossible with only $k$ rectangles.
We capture this alignment information by the notion of a \emph{skeleton}.


Let $\cR=\{R_1,\dots,R_k\}$ be a family of axis-parallel rectangles, where $R_i \;=\; [\ell_i,r_i]\times[b_i,t_i]$,
and denote its four sides by $R_i(L)$, $R_i(R)$, $R_i(B)$, and $R_i(T)$ (left, right, bottom, top).

\paragraph*{Supporting line of a side.}
Since each side $S$ of an axis-parallel rectangle is a (closed) axis-parallel line segment, it lies on a unique
axis-parallel line. We call this line the \emph{supporting line} of $S$ and denote it by $\suppline(S)$.
Formally,
\[
\suppline\bigl(R_i(L)\bigr)=\{(x,y)\in\mathbb{R}^2 \mid x=\ell_i\},\qquad
\suppline\bigl(R_i(R)\bigr)=\{(x,y)\in\mathbb{R}^2 \mid x=r_i\},
\]
\[
\suppline\bigl(R_i(B)\bigr)=\{(x,y)\in\mathbb{R}^2 \mid y=b_i\},\qquad
\suppline\bigl(R_i(T)\bigr)=\{(x,y)\in\mathbb{R}^2 \mid y=t_i\}.
\]
In particular, $\suppline(S)$ is vertical for $S\in\{R_i(L),R_i(R)\}$ and horizontal for
$S\in\{R_i(B),R_i(T)\}$.


\begin{definition}[Skeleton]
\label{def:skeleton}
A \emph{skeleton} is a function
\[
\sigma:\ \{(i,\side)\mid i\in[k],\ \side\in\{L,R,B,T\}\}\ \to\ \impline \cup \{\bot\}
\]
such that:
\begin{enumerate}
\item if $\sigma(i,L) \neq \bot $ then $\sigma(i,L) \in \impline_{\ver}$ (and similarly  if $\sigma(i,R) \neq \bot $ then $\sigma(i,R) \in \impline_{\ver}$);

\item if $\sigma(i,B) \neq \bot $ then $\sigma(i,B) \in \impline_{\hor}$ (and similarly  if $\sigma(i,T) \neq \bot $ then $\sigma(i,T) \in \impline_{\hor}$);

\item (No side-identification) $\sigma(i,L)\neq \sigma(i,R)$ whenever both are lines in $\impline_{\ver}$, and
      $\sigma(i,B)\neq \sigma(i,T)$ whenever both are lines in $\impline_{\hor}$.
\item If $\sigma(i,\side)=\bot$, it means that the corresponding side of $R_i$ is \emph{not} aligned with any line of $\impline$.
\end{enumerate}

\end{definition}

\begin{definition}[Compatibility]
\label{def:compatible-skeleton}
Let $\cR=\{R_1,\dots,R_k\}$ be a family of rectangles. We say that $\cR$ is \emph{compatible} with a skeleton $\sigma$
if for every $i\in[k]$ and every $\side\in\{L,R,B,T\}$:
\begin{itemize}
\item if $\sigma(i,\side)\in \impline$, then $\suppline(R_i(\side))=\sigma(i,\side)$;
\item if $\sigma(i,\side)=\bot$, then $\suppline(R_i(\side))\notin \impline$.
\end{itemize}
\end{definition}


We next bound the number of skeletons that the algorithm needs to enumerate.

\begin{lemma}[Number of skeletons]
\label{lem:number-of-skeletons}
Let $|\impline|\le 4k$. The number of skeletons (as in \Cref{def:skeleton}) is at most $(4k+1)^{4k}=k^{\Oh(k)}$.
\end{lemma}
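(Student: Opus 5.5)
The bound follows from a direct counting argument: a skeleton is just a function on a finite domain, so I will bound the size of the domain and the number of admissible values at each point, and then multiply.

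\textbf{Step 1 (domain).} By \Cref{def:skeleton}, the domain of a skeleton $\sigma$ is the set $\{(i,\side)\mid i\in[k],\ \side\in\{L,R,B,T\}\}$. This set has exactly $4k$ elements.

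\textbf{Step 2 (values per element).} For each pair $(i,\side)$, the value $\sigma(i,\side)$ lies in $\impline\cup\{\bot\}$. Conditions (1) and (2) only restrict this further, to $\impline_{\ver}\cup\{\bot\}$ or $\impline_{\hor}\cup\{\bot\}$. Since $|\impline|\le 4k$, each pair therefore has at most $4k+1$ admissible values. Condition (3), the no side-identification requirement, only removes functions, so it can be ignored when proving an upper bound.

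\textbf{Step 3 (multiply).} The choices at different pairs are made independently, so the number of skeletons is at most $(4k+1)^{4k}$. Finally,
\[
(4k+1)^{4k}=2^{4k\log_2(4k+1)}=2^{\Oh(k\log k)}=k^{\Oh(k)},
\]
using $\log(4k+1)=\Oh(\log k)$ for $k\ge 2$; the cases $k\le 1$ are constant. This matches the claimed bound.

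No step is a genuine obstacle. The only point needing care is to note explicitly that the typing conditions (1)–(2) and condition (3) restrict the set of functions. Consequently, counting all functions from the $4k$-element domain to the $(4k+1)$-element set $\impline\cup\{\bot\}$ gives a valid upper bound.
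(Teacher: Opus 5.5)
Your proposal is correct and matches the paper's proof. Both count the $4k$ rectangle sides, give each at most $|\impline|+1\le 4k+1$ values, and note that the validity conditions of \Cref{def:skeleton} only remove functions, giving $(4k+1)^{4k}=k^{\Oh(k)}$.
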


\begin{proof}
There are $4k$ sides in total (four sides for each of $k$ rectangles).
For each side, the skeleton chooses either a specific line from $\impline$ or the special symbol $\bot$.
Thus there are at most $(|\impline|+1)^{4k}\le (4k+1)^{4k}$ assignments.

The additional consistency conditions in \Cref{def:skeleton} only \emph{restrict} assignments, so this is an upper bound
on the number of valid skeletons.
Finally, $(4k+1)^{4k}=k^{\Oh(k)}$.
\end{proof}

We now justify that skeleton guessing is without loss of generality: every solution induces a skeleton.

\begin{lemma}[Every solution admits a compatible skeleton]
\label{lem:solution-induces-skeleton}
Let $\cR=\{R_1,\dots,R_k\}$ be any family of  axis parallel rectangles.
Then there exists a skeleton $\sigma$ such that $\cR$ is compatible with $\sigma$.
\end{lemma}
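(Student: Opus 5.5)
We define $\sigma$ directly from $\cR$ and then check the conditions of \Cref{def:skeleton} and \Cref{def:compatible-skeleton}. For each $i\in[k]$ and $\side\in\{L,R,B,T\}$, set $\sigma(i,\side):=\suppline(R_i(\side))$ if this supporting line belongs to $\impline$, and $\sigma(i,\side):=\bot$ otherwise. This is well defined because $\suppline(R_i(\side))$ is a single, uniquely determined axis-parallel line.

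Next we verify the skeleton conditions. Conditions (1) and (2) hold because $\suppline(R_i(L))$ and $\suppline(R_i(R))$ are vertical lines, so whenever they lie in $\impline$ they lie in $\impline_{\ver}$. Similarly, the supporting lines of the bottom and top sides are horizontal and hence lie in $\impline_{\hor}$ when they belong to $\impline$. Condition (4) is interpretive and matches the definition of $\sigma$. For condition (3), note that the lines $x=\ell_i$ and $x=r_i$ are distinct whenever $\ell_i<r_i$, and likewise $y=b_i$ and $y=t_i$ are distinct whenever $b_i<t_i$. Compatibility is then immediate from the construction: if $\sigma(i,\side)\in\impline$, it equals $\suppline(R_i(\side))$, and if $\sigma(i,\side)=\bot$, the supporting line is not in $\impline$ by definition.

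The only genuine obstacle is condition (3) for degenerate rectangles, where $\ell_i=r_i$ or $b_i=t_i$ (segments or points), which the notation $R=[a,b]\times[c,d]$ does not a priori exclude. I would handle this by first arguing that we may assume every rectangle is non-degenerate. A degenerate rectangle's boundary is a segment or a point. Because $P$ is finite, we can replace it by a thin non-degenerate rectangle having one side on the same line, with its extent along that line unchanged and its width chosen smaller than the minimum positive coordinate gap of $P$. This replacement still covers every point the degenerate rectangle covered, and it leaves the size of the solution unchanged. If $\cR$ is an arbitrary family rather than a solution, I would instead note that \Cref{lem:line-rect-intersection} already assumes $a<b$ and $c<d$, so rectangles are implicitly non-degenerate throughout. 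Finally, if $|\cR|<k$, we pad the family (or restrict the domain of $\sigma$) so that the indexing over $[k]$ makes sense.
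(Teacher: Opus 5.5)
Your proof is correct and matches the paper's: set $\sigma(i,\side)$ to $\suppline(R_i(\side))$ when that line lies in $\impline$ and to $\bot$ otherwise, so type-consistency and compatibility hold by construction and non-identification follows from $\ell_i<r_i$, $b_i<t_i$. Your discussion of degenerate rectangles spells out an assumption the paper uses without comment (it just asserts that a rectangle cannot have both vertical sides on one line); since the thin-rectangle replacement changes $\cR$, the non-degeneracy convention, not the replacement, is the right justification for the lemma as literally stated.
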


\begin{proof}
Define $\sigma$ side-by-side.
Fix $i\in[k]$ and $\side\in\{L,R,B,T\}$, and let $\ell=\suppline(R_i(\side))$ be the supporting line of that side.

If $\ell\in\impline$, set $\sigma(i,\side):=\ell$; otherwise set $\sigma(i,\side):=\bot$.
This definition ensures the compatibility conditions of \Cref{def:compatible-skeleton} by construction.

Moreover, $\ell$ is vertical for $\side\in\{L,R\}$ and horizontal for $\side\in\{B,T\}$, so the type-consistency
requirements in \Cref{def:skeleton} are satisfied.
Finally, a rectangle cannot have its left and right sides on the same vertical line, and cannot have its bottom and top sides
on the same horizontal line; hence the non-identification condition in \Cref{def:skeleton}(3) holds whenever both sides
are mapped to lines of $\impline$.
\end{proof}

\begin{lemma}[Enumerating all valid skeletons]
\label{lem:enumerate-skeletons}
There exists an algorithm that, given $(P,k)$ and the line family $\impline=\impline_{\ver}\uplus\impline_{\hor}$ with
$|\impline|\le 4k$, enumerates the set of all valid skeletons (as in \Cref{def:skeleton}) in time
\[
k^{\Oh(k)}\cdot |P|^{\Oh(1)}.
\]
Moreover, the number of enumerated skeletons is at most $(4k+1)^{4k}=k^{\Oh(k)}$, and by \Cref{lem:solution-induces-skeleton},
if $(P,k)$ admits a solution by at most $k$ rectangles, then at least one enumerated skeleton is compatible with some solution.
\end{lemma}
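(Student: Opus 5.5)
The proof is a direct brute-force enumeration, followed by a filter and an appeal to the two lemmas just proved. First fix an indexing of $\impline=\impline_{\ver}\uplus\impline_{\hor}$ as a list of at most $4k$ lines; this is available because $\impline$ was computed explicitly via \Cref{lem:line-cover}. Add the extra symbol $\bot$. The domain of a skeleton is the set of $4k$ pairs $(i,\side)$ with $i\in[k]$ and $\side\in\{L,R,B,T\}$. The algorithm iterates over all functions from this domain to $\impline\cup\{\bot\}$, for example by running a $4k$-digit counter in base $|\impline|+1\le 4k+1$. This produces exactly $(|\impline|+1)^{4k}\le(4k+1)^{4k}$ candidate functions.

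For each candidate $\sigma$, the algorithm checks the conditions of \Cref{def:skeleton} and discards $\sigma$ if any of them fails.
\begin{itemize}
\item \emph{Type consistency.} If $\sigma(i,L)$ or $\sigma(i,R)$ is not $\bot$, it must belong to $\impline_{\ver}$. If $\sigma(i,B)$ or $\sigma(i,T)$ is not $\bot$, it must belong to $\impline_{\hor}$.
\item \emph{No side-identification.} We need $\sigma(i,L)\neq\sigma(i,R)$ whenever both are lines, and $\sigma(i,B)\neq\sigma(i,T)$ whenever both are lines.
\end{itemize}
Condition (4) of \Cref{def:skeleton} is only an interpretation of $\bot$, so it needs no check. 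Each test costs $\Oh(1)$ per side once the lines are tagged as vertical or horizontal, so a candidate is checked in $\Oh(k)$ time. The output is precisely the set of valid skeletons, and its size is at most $(4k+1)^{4k}$ by \Cref{lem:number-of-skeletons}.

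For the running time, generating each candidate and moving the counter takes $\Oh(k)$ time, and each line can be stored with its coordinate, which has size polynomial in the input. The total time is therefore $(4k+1)^{4k}\cdot\Oh(k)\cdot|P|^{\Oh(1)}=k^{\Oh(k)}|P|^{\Oh(1)}$. Computing $\impline$ beforehand costs only polynomial time.

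It remains to show that some enumerated skeleton is compatible with a solution. Suppose $(P,k)$ admits a solution $\cR$ with $k'\le k$ rectangles. Since skeletons are indexed by exactly $k$ rectangles, pad $\cR$ to $k$ rectangles by repeating one of its rectangles, or by adding arbitrary grid-aligned rectangles; the result is still a solution. \Cref{lem:solution-induces-skeleton} then gives a valid skeleton $\sigma$ compatible with the padded family. Because the enumeration outputs every valid skeleton, $\sigma$ is among the enumerated ones.

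The only point that needs care is this padding to exactly $k$ rectangles, together with the fact that $\bot$-compatibility refers to membership in the fixed set $\impline$. Neither affects validity, because validity depends only on type consistency and non-identification, and these hold for any genuine rectangle. So I expect no real obstacle in this proof.
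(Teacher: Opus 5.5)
Your proposal is correct and takes essentially the same route as the paper. You brute-force enumerate all $(|\impline|+1)^{4k}\le(4k+1)^{4k}$ maps, filter them by the conditions of \Cref{def:skeleton} in time polynomial in $k$, and appeal to \Cref{lem:solution-induces-skeleton} to get a compatible skeleton. Your remark about padding a solution with $k'<k$ rectangles to exactly $k$ indexed rectangles, for example by repeating one, is a detail the paper leaves implicit, and you handle it correctly.
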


\begin{proof}
We enumerate all assignments
\[
\sigma:\ \{(i,\side)\mid i\in[k],\ \side\in\{L,R,B,T\}\}\ \to\ \impline \cup \{\bot\}
\]
and keep only those satisfying the validity conditions in \Cref{def:skeleton}.
There are $4k$ arguments and each has at most $|\impline|+1\le 4k+1$ choices, hence the total number of assignments is at most
$(4k+1)^{4k}=k^{\Oh(k)}$ (cf.\ \Cref{lem:number-of-skeletons}).
Checking whether a candidate $\sigma$ satisfies the conditions of \Cref{def:skeleton} is polynomial in $k$ (and thus dominated by
the enumeration).
Since $\impline$ itself is computed in polynomial time in $|P|$ (by \Cref{lem:line-cover}), the total running time is
$k^{\Oh(k)}\cdot |P|^{\Oh(1)}$.

Finally, \Cref{lem:solution-induces-skeleton} implies that every solution $\cR$ admits a compatible skeleton, and therefore if
$(P,k)$ is a \yes-instance, then at least one enumerated skeleton is compatible with some solution.
\end{proof}

\begin{figure}[t]
  \centering
  \resizebox{\linewidth}{!}{%
  \begin{tikzpicture}[scale=1]

    \draw[thick] (0,0) -- (12,0);
    \node[above] at (12,0) {$h$};

    \coordinate (L1) at (1.0,0);
    \coordinate (L2) at (3.0,0);
    \coordinate (R1) at (5.0,0);
    \coordinate (L3) at (7.0,0);
    \coordinate (R2) at (9.0,0);
    \coordinate (R3) at (11.0,0);

    \foreach \P/\name in {L1/1_{\sf left},L2/2_{\sf left},R1/1_{\sf right},L3/3_{\sf left},R2/2_{\sf right},R3/3_{\sf right}} {
      \draw (\P) -- ++(0,0.25);
      \node[above] at ($( \P ) + (0,0.25)$) {$\name$};
    }

    \draw[blue,very thick] (L1) ++(0,0.3) -- ($(R1)+(0,0.3)$);
    \node[blue,above] at ($(L1)!0.5!(R1)+(0,0.55)$) {$R_1$};

    \draw[green!70!black,very thick] (L2) ++(0,1.1) -- ($(R2)+(0,1.1)$);
    \node[green!70!black,above] at ($(L2)!0.5!(R2)+(0,1.4)$) {$R_2$};

    \draw[red!80!black,very thick] (L3) ++(0,0.3) -- ($(R3)+(0,0.3)$);
    \node[red!80!black,above] at ($(L3)!0.5!(R3)+(0,0.55)$) {$R_3$};

    \foreach \P/\idx in {L1/1,L2/2,R1/3,L3/4,R2/5,R3/6} {
      \node[below] at (\P) {\scriptsize $\idx$};
    }


  \end{tikzpicture}%
  }
 \caption{A horizontal line $h\in\impline_{\hor}$ and three rectangles $R_1,R_2,R_3$ whose top or bottom side lies on $h$ under the fixed skeleton $\sigma$. 
  Each such rectangle induces a segment on $h$ with endpoint symbols $i_{\sf left}$ and $i_{\sf right}$, and
  $\pi(\sigma,h)$ records the left-to-right order of all endpoint symbols, subject to $i_{\sf left}$ appearing before
  $i_{\sf right}$ for every $i\in\indx(\sigma,h)$. For example, here $\pi(\sigma,h)(1_{\sf left})=1 < \pi(\sigma,h)(2_{\sf left})=2 < \pi(\sigma,h)(1_{\sf right})=3$ and
      $\pi(\sigma,h)(3_{\sf left})=4 < \pi(\sigma,h)(2_{\sf right})=5 < \pi(\sigma,h)(3_{\sf right})=6$.
  \label{figOrdering}}
\end{figure}

\subsection{Fixing a skeleton and ordering endpoints}
\label{subsec:fix-skeleton-order}
From now on, we fix a valid skeleton $\sigma$ as in \Cref{def:skeleton}.  Please refer to Figure~\ref{figOrdering} for an illustration. We continue to use the notation
$\sigma(i,L)$, $\sigma(i,R)$, $\sigma(i,B)$, and $\sigma(i,T)$ for the values of the skeleton on the four sides of
rectangle $R_i$ (left, right, bottom, top). In particular,
\[
\sigma(i,L),\sigma(i,R)\in \impline_{\ver}\cup\{\bot\}
\qquad\text{and}\qquad
\sigma(i,B),\sigma(i,T)\in \impline_{\hor}\cup\{\bot\}.
\]

\medskip
\noindent
Fix an arbitrary horizontal line $h\in \impline_{\hor}$.
The skeleton tells us exactly which rectangles have their \emph{bottom} or \emph{top} side aligned with $h$.
We collect their indices as
\[
\indx(\sigma,h)\ :=\ \{\, i\in[k]\mid \sigma(i,B)=h \ \text{or}\ \sigma(i,T)=h \,\},
\qquad
m(\sigma,h)\ :=\ |\indx(\sigma,h)|.
\]

For every $i\in\indx(\sigma,h)$, the intersection of $h$ with the corresponding side of $R_i$ is a horizontal segment and
hence has a \emph{left endpoint} and a \emph{right endpoint}. We represent these endpoints symbolically by introducing two
endpoint labels per such $i$:
\begin{align*}
&\ptsend_{\sf left}(\sigma,h)  \;:=\; \{\, i_{\sf left} \mid i\in \indx(\sigma,h)\,\},\\
&\ptsend_{\sf right}(\sigma,h) \;:=\; \{\, i_{\sf right}\mid i\in \indx(\sigma,h)\,\},\\
&\ptsend(\sigma,h)             \;:=\; \ptsend_{\sf left}(\sigma,h)\ \uplus\ \ptsend_{\sf right}(\sigma,h).
\end{align*}
Clearly, $|\ptsend_{\sf left}(\sigma,h)|=|\ptsend_{\sf right}(\sigma,h)|=m(\sigma,h)$ and
$|\ptsend(\sigma,h)|=2m(\sigma,h)$.
(We treat the symbols $i_{\sf left}$ and $i_{\sf right}$ as abstract labels; later we will guess their geometric order along $h$.)

\begin{definition}[Endpoint ordering on a line]
\label{def:endpoint-order}
An \emph{endpoint ordering} on $h$ (with respect to $\sigma$) is a bijection
\[
\pi(\sigma,h):\ \ptsend(\sigma,h)\ \to\ [\,2m(\sigma,h)\,],
\]
which induces the total order
\begin{equation}\label{eq:endpoint-ordering}
\pi(\sigma, h)^{-1}(1)\ \prec\ \pi(\sigma, h)^{-1}(2)\ \prec\ \cdots\ \prec\ \pi(\sigma, h)^{-1}\!\bigl(2m(\sigma,h)\bigr)
\end{equation}
on the symbols in $\ptsend(\sigma,h)$.
\end{definition}

\begin{definition}[Valid endpoint ordering]
\label{def:valid-endpoint-order}
An endpoint ordering $\pi(\sigma,h)$ is \emph{valid} if for every $i\in\indx(\sigma,h)$ we have
\[
\pi(\sigma,h)(i_{\sf left})\ <\ \pi(\sigma,h)(i_{\sf right}),
\]
that is, the left-endpoint symbol of $i$ appears before its right-endpoint symbol in the order~\eqref{eq:endpoint-ordering}.
\end{definition}

\begin{lemma}[Number of valid endpoint orderings]
\label{lem:number-valid-endptorder}
The number of valid endpoint orderings for $h$ is at most
\[
\bigl(2m(\sigma,h)\bigr)!\ \le\ (2k)!.
\]
\end{lemma}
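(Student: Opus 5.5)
The plan is a direct counting argument: a valid endpoint ordering is a particular kind of bijection, so the number of valid orderings is at most the number of all bijections, and this is bounded using the bound on $m(\sigma,h)$.

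First, by \Cref{def:endpoint-order}, an endpoint ordering on $h$ is a bijection $\pi(\sigma,h):\ptsend(\sigma,h)\to[2m(\sigma,h)]$. We noted earlier that $|\ptsend(\sigma,h)| = 2m(\sigma,h)$, so the number of such bijections is exactly $(2m(\sigma,h))!$. Validity in the sense of \Cref{def:valid-endpoint-order} only adds the constraints $\pi(\sigma,h)(i_{\sf left})<\pi(\sigma,h)(i_{\sf right})$. The valid orderings therefore form a subset of all endpoint orderings, which gives the first inequality. If a sharper count is wanted, the valid orderings number exactly $(2m)!/2^{m}$: the map swapping $i_{\sf left}$ and $i_{\sf right}$, independently for each $i$, partitions all bijections into classes of size $2^m$, and each class contains exactly one valid ordering. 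This refinement is not needed for the statement.

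Second, we bound $m(\sigma,h)$. By definition $\indx(\sigma,h)\subseteq[k]$, since the skeleton's domain consists of pairs $(i,\side)$ with $i\in[k]$. Hence $m(\sigma,h)\le k$, so $2m(\sigma,h)\le 2k$. Monotonicity of the factorial then yields $(2m(\sigma,h))!\le(2k)!$.

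The only subtle point concerns a single rectangle with both its top and bottom side on $h$. This would make $i$ contribute two segments while $\indx$ counts $i$ only once. The no-side-identification condition (3) of \Cref{def:skeleton} rules this out. Each $i\in\indx(\sigma,h)$ therefore contributes exactly one segment and exactly two symbols, which is consistent with $|\ptsend(\sigma,h)|=2m(\sigma,h)$. Beyond this check the argument is routine, and I expect no real obstacle.
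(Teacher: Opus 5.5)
Your argument is correct and matches the paper's proof: the valid orderings form a subset of the $(2m(\sigma,h))!$ bijections, and $m(\sigma,h)\le k$ because the aligned indices form a subset of $[k]$. Your exact count $(2m(\sigma,h))!/2^{m(\sigma,h)}$ via the free swapping action and your remark on the no-side-identification condition are both correct, though neither is needed for the stated bound.
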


\begin{proof}
There are $\bigl(2m(\sigma,h)\bigr)!$ total bijections from $\ptsend(\sigma,h)$ to $[\,2m(\sigma,h)\,]$,
and validity only restricts this set. Hence the number of valid orderings is at most
$\bigl(2m(\sigma,h)\bigr)! \le (2k)!$, since $m(\sigma,h)\le k$.
\end{proof}

\noindent
The analogous definitions for a vertical line $v\in \impline_{\ver}$ are obtained by replacing $(B,T,\impline_{\hor},h)$
with $(L,R,\impline_{\ver},v)$, and swapping ``left/right endpoint'' with ``bottom/top endpoint''.
When the skeleton $\sigma$ is clear from the context, we will omit it from the notation (e.g., write $\indx(h)$ and $\ptsend(h)$ instead of $\indx(\sigma,h)$ and $\ptsend(\sigma,h)$).

\subsection{Blocks in an endpoint ordering}
\label{subsec:blocks-endpoint-order}
From now on we fix a skeleton $\sigma$, and for every line in $\impline$ we guess an endpoint ordering.
Fix an arbitrary horizontal line $h\in\impline_{\hor}$ and a guess for a \emph{valid} endpoint ordering of
$\ptsend(\sigma,h)$, induced by the bijection $\pi(\sigma,h)$.
Recall that $m(\sigma,h)=|\indx(\sigma,h)|$ and $|\ptsend(\sigma,h)|=2m(\sigma,h)$.
For readability, in this subsection we write $m:=m(\sigma,h)$ and $\pi:=\pi(\sigma,h)$.

\paragraph*{The Dyck word of an ordering (balanced-parentheses condition).}
As in \Cref{subsec:fix-skeleton-order}, the ordering $\pi$ induces a word
$W=W(\sigma,h)=w_1w_2\cdots w_{2m}\in\{\textsf{left},\textsf{right}\}^{2m}$ by setting
$w_j=\textsf{left}$ if $\pi^{-1}(j)\in\ptsend_{\sf left}(\sigma,h)$ and
$w_j=\textsf{right}$ if $\pi^{-1}(j)\in\ptsend_{\sf right}(\sigma,h)$.
Since $\pi$ is a valid endpoint ordering, every prefix of $W$ contains at least as many \textsf{left}'s as \textsf{right}'s.
Moreover, $W$ contains exactly $m$ occurrences of each symbol.
Thus $W$ is a Dyck word (see~\cite{baez2015dyck} for a quick definition) over $\Sigma=\{\textsf{left},\textsf{right}\}$. We {\em note} that the word induced by \(\pi\) need not be injective; that is, two different orderings may lead to the same word.

\begin{definition}[Complete prefixes]
\label{def:complete-prefix}
For every $j\in[2m]$, let
\[
L(j)\ :=\ \#_{\textsf{left}}\bigl(W[1..j]\bigr)
\qquad\text{and}\qquad
R(j)\ :=\ \#_{\textsf{right}}\bigl(W[1..j]\bigr).
\]
We call the prefix $W[1..j]$ \emph{complete} if $L(j)=R(j)$.
\end{definition}

By validity, for all $j\in[2m]$ we have $L(j)\ge R(j)$, and the full word is complete:
$L(2m)=R(2m)=m$.
Equivalently, $W[1..j]$ is complete iff there is no $i\in\indx(\sigma,h)$ with
$\pi(i_{\sf left})\le j < \pi(i_{\sf right})$.

\begin{definition}[Minimal complete blocks]
\label{def:min-complete-block}
Let $0=j_0< j_1<\cdots< j_q=2m$ be the indices of the complete prefixes of $W$,
defined by letting $j_r$ be the smallest index larger than $j_{r-1}$ with $L(j_r)=R(j_r)$.
For each $r\in[q]$, define
\[
\block_r\ :=\ W[j_{r-1}+1 \, ..\, j_r].
\]
We call $\block_1,\ldots,\block_q$ the \emph{minimal complete blocks} (equivalently, \emph{minimal Dyck blocks}) of the ordering.
\end{definition}

\begin{lemma}[Unique decomposition into minimal Dyck blocks]
\label{lem:unique-min-dyck-blocks}
Let $W\in\{\textsf{left},\textsf{right}\}^{*}$ be a Dyck word, and let
$\block_1,\ldots,\block_q$ be the blocks from \Cref{def:min-complete-block}.
Then:
\begin{enumerate}
\item each $\block_r$ is a nonempty Dyck word and no proper prefix of $\block_r$ is a Dyck word (i.e., $\block_r$ is \emph{minimal});
\item $W=\block_1\block_2\cdots \block_q$;
\item this decomposition is unique among all decompositions of $W$ into minimal Dyck words.
\end{enumerate}
\end{lemma}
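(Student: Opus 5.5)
The proof works entirely with the height function $D(j):=L(j)-R(j)$ for $j\in[2m]_0$, with $D(0)=0$. Since $W$ is a Dyck word, $D(j)\ge 0$ for all $j$, $D(2m)=0$, and each step changes $D$ by exactly $\pm 1$. The complete prefixes are exactly the indices $j$ with $D(j)=0$. The indices $j_0<j_1<\cdots<j_q$ of \Cref{def:min-complete-block} therefore enumerate all zeros of $D$ in increasing order. This enumeration is well defined and terminates at $j_q=2m$, because $D(2m)=0$ and each $j_r$ is simply the next zero after $j_{r-1}$.

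For item (1), I would fix $r$ and look at $\block_r=W[j_{r-1}+1..j_r]$. It is nonempty because $j_r>j_{r-1}$. Within the block, the height relative to the start is $D(j)-D(j_{r-1})=D(j)$. This quantity is nonnegative everywhere and equals $0$ at the end, so $\block_r$ is a Dyck word. If some proper nonempty prefix $W[j_{r-1}+1..j]$ with $j<j_r$ were a Dyck word, then $D(j)=0$ at an index strictly between $j_{r-1}$ and $j_r$. That contradicts the choice of $j_r$ as the smallest zero after $j_{r-1}$. Item (2) is immediate, because the intervals $(j_{r-1},j_r]$ partition $[2m]$.

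For item (3), which is the only step with any content, I would take an arbitrary decomposition $W=U_1\cdots U_p$ into minimal Dyck words. Let $0=i_0<i_1<\cdots<i_p=2m$ be the cumulative lengths. Since each $U_s$ is Dyck, the values $D(i_s)$ are all $0$ by induction on $s$.

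Conversely, I claim every zero of $D$ appears among the $i_s$. Suppose $D(j)=0$ with $i_{s-1}<j<i_s$. Then the prefix of $U_s$ of length $j-i_{s-1}$ has relative height $D(j)-D(i_{s-1})=0$. All of its relative heights are nonnegative, since they are heights of a prefix of the Dyck word $U_s$. Hence that prefix is a proper nonempty Dyck prefix, which contradicts the minimality of $U_s$.

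So $\{i_0,\ldots,i_p\}$ is exactly the zero set of $D$, which equals $\{j_0,\ldots,j_q\}$. Hence $p=q$, $i_r=j_r$ for all $r$, and $U_r=\block_r$.

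I expect no real obstacle. The only care needed is to note that minimality refers to proper nonempty prefixes, since the empty prefix is trivially Dyck.
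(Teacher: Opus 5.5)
Your proof is correct and follows the paper's argument. Items (1) and (2) are handled the same way, via the complete prefixes (your zeros of $D=L-R$). For (3) you identify the cut points of any minimal Dyck decomposition with the full zero set of $D$ at once, which is a non-iterative packaging of the paper's ``the first block must end at the first complete index; strip it and iterate'' induction.
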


\begin{proof}
By definition of $j_r$, the prefix $W[1..j_r]$ is a Dyck word and $j_r$ is the \emph{first} complete index after $j_{r-1}$.
Hence $\block_r=W[j_{r-1}+1..j_r]$ is a nonempty Dyck word. Minimality follows because if $\block_r$ had a proper Dyck prefix,
then $W$ would have a complete prefix strictly between $j_{r-1}$ and $j_r$, contradicting the choice of $j_r$.
The concatenation $W=\block_1\cdots\block_q$ is immediate from $j_q=2m$.
For uniqueness, let $W=Y_1\cdots Y_p$ be any decomposition into minimal Dyck words.
Then $|Y_1|$ must be the first complete index $j_1$ (otherwise $Y_1$ is not Dyck or not minimal), so $Y_1=\block_1$.
Stripping this prefix and iterating yields $p=q$ and $Y_r=\block_r$ for all $r$.
\end{proof}

\paragraph*{Blocks and endpoint symbols.}
Each block $\block_r$ corresponds to a contiguous sub-order of endpoint symbols:
it starts at position $j_{r-1}+1$ and ends at position $j_r$ in the total order induced by $\pi$.
Equivalently, the symbols in block $\block_r$ are exactly
\[
\pi^{-1}(j_{r-1}+1)\ \prec\ \pi^{-1}(j_{r-1}+2)\ \prec\ \cdots\ \prec\ \pi^{-1}(j_r).
\]
In particular, $\pi^{-1}(1)$ is the first symbol of $\block_1$, and $\pi^{-1}(2m)$ is the last symbol of $\block_q$.

\subsection{Gaps in an endpoint ordering}
\label{subsec:gaps-endpoint-order}

Recall that we have fixed a skeleton $\sigma$ and, for each line in $\impline$, we guess a valid endpoint ordering.
Fix a horizontal line $h\in\impline_{\hor}$, and abbreviate $m:=m(\sigma,h)$ and $\pi:=\pi(\sigma,h)$.
Let $W=W(\sigma,h)\in\{\textsf{left},\textsf{right}\}^{2m}$ be the Dyck word induced by~$\pi$ as in the previous subsection,
and let $W=X_1X_2\cdots X_q$ be its (unique) decomposition into minimal Dyck blocks.

\paragraph*{Gaps.}
We define the \emph{gaps} of this decomposition to be the $q{+}1$ regions:
one before $X_1$, one between each consecutive pair $X_r,X_{r+1}$, and one after $X_q$.
Accordingly, we set
\[
\gapnum(\pi)\ :=\ q+1.
\]

\paragraph*{Guessing the number of exceptional points per gap.}
Let $K$ be an integer with $0\le K\le 2k$ (intuitively, $K$ is the total number of exceptional points on $h$).
A \emph{gap assignment} is a function
\begin{equation}\label{eq:sumgaps}
\gapfn:\ [\gapnum(\pi)] \to \{0,1,\dots,K\}
\quad\text{such that}\quad
\sum_{j=1}^{\gapnum(\pi)} \gapfn(j)=K.
\end{equation}

\begin{definition}[Gap vector]
\label{def:gap-distribution}
A \emph{gap vector} (for the fixed $(\sigma,h,\pi)$) is a pair $(K,\gapfn)$ where
$K\in\mathbb{Z}_{\ge 0}$ and $\gapfn$ satisfies~\eqref{eq:sumgaps}.
\end{definition}

\begin{lemma}[Counting gap assignments for fixed $K$]
\label{lem:gapvector-count}
Fix $K\ge 0$. The number of functions $\gapfn:[q+1]\to\{0,\dots,K\}$ satisfying
$\sum_{j=1}^{q+1}\gapfn(j)=K$ is exactly $\binom{K+q}{q}$.
\end{lemma}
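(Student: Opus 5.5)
This is the standard stars-and-bars count, and I would prove it by an explicit bijection. The observation that drives it is that the condition $\sum_{j}\gapfn(j)=K$ forces each value $\gapfn(j)$ to lie in $\{0,\dots,K\}$. So the codomain restriction $\{0,\dots,K\}$ in the statement imposes no extra constraint. The set to be counted is therefore exactly the set of weak compositions of $K$ into $q+1$ ordered nonnegative parts, and I will count those.

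To each such $\gapfn$ I associate a binary string of length $K+q$. It consists of $\gapfn(1)$ stars, then a bar, then $\gapfn(2)$ stars, then a bar, and so on, ending with $\gapfn(q+1)$ stars. This uses exactly $q$ bars and $K$ stars.

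Conversely, take any string of length $K+q$ with exactly $q$ bars. The bars cut it into $q+1$ (possibly empty) maximal runs of stars. Reading the run lengths from left to right gives a function $[q+1]\to\mathbb{Z}_{\ge 0}$ with sum $K$. These two maps are mutually inverse, which needs only a one-line check that splitting at the bars recovers the run lengths and vice versa.

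The count therefore equals the number of ways to choose which $q$ of the $K+q$ positions hold bars, namely $\binom{K+q}{q}$. The degenerate cases are handled by the same bijection. If $q=0$ there is a single gap, and the only function is $\gapfn(1)=K$, matching $\binom{K}{0}=1$. If $K=0$ only the zero function is allowed, matching $\binom{q}{q}=1$.

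I expect no real obstacle. The only point needing care is the remark that the range restriction to $\{0,\dots,K\}$ is implied by the sum constraint, so the count is unaffected by it.
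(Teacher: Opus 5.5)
Your proposal is correct and follows the same route as the paper: identify each $\gapfn$ with a weak composition of $K$ into $q+1$ nonnegative parts (the codomain bound $\{0,\dots,K\}$ being implied by the sum constraint) and count these by stars and bars to get $\binom{K+q}{q}$. You spell out the bijection and the degenerate cases explicitly, whereas the paper simply invokes the standard stars-and-bars argument.
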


\begin{proof}
Such a function $\gapfn$ is equivalently a $(q{+}1)$-tuple of nonnegative integers
$(x_1,\dots,x_{q+1})$ with $\sum_{j=1}^{q+1}x_j=K$.
By the standard stars-and-bars argument, the number of such tuples is $\binom{K+q}{q}$.
\end{proof}

\begin{lemma}[Number of gap vectors]
\label{lem:number-gap-vectors}
For a fixed line $h$, the total number of gap vectors $(K,\gapfn)$ over all choices of $K\in\{0,1,\dots,2k\}$
is at most $2^{\Oh(k)}$.
\end{lemma}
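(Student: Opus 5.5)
We bound the total count by summing the exact count from \Cref{lem:gapvector-count} over all admissible values of $K$. The plan is to write the total number of gap vectors as
\[
\sum_{K=0}^{2k}\binom{K+q}{q},
\]
and to control this sum using two facts established earlier. First, $q$ is the number of minimal Dyck blocks of $W$. Each block is nonempty and has even length, so $q\le m=m(\sigma,h)\le k$. Second, $K\le 2k$, which comes from \Cref{lem:few-exceptional}.

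The first step is to collapse the sum using the hockey-stick identity:
\[
\sum_{K=0}^{2k}\binom{K+q}{q}=\binom{2k+q+1}{q+1}.
\]
The second step is to bound this binomial coefficient crudely. Since $q+1\le k+1$, we have $2k+q+1\le 3k+1$, and so
\[
\binom{2k+q+1}{q+1}\le 2^{2k+q+1}\le 2^{3k+1}=2^{\Oh(k)}.
\]
If one prefers to avoid the identity, the bound follows just as easily term by term. Each term satisfies $\binom{K+q}{q}\le 2^{K+q}\le 2^{3k}$, and there are $2k+1$ terms. The total is therefore at most $(2k+1)\,2^{3k}=2^{\Oh(k)}$.

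For the sharper constants quoted in the overview ($8^k$ and $6.75^k$), one would instead maximize $\binom{2k+q}{q}$ over $q\le k$. The maximum occurs at $q=k$, giving $\binom{3k}{k}\le (27/4)^k$ by the entropy bound. This refinement is not needed for the lemma as stated.

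There is essentially no obstacle. The only point that needs care is justifying $q\le k$, that is, that the number of blocks is at most $m(\sigma,h)$, which in turn is at most the number of rectangles $k$. This follows directly from \Cref{lem:unique-min-dyck-blocks} and the definition of $\indx(\sigma,h)$ as a subset of $[k]$.
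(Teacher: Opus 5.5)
Your proposal is correct and follows essentially the same route as the paper: sum the stars-and-bars count $\binom{K+q}{q}$ over $K\le 2k$, using $q\le m(\sigma,h)\le k$. The only difference is inessential: you bound the terms by $2^{3k}$ (or collapse the sum via the hockey-stick identity), whereas the paper bounds each term by $\binom{3k}{k}\le 6.75^k$ and then multiplies by the $2k+1$ choices of $K$.
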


\begin{proof}
We have $0\le K\le 2k$. For each fixed $K$, Lemma~\ref{lem:gapvector-count} gives at most $\binom{K+q}{q}$ choices.
Moreover, $q\le m\le k$ because each minimal Dyck block contains at least one \textsf{left} symbol.
Hence, for every $K\le 2k$,
\[
\binom{K+q}{q}\ \le\ \binom{2k+k}{k}\ =\ \binom{3k}{k}\ \leq  6.75^k.
\]
Summing over the at most $2k{+}1$ possible values of $K$ preserves the bound  $2^{\Oh(k)}$.
\end{proof}

\paragraph*{Interpretation.}
Intuitively, $K$ is the total number of exceptional points on $h$, and $\gapfn(j)$ specifies how many of these
exceptional points are placed in the $j$-th gap (i.e., between two consecutive minimal Dyck blocks, or outside all blocks).
In the next subsection, we guess which rectangle sides (left/right sides of the solution rectangles) cover these exceptional
points.

\subsection{Realizable configurations}
\label{subsec:realizable-config}

Fix a skeleton $\sigma$ and a horizontal line $h\in\impline_{\hor}$.
Let $m:=m(\sigma,h)$ and $\pi:=\pi(\sigma,h)$ be a fixed valid endpoint ordering on $h$ as in
\Cref{subsec:fix-skeleton-order}. Let $W=W(\sigma,h)$ be the Dyck word induced by $\pi$, and let
\[
W = X_1X_2\cdots X_q
\]
be its unique decomposition into minimal Dyck blocks (\Cref{lem:unique-min-dyck-blocks}).
We denote these blocks by $\block_1,\ldots,\block_q$.

\paragraph*{Block endpoints.}
For each $r\in[q]$, let $j_r$ be the position of the last symbol of $\block_r$ in $W$ (equivalently, the last position
of $\block_r$ in the order induced by $\pi$). Define the \emph{block-end symbol}
\[
\blkend_r \;:=\; \pi^{-1}(j_r)\ \in\ \ptsend(\sigma,h).
\]
Then $j_1<\cdots<j_q=2m$, and hence
\[
\pi(\blkend_1)\ <\ \pi(\blkend_2)\ <\ \cdots\ <\ \pi(\blkend_q)=2m.
\]
Moreover, the first symbol of $\block_1$ is $\pi^{-1}(1)$ and the last symbol of $\block_q$ is $\pi^{-1}(2m)$.

\begin{definition}[Realizable configuration on a line]
\label{def:full-line-config}
Let $P_h:=\{p\in P \mid p\in h\}$ be the set of input points on $h$.
Let $\gapnum(\pi):=q+1$, and let $\gapfn:[\gapnum(\pi)]\to\mathbb{Z}_{\ge 0}$ be a gap function.

We say that the pair $(\pi,\gapfn)$ is \emph{realizable} on $h$ if there exists a function
\[
\relz:\ \ptsend(\sigma,h)\ \to\ \gridpts_X
\]
such that the following conditions hold.

\begin{enumerate}
\item \textbf{Order preservation.}
For every $j\in[2m-1]$,
\[
\relz\bigl(\pi^{-1}(j)\bigr)\ \le\ \relz\bigl(\pi^{-1}(j+1)\bigr).
\]

\item \textbf{Gap realization.}
The function $\relz$ realizes $\gapfn$ as follows.

\begin{itemize}
\item (\emph{Gap 1: before the first block})
The number of points of $P_h$ that lie strictly to the left of
$\relz\bigl(\pi^{-1}(1)\bigr)$ on $h$ is exactly $\gapfn(1)$.

\item (\emph{Gaps between blocks})
For each $r\in\{2,\ldots,q\}$, let $j_{r-1}:=\pi(\blkend_{r-1})$.
Then the number of points of $P_h$ that lie strictly between the two $x$-coordinates
\[
\relz(\blkend_{r-1})
\qquad\text{and}\qquad
\relz\bigl(\pi^{-1}(j_{r-1}+1)\bigr)
\]
(on the line $h$) is exactly $\gapfn(r)$.

\item (\emph{Gap $q{+}1$: after the last block})
The number of points of $P_h$ that lie strictly to the right of
$\relz(\blkend_q)$ on $h$ is exactly $\gapfn(q+1)$.
\end{itemize}
\end{enumerate}
\end{definition}


\subsection{Exceptional patterns}
\label{subsec:exceptional-patterns}

Fix a skeleton $\sigma$ and a horizontal line $h\in\impline_{\hor}$.
Fix a valid endpoint ordering $\pi=\pi(\sigma,h)$ of $\ptsend(\sigma,h)$ and let
$W=W(\sigma,h)$ be the induced Dyck word, with minimal Dyck-block decomposition
$W=X_1\cdots X_q$ (cf.~\Cref{lem:unique-min-dyck-blocks}).
Let $\gapnum(\pi):=q+1$ be the number of gaps.

\paragraph*{Exceptional points and the gap vector.}
Recall that a point of $P_h:=P\cap h$ is \emph{exceptional} (w.r.t.\ $h$) if it is covered by a rectangle whose boundary
\emph{does not} have a horizontal side on $h$ (equivalently, the covering occurs via a vertical side crossing $h$).
We fix a gap vector $(K,\gapfn)$, where $K$ is our guess for the number of exceptional points on $h$, and
\[
\gapfn:[\gapnum(\pi)]\to\{0,1,\ldots,K\}
\qquad\text{satisfies}\qquad
\sum_{r=1}^{\gapnum(\pi)} \gapfn(r)=K
\]
(cf.~\Cref{def:gap-distribution}).

\paragraph*{Rectangles that can create exceptional points on $h$.}
Let
\[
\indx(\sigma,h)\ :=\ \{\, i\in[k]\mid \sigma(i,B)=h \ \text{or}\ \sigma(i,T)=h \,\}.
\]
Rectangles indexed by $\indx(\sigma,h)$ have a horizontal side on $h$ and thus cover \emph{non-exceptional} points on $h$
horizontally. Only rectangles with no horizontal side on $h$ can create exceptional points. Define
\[
\overline\indx(\sigma,h)\ :=\ [k]\setminus \indx(\sigma,h).
\]
Among these, some rectangles may cross $h$ (and then their vertical sides intersect $h$). We guess the subset
\[
\indx_{\sf exp}(\sigma,h)\ \subseteq\ \overline\indx(\sigma,h)
\]
consisting of those indices $i$ for which $R_i$ intersects $h$ (equivalently, $b_i< h < t_i$ in coordinates). 

\begin{definition}[Exceptional pattern on $h$]
\label{def:exp-pat}
Fix a gap vector $(K,\gapfn)$ and a guess
$\indx_{\sf exp}(\sigma,h)\subseteq [k]\setminus\indx(\sigma,h)$.
An \emph{exceptional pattern} is a function
\[
\exppat(\sigma,h):\ [K]\ \to\ \indx_{\sf exp}(\sigma,h)\times\{L,R\}.
\]
We interpret $\exppat(\sigma,h)(r)=(i,L)$ (respectively, $(i,R)$) as the declaration that the $r$-th exceptional point on
$h$ from left to right is covered by the left (respectively, right) \emph{vertical} side of rectangle $R_i$.
\end{definition}

\paragraph*{Remark.}
The exceptional pattern only records, for each exceptional point on $h$, \emph{which} rectangle side covers it.
Any additional consistency requirements (e.g., that the chosen side actually intersects $h$ and can cover the claimed points)
will be enforced later by the constraints of the \mtcsp\ instance.




\begin{lemma}[Number of exceptional-pattern choices]
\label{lem:number-exppat}
For fixed $\sigma,h$ and $\pi$, the total number of choices of
\[
(K,\ \indx_{\sf exp}(\sigma,h),\ \exppat(\sigma,h))
\]
over all $K\in\{0,1,\ldots,2k\}$ is at most $k^{\Oh(k)}$.
\end{lemma}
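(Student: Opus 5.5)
The bound follows from a direct product count over the three components of the triple $(K,\indx_{\sf exp}(\sigma,h),\exppat(\sigma,h))$, summed over the admissible values of $K$. I will bound the number of choices for each component separately, treating the later components as chosen after the earlier ones are fixed, and then multiply.

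First, $K$ ranges over $\{0,1,\dots,2k\}$, so there are $2k+1$ choices. Second, for fixed $\sigma$ and $h$, the set $\indx_{\sf exp}(\sigma,h)$ is an arbitrary subset of $\overline\indx(\sigma,h)=[k]\setminus\indx(\sigma,h)$. This set has at most $k$ elements, so there are at most $2^k$ choices for $\indx_{\sf exp}(\sigma,h)$. Third, once $K$ and $\indx_{\sf exp}(\sigma,h)$ are fixed, an exceptional pattern is a function $[K]\to\indx_{\sf exp}(\sigma,h)\times\{L,R\}$ as in \Cref{def:exp-pat}. The codomain has size at most $2k$, so the number of such functions is at most $(2k)^K\le(2k)^{2k}$.

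Combining these, the total count is at most
\[
\sum_{K=0}^{2k} 2^k\,(2k)^{K}\ \le\ (2k+1)\cdot 2^k\cdot (2k)^{2k}\ =\ 2^{\Oh(k\log k)}\ =\ k^{\Oh(k)}.
\]
The ordering $\pi$ plays no role in the count, since none of the three components depends on it beyond being fixed.

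The only point needing care is that the gap vector is fixed separately and only constrains $K$, so we do not need to sum over gap functions here. Their count is handled by \Cref{lem:number-gap-vectors}.
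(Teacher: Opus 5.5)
Your proposal is correct and follows essentially the same argument as the paper: the same three-factor product count $(2k+1)\cdot 2^k\cdot(2k)^{2k}$, obtained by choosing $K$, then the subset $\indx_{\sf exp}(\sigma,h)\subseteq[k]\setminus\indx(\sigma,h)$, then a function $[K]\to\indx_{\sf exp}(\sigma,h)\times\{L,R\}$. Your explicit summation over $K$ and your remark that $\pi$ plays no role are harmless refinements of the paper's proof.
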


\begin{proof}
First choose $K\in\{0,1,\ldots,2k\}$, giving at most $2k+1$ choices.
Then choose $\indx_{\sf exp}(\sigma,h)\subseteq [k]\setminus \indx(\sigma,h)$, giving at most $2^k$ choices.
Finally, for each of the $K$ exceptional points, choose an index in $\indx_{\sf exp}(\sigma,h)$ and a side in $\{L,R\}$,
giving at most $(2|\indx_{\sf exp}(\sigma,h)|)^K\le (2k)^K\le (2k)^{2k}$ choices.
Thus the total number of possibilities is at most $(2k+1)\cdot 2^k \cdot (2k)^{2k}\ =\ k^{\Oh(k)}$. This concludes the proof. 
\end{proof}

\begin{lemma}[Consistency of gaps and exceptional patterns]
\label{lem:gap-pattern-consistency}
Let $(P,k)$ be a \yes-instance of \prbcshort, and let $\cR=\{R_1,\ldots,R_k\}$ be a grid-aligned solution.
Fix $h\in\impline_{\hor}$ and let $\sigma$ be the skeleton induced by $\cR$.
Then there exist:
\begin{itemize}
\item a valid endpoint ordering $\pi=\pi(\sigma,h)$ of $\ptsend(\sigma,h)$,
\item a gap vector $(K,\gapfn)$ with $K\le 2k$,
\item a set $\indx_{\sf exp}(\sigma,h)\subseteq [k]\setminus \indx(\sigma,h)$, and
\item an exceptional pattern $\exppat(\sigma,h)$,
\end{itemize}
such that:
\begin{enumerate}
\item the pair $(\pi,\gapfn)$ is realizable on $h$ in the sense of \Cref{def:full-line-config}, and
\item $\exppat(\sigma,h)$ assigns each of the $K$ exceptional points on $h$ (from left to right) to an index in
$\indx_{\sf exp}(\sigma,h)$ and one of the sides in $\{L,R\}$.
\end{enumerate}
\end{lemma}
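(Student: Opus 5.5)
The plan is to read every object directly off the fixed grid-aligned solution $\cR$; this is possible because each definition was tailored to describe the solution. First, for every $i\in\indx(\sigma,h)$, the side of $R_i$ lying on $h$ is the segment $[\ell_i,r_i]\times\{h\}$, and $\ell_i,r_i\in\gridpts_X$ by \Cref{lem:rect-discretization}. Define $\relz(i_{\sf left}):=\ell_i$ and $\relz(i_{\sf right}):=r_i$. Take $\pi$ to be the ordering of $\ptsend(\sigma,h)$ by nondecreasing $\relz$-value. To break ties, put all \textsf{left} symbols before all \textsf{right} symbols at equal coordinates, and among equal symbols of the same type use the index. Since $\ell_i<r_i$, this $\pi$ is valid in the sense of \Cref{def:valid-endpoint-order}, and order preservation in \Cref{def:full-line-config} holds by construction.

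Next define $K$ as the number of exceptional points of $P_h$ with respect to $\cR$ (\Cref{def:exceptional-point}). By \Cref{lem:few-exceptional}, $K\le 2k$. Let $W=X_1\cdots X_q$ be the minimal Dyck block decomposition from \Cref{lem:unique-min-dyck-blocks}. The blocks are then realized as maximal connected unions of aligned segments: the tie-breaking rule makes touching segments merge into one block, so consecutive blocks are separated by an open interval of $h$ that no aligned side covers. Define $\gapfn(r)$ as the number of points of $P_h$ lying strictly inside the $r$-th gap, using exactly the intervals in \Cref{def:full-line-config}.

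The main obstacle is to reconcile this with the sum condition $\sum_r\gapfn(r)=K$. A point of $P_h$ strictly inside a gap lies on no aligned side, so it must be covered by crossing and is therefore exceptional. The converse fails, however: a point inside a block that happens also to be crossed by a vertical side counts as exceptional under the literal definition. I would handle this by taking $K$ to be the number of points of $P_h$ not covered in aligned mode, that is, points lying on no segment of a rectangle in $\indx(\sigma,h)$. These are exactly the points lying in the gaps, so the sum condition holds, and they are all exceptional, so $K\le 2k$ still follows from \Cref{lem:few-exceptional}. This is the reading consistent with the interpretation paragraph after \Cref{lem:number-gap-vectors}, and I would state it explicitly.

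Finally, let $\indx_{\sf exp}(\sigma,h)$ be the set of $i\notin\indx(\sigma,h)$ with $b_i<y_h<t_i$. Each of the $K$ gap points, taken left to right, is covered by some $R_i$ with no horizontal side on $h$. By \Cref{lem:line-rect-intersection} it lies at $(\ell_i,y_h)$ or $(r_i,y_h)$ with $b_i<y_h<t_i$, so $i\in\indx_{\sf exp}$. Set $\exppat(r)=(i,L)$ or $(i,R)$ accordingly, choosing arbitrarily if several rectangles qualify. This gives item 2 and completes the construction.
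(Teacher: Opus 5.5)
Your proposal is correct and follows the same route as the paper's proof: every object ($\pi$, $\relz$, $(K,\gapfn)$, $\indx_{\sf exp}(\sigma,h)$, $\exppat(\sigma,h)$) is read off directly from $\cR$ using the true endpoint coordinates. Your choice of $K$ as the number of points of $P_h$ lying on no aligned side is actually a needed repair: the paper takes $K$ to be all exceptional points but defines $\gapfn$ by counting only those lying in gaps, so $\sum_r\gapfn(r)=K$ fails whenever an exceptional point is also covered by an aligned side inside a block (your tie-breaking rule is harmless but not needed, since an empty gap simply receives count $0$).
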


\begin{proof}
For each $i\in\indx(\sigma,h)$, the rectangle $R_i$ has either its bottom or its top side on $h$.
This induces a (possibly degenerate) horizontal segment on $h$, whose endpoints have $x$-coordinates in $\gridpts_X$.
We represent these endpoints by the symbols $i_{\sf left},i_{\sf right}\in\ptsend(\sigma,h)$.
Sorting all these endpoint positions from left to right yields a valid endpoint ordering
$\pi$ of $\ptsend(\sigma,h)$.

Let $K$ be the exact number of exceptional points of $P_h$ with respect to $\cR$.
By the exceptional-point bound on a line (cf.~\ref{lem:few-exceptional}), we have $K\le 2k$.
The solution $\cR$ determines a gap function $\gapfn$ by counting, for each of the $q+1$ gaps defined by the minimal Dyck
blocks of $W(\sigma,h)$, how many exceptional points lie in that gap; thus $(K,\gapfn)$ is a gap vector.

Define $\relz:\ptsend(\sigma,h)\to \gridpts_X$ by mapping each endpoint symbol to its true $x$-coordinate in the solution.
By construction, $\relz$ preserves the total order induced by $\pi$, and the counts of points of $P_h$ lying in the
corresponding gaps agree with $\gapfn$. Therefore $(\pi,\gapfn)$ is realizable on $h$ in the sense of
\Cref{def:full-line-config}.

Finally, let $\indx_{\sf exp}(\sigma,h)\subseteq [k]\setminus\indx(\sigma,h)$ be the indices of rectangles that cover at
least one exceptional point on $h$ via a vertical side.
Ordering the exceptional points from left to right and recording, for each such point, a rectangle index $i\in
\indx_{\sf exp}(\sigma,h)$ and whether the point lies on the left or right vertical side of $R_i$ yields an exceptional
pattern $\exppat(\sigma,h)$ as in \Cref{def:exp-pat}.
\end{proof}

\subsection{Reduction to \ddmtcsp}
\label{subsec:reduction-to-mtcsp}

We now give the full parameterized reduction from \prbcshort\ to \ddmtcsp.
Given an instance $(P,k)$ of \prbcshort, we construct (after a bounded amount of guessing that depends only on $k$)
an equivalent \ddmtcsp\ instance.

\paragraph*{Setup.}
We assume that we have already carried out the discretization (\Cref{lem:rect-discretization}) and computed, via
\Cref{lem:line-cover}, a set of grid-aligned lines
\[
\impline \;=\; \impline_{\ver}\ \uplus\ \impline_{\hor},
\qquad |\impline|\le 4k,
\]
that covers all points of $P$.
Moreover, we work only with grid-aligned rectangles, so every rectangle is of the form
$[\ell,r]\times[b,t]$ with $\ell,r\in\gridpts_X$ and $b,t\in\gridpts_Y$.

\begin{figure}[t]
\centering
\begin{tikzpicture}[x=0.85cm,y=0.85cm,>=Latex, font=\small]
  \draw[very thin,gray!35] (0,0) grid (10,7);
  \draw[thick,blue!60] (2,0) -- (2,7) node[above] {$v\in\impline_{\ver}$};
  \draw[thick,blue!60] (7,0) -- (7,7) node[above] {$v'\in\impline_{\ver}$};
  \draw[thick,blue!60] (0,2) -- (10,2) node[right] {$h\in\impline_{\hor}$};
  \draw[thick,blue!60] (0,5) -- (10,5) node[right] {$h'\in\impline_{\hor}$};

  \draw[thick] (3,1) rectangle (9,6);
  \node at (6,6.35) {$t_i$};
  \node at (6,0.65) {$b_i$};
  \node at (2.6,3.5) {$\ell_i$};
  \node at (9.4,3.5) {$r_i$};

  \fill (1,2) circle (2pt) node[below] {};
  \fill (4,2) circle (2pt);
  \fill (6,2) circle (2pt);
  \fill (8,2) circle (2pt);

\end{tikzpicture}
\caption{Example picture: grid-aligned lines in $\impline$ (blue),
     one rectangle $R_i=[\ell_i,r_i]\times[b_i,t_i]$, and points on $h$. Global view: after discretization, all points lie on a small set of grid-aligned lines $\impline$, and each rectangle is described by four coordinate variables.}
\label{fig:mtcsp-overview}
\end{figure}

\paragraph*{Canonical ordering of grid coordinates.}
After discretization (\Cref{lem:rect-discretization}), every point of $P$ lies on the grid
$\gridpts_X\times \gridpts_Y$.
We fix once and for all an order-preserving encoding of these grid coordinates into $\mathbb{N}$ as follows.
Let $\gridpts_X=\langle \alpha_1<\cdots<\alpha_{N_x}\rangle$ and $\gridpts_Y=\langle \beta_1<\cdots<\beta_{N_y}\rangle$
be the sorted lists of distinct $x$- and $y$-grid values.
We identify each $\alpha_j$ with the integer $j\in[N_x]$ and each $\beta_\ell$ with the integer $\ell\in[N_y]$.
Equivalently, we map every grid point $(\alpha_j,\beta_\ell)$ to the pair $(j,\ell)\in\mathbb{N}^2$
(lexicographically ordered if needed).

This induces a \emph{canonical} left-to-right order on every horizontal line $h$:
for $p,p'\in P\cap h$ we have $p$ precedes $p'$ iff $x(p)<x(p')$, i.e., iff the integer code of $x(p)$ is smaller.
Similarly, it induces a canonical bottom-to-top order on every vertical line $v$:
for $p,p'\in P\cap v$ we have $p$ precedes $p'$ iff $y(p)<y(p')$.
All successor/predecessor maps and the monotone counting primitives $A_\lambda,B_\lambda$ are henceforth taken with
respect to these fixed global orders (restricted to the relevant finite sets on each line). See Figure~\ref{fig:mtcsp-overview}. 

\subsubsection*{Global guessing pattern}

The reduction proceeds by enumerating the following  (all within $k^{\Oh(k)}$ choices overall):
\begin{description}[leftmargin=2.6em]
\item[(G1)] a skeleton $\sigma$ as in \Cref{def:skeleton};
\item[(G2)] for every line $\lambda\in\impline$, an endpoint ordering $\pi(\sigma,\lambda)$ as in \Cref{def:endpoint-order};
\item[(G3)] for every line $\lambda\in\impline$, a gap vector $(K(\sigma,\lambda),\gapfn(\sigma,\lambda))$ as in \Cref{def:gap-distribution};
\item[(G4)] for every line $\lambda\in\impline$, a set $\indx_{\sf exp}(\sigma,\lambda)\subseteq [k]\setminus \indx(\sigma,\lambda)$ and
an exceptional pattern $\exppat(\sigma,\lambda)$ (without any extra validity condition; the CSP will enforce feasibility).
\end{description}


For each line $\lambda\in\impline$, let $\pi_\lambda:=\pi(\sigma,\lambda)$ be the guessed endpoint ordering, let
$\bigl(K_\lambda,\gapfn_\lambda\bigr):=\bigl(K(\sigma,\lambda),\gapfn(\sigma,\lambda)\bigr)$ be the guessed gap data, and let
\[
\bigl(\indx_{{\sf exp},\lambda},\exppat_\lambda\bigr)
\;:=\;
\bigl(\indx_{\sf exp}(\sigma,\lambda),\exppat(\sigma,\lambda)\bigr)
\]
be the guessed exceptional data. We group these per-line guesses into the global objects
\[
\Pi:=\bigl(\pi_\lambda\bigr)_{\lambda\in\impline},\qquad
\Gamma:=\bigl(K_\lambda,\gapfn_\lambda\bigr)_{\lambda\in\impline},\qquad
\cE:=\bigl(\indx_{{\sf exp},\lambda},\exppat_\lambda\bigr)_{\lambda\in\impline}.
\]

We may assume w.l.o.g.\ that each guessed gap datum in (G3) is \emph{internally consistent}, i.e.,
for every $\lambda\in\impline$ we have
\[
K_\lambda \;=\; \sum_{g}\gapfn_\lambda(g).
\]
(If not, we discard the guess immediately.)

For each fixed global guess $(\sigma,\Pi,\Gamma,\cE)$ we build an \ddmtcsp\ instance $\cI(\sigma,\Pi,\Gamma,\cE)$,
and accept iff at least one such instance is satisfiable.

\subsubsection*{CSP Variables (global rectangle coordinates)}
We introduce four \ddmtcsp\ variables $\ell_i,r_i,b_i,t_i$ for each rectangle $R_i$.
Their domains are subsets of the discretized grids:
\[
D(\ell_i),D(r_i)\subseteq \gridpts_X,
\qquad
D(b_i),D(t_i)\subseteq \gridpts_Y.
\]
An assignment to these variables specifies the rectangle
$R_i=[\ell_i,r_i]\times[b_i,t_i]$.


\paragraph*{Skeleton as domains.}
We enforce the skeleton by restricting domains as follows.
Let
\[
X_{\impline} := \{\,x(v)\mid v\in\impline_{\ver}\,\}
\qquad\text{and}\qquad
Y_{\impline} := \{\,y(h)\mid h\in\impline_{\hor}\,\}.
\]
Define $X_{\bot} := \gridpts_X\setminus X_{\impline}
\qquad\text{and}\qquad
Y_{\bot} := \gridpts_Y\setminus Y_{\impline}.$

For each $i\in[k]$, set
\[
\begin{aligned}
D(\ell_i) &=
\begin{cases}
\{x(\sigma(i,L))\} & \text{if }\sigma(i,L)\in\impline_{\ver},\\
X_{\bot} & \text{if }\sigma(i,L)=\bot,
\end{cases}
\\[1mm]
D(r_i) &=
\begin{cases}
\{x(\sigma(i,R))\} & \text{if }\sigma(i,R)\in\impline_{\ver},\\
X_{\bot} & \text{if }\sigma(i,R)=\bot,
\end{cases}
\\[1mm]
D(b_i) &=
\begin{cases}
\{y(\sigma(i,B))\} & \text{if }\sigma(i,B)\in\impline_{\hor},\\
Y_{\bot} & \text{if }\sigma(i,B)=\bot,
\end{cases}
\\[1mm]
D(t_i) &=
\begin{cases}
\{y(\sigma(i,T))\} & \text{if }\sigma(i,T)\in\impline_{\hor},\\
Y_{\bot} & \text{if }\sigma(i,T)=\bot.
\end{cases}
\end{aligned}
\]
Here $x(v)$ is the $x$-coordinate of a vertical line $v$ and $y(h)$ is the $y$-coordinate of a horizontal line $h$.

\subsubsection*{Proper rectangles.}
We define \emph{successor maps} on the ordered grids $\gridpts_X=\langle \alpha_1<\cdots<\alpha_{N_x}\rangle$ and
$\gridpts_Y=\langle \beta_1<\cdots<\beta_{N_y}\rangle$ as follows: $\succX$ maps each $x$-grid value to the next grid value
to its right, and $\succY$ maps each $y$-grid value to the next grid value above it. Formally,
\[
\succX(\alpha_j):=\alpha_{j+1}\quad (j\in[N_x-1]),
\qquad
\succY(\beta_\ell):=\beta_{\ell+1}\quad (\ell\in[N_y-1]).
\]
We enforce the strict inequalities $\ell_i<r_i$ and $b_i<t_i$ using $\succX$ and $\succY$.
To ensure these functions are always defined, we restricted 
\[
D(\ell_i)\subseteq\{\alpha_1,\ldots,\alpha_{N_x-1}\}
\qquad\text{and}\qquad
D(b_i)\subseteq\{\beta_1,\ldots,\beta_{N_y-1}\}.
\]
Then, for each $i\in[k]$, we add the monotone constraints
\[
r_i \ \ge\ \succX(\ell_i),
\qquad
t_i \ \ge\ \succY(b_i).
\]
Since all coordinates take values from the grids, these constraints force $\ell_i<r_i$ and $b_i<t_i$.

\subsubsection*{Per-line monotone counting primitives}

We will repeatedly need to express, using monotone functions, statements of the form:
“how many points of $P$ on a given line lie before a certain coordinate?”
For a horizontal line we count along the $x$-axis, and for a vertical line we count along the $y$-axis.

\paragraph*{Horizontal line $h\in\impline_{\hor}$:}
Let $P_h:=P\cap h$ and $t_h:=|P_h|$.
For $x\in \gridpts_X$, define
\[
A_h(x)\ :=\ \bigl|\{\,p\in P_h : x(p)<x\,\}\bigr|,
\qquad
B_h(x)\ :=\ \bigl|\{\,p\in P_h : x(p)\le x\,\}\bigr|.
\]

\paragraph*{Vertical line $v\in\impline_{\ver}$:}
Let $P_v:=P\cap v$ and $t_v:=|P_v|$.
For $y\in \gridpts_Y$, define
\[
A_v(y)\ :=\ \bigl|\{\,p\in P_v : y(p)<y\,\}\bigr|,
\qquad
B_v(y)\ :=\ \bigl|\{\,p\in P_v : y(p)\le y\,\}\bigr|.
\]
All these functions are monotone non-decreasing on their respective finite domains.

\subsubsection*{Constraints for one horizontal line $h\in\impline_{\hor}$:}
\label{subsubsec:constraints-horizontal-line}

Fix a horizontal line $h\in\impline_{\hor}$.
Recall
\[
\indx(\sigma,h)=\{\,i\in[k]\mid \sigma(i,B)=h\ \text{or}\ \sigma(i,T)=h\,\},
\qquad
m(\sigma,h)=|\indx(\sigma,h)|.
\]
We also recall the endpoint-symbol set
\[
\ptsend(\sigma,h)=\{\,i_{\sf left},i_{\sf right} : i\in\indx(\sigma,h)\,\}
\]
and the guessed endpoint ordering
\[
\pi=\pi(\sigma,h):\ \ptsend(\sigma,h)\to [\,2m(\sigma,h)\,].
\]

\paragraph*{Endpoint symbols as aliases for rectangle variables:}
Endpoint symbols are \emph{combinatorial labels} used only to describe the guessed left-to-right order $\pi(\sigma,h)$.
To talk about their $x$-coordinates inside the CSP, we reuse the existing rectangle-boundary variables via the alias map
\[
\coord_h:\ \ptsend(\sigma,h)\to \{\ell_1,r_1,\ldots,\ell_k,r_k\},
\qquad
\coord_h(i_{\sf left}) := \ell_i,
\quad
\coord_h(i_{\sf right}) := r_i.
\]
For example, if $\pi^{-1}(j)=2_{\sf left}$ and $\pi^{-1}(j+1)=5_{\sf right}$, then the order constraint
$\coord_h(\pi^{-1}(j))\le \coord_h(\pi^{-1}(j+1))$ is simply $\ell_2\le r_5$. See Figure~\ref{fig:h-gaps}. 

\paragraph*{(H1) Order constraints induced by $\pi(\sigma,h)$.}
For each $j\in[\,2m(\sigma,h)-1\,]$, add
\[
\coord_h\!\bigl(\pi^{-1}(j)\bigr)
\ \le\
\coord_h\!\bigl(\pi^{-1}(j+1)\bigr).
\]


\paragraph*{(H2) Gap constraints via minimal Dyck blocks.}
Fix the ordering $\pi$ on $\ptsend(\sigma,h)$.
Let $0=j_0<j_1<\cdots<j_q=2m(\sigma,h)$ be the complete-prefix indices that define the $q$ minimal Dyck blocks
(\Cref{def:min-complete-block,lem:unique-min-dyck-blocks}).
These block boundaries carve $h$ into exactly $q+1$ \emph{gaps}:
before the first endpoint; between the end of block $r$ and the next endpoint (for each $r\in[q-1]$); and after the last endpoint.
Our guess $\gapfn(\sigma,h):[q+1]\to\{0,1,\ldots\}$ specifies how many points of $P_h$ lie in each gap.

\begin{figure}[t]
\centering
\begin{tikzpicture}[x=0.9cm,y=1cm,>=Latex, font=\small]
  \draw[thick] (0,0) -- (12,0) node[right] {$h$};
  \foreach \x/\lbl in {1/{$e^{\sf first}$},5/{$e_1^{\sf end}$},6/{$e_1^{\sf next}$},9/{$e^{\sf last}$}}{
    \draw[thick] (\x,0.12) -- (\x,-0.12);
    \node[above] at (\x,0.15) {\lbl};
  }


\draw[decorate,decoration={brace,amplitude=5pt,mirror}]
  (1,-0.55) -- (5,-0.55)
  node[midway,above=2pt] {block $1$};

\draw[decorate,decoration={brace,amplitude=5pt,mirror}]
  (6,-0.55) -- (9,-0.55)
  node[midway,above=2pt] {block $2$};


  \node[above] at (0.45,0.6) {gap $1$};
  \node[above] at (5.5,0.6) {gap $2$};
  \node[above] at (10.5,0.6) {gap $3$};

  \fill (0.4,0) circle (2pt);
  \fill (5.55,0) circle (2pt);
  \fill (10.0,0) circle (2pt);

\end{tikzpicture}
\caption{ Endpoint ordering $\pi(\sigma,h)$, minimal Dyck blocks, and the induced gaps on a horizontal line:
Schematic: endpoints along $h$ (from $\pi$) decompose into minimal blocks.\\
   Gaps are the regions between blocks; $\gapfn$ prescribes how many points of $P_h$ lie in each gap.}
\label{fig:h-gaps}
\end{figure}

\smallskip
\noindent
\textbf{Formal constraints.}
Let
\[
e^{\sf first}:=\pi^{-1}(1),
\qquad
e^{\sf last}:=\pi^{-1}\!\bigl(2m(\sigma,h)\bigr),
\]
and for each $r\in[q-1]$ define
\[
e_r^{\sf end}:=\pi^{-1}(j_r),
\qquad
e_r^{\sf next}:=\pi^{-1}(j_r+1).
\]
Write $\gapfn:=\gapfn(\sigma,h)$. We add:
\begin{align*}
& A_h\!\bigl(\coord_h(e^{\sf first})\bigr) \ =\ \gapfn(1),\\
& A_h\!\bigl(\coord_h(e_r^{\sf next})\bigr) \ =\ B_h\!\bigl(\coord_h(e_r^{\sf end})\bigr) + \gapfn(r+1)
\qquad\text{for all }r\in[q-1],\\
& B_h\!\bigl(\coord_h(e^{\sf last})\bigr) \ =\ t_h - \gapfn(q+1),
\end{align*}
where $t_h:=|P_h|$.

\paragraph*{(H3) Exceptional points and the exceptional pattern.}
We now encode the \emph{exceptional} points on the horizontal line $h$ and how they are covered. Please refer to Figure~\ref{fig:h-blocks-gaps-exceptional} for an illustration. 
Recall that $\indx(\sigma,h)$ is the set of rectangles whose \emph{top} or \emph{bottom} side lies on $h$.
Every point of $P_h:=P\cap h$ that is not covered by such a horizontal side must instead be covered by a
\emph{vertical} side (left/right) of some rectangle whose sides are not aligned with $h$.
Accordingly, in the global guessing step we guess:
(i) a set $\indx_{\sf exp}(\sigma,h)\subseteq [k]\setminus \indx(\sigma,h)$ of rectangles that may cover exceptional points on $h$,
and (ii) an \emph{exceptional pattern} $\exppat(\sigma,h)$ specifying, from left to right, which rectangle and which vertical side
covers each exceptional point.

\smallskip
\noindent
\textbf{Idea.}
Let $K:=K(\sigma,h)$ be the guessed number of exceptional points on $h$.
We introduce variables $p_{h,1},\ldots,p_{h,K}$ that pick $K$ \emph{distinct} points of $P_h$ in left-to-right order.
Then we:
(a) force $p_{h,1},\ldots,p_{h,K}$ to fall into the $q+1$ gaps induced by (H2), in quantities prescribed by $\gapfn(\sigma,h)$; and
(b) enforce the guessed pattern $\exppat(\sigma,h)$ by making each chosen point coincide with the declared vertical side
($\ell_i$ or $r_i$) and ensuring this side intersects $h$.

\begin{figure}[t]
\centering
\includegraphics[width=\linewidth]{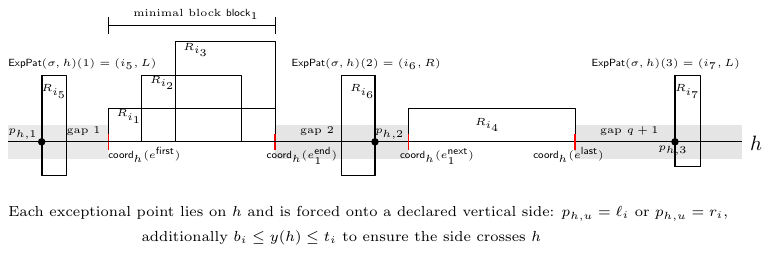}
\caption{Exceptional points on $h$: each selected point is placed inside the appropriate gap, and then forced to coincide with the declared vertical side crossing $h$ (thick side), as specified by $\exppat(\sigma,h)$.}
\label{fig:h-blocks-gaps-exceptional}
\end{figure}

\smallskip
\noindent
\textbf{Formal constraints.}
Let $\indx_{\sf exp}(\sigma,h)\subseteq [k]\setminus \indx(\sigma,h)$ and $\exppat(\sigma,h)$ be the guessed set and pattern on $h$.

\medskip
\noindent
\emph{(H3a) Choosing $K$ distinct points of $P_h$ in left-to-right order.}
Introduce variables $p_{h,1},\dots,p_{h,K}$ with domain
\[
D_h \ :=\ D(p_{h,r}) \ :=\ \{\,x(p)\mid p\in P_h\,\}\subseteq\mathbb{N}\qquad(\text{for all }r\in[K]).
\]
Let $0$ and $M_h$ be fixed constants such that $0 < \min D_h$ and $M_h > \max D_h$
(e.g.\ $M_h := \max(\gridpts_X)+1$).
Define monotone predecessor/successor maps induced by $D_h$:
\[
\predop_h(x)\ :=\ \max\bigl(\{d\in D_h : d<x\}\cup\{0\}\bigr),
\qquad
\succ_h(x)\ :=\ \min\bigl(\{d\in D_h : d>x\}\cup\{M_h\}\bigr).
\]
Enforce distinctness and left-to-right order by adding, for all $r\in[K-1]$,
\[
p_{h,r+1}\ \ge\ \succ_h(p_{h,r}).
\]
Thus $p_{h,1},\ldots,p_{h,K}$ represent $K$ distinct points of $P_h$ in increasing $x$-order.

\medskip
\noindent
\emph{(H3b) Placing the $K$ points into the $q+1$ gaps.}
Let $q$ be the number of minimal complete blocks of $\pi(\sigma,h)$ as in (H2), and
write $\gapfn:=\gapfn(\sigma,h):[q+1]\to\{0,1,\dots\}$.
Define prefix sums
\[
s_g\ :=\ \sum_{u=1}^{g}\gapfn(u)\quad(g\in[q+1]),
\qquad
s_0\ :=\ 0,
\]
so that gap $g$ corresponds to indices $r\in[s_{g-1}+1\,..\,s_g]$.
(Here $s_{q+1}=K$ by our consistency assumption on the guess.)

Recall the endpoint symbols from (H2):
\[
e^{\sf first}:=\pi^{-1}(1),
\quad
e^{\sf last}:=\pi^{-1}\!\bigl(2m(\sigma,h)\bigr),
\quad
e_r^{\sf end}:=\pi^{-1}(j_r),
\quad
e_r^{\sf next}:=\pi^{-1}(j_r+1)\ (\forall~ r\in[q-1]).
\]
We use the alias map $\coord_h$ to interpret endpoint symbols by rectangle variables.

\smallskip
\noindent

\emph{Gap $1$ (before the first endpoint).}
Since \(s_1\) is the highest index in Gap~$1$, add
\[
p_{h,s_1}\ \le\ \predop_h\!\bigl(\coord_h(e^{\sf first})\bigr).
\]

\smallskip
\noindent
\emph{Intermediate gaps.}
For each $r\in[q-1]$, in gap $r+1$, $s_r+1$ is the lowest index and $s_{r+1}$ is the highest index. Add
\[
\succ_h\!\bigl(\coord_h(e_r^{\sf end})\bigr)\ \le\ p_{h,s_r+1}\quad \text{and}\quad  p_{h,s_{r+1}} \le\ \predop_h\!\bigl(\coord_h(e_r^{\sf next})\bigr).
\]

\smallskip
\noindent
\emph{Gap $q+1$ (after the last endpoint).}
Since $s_{q}+1$ is the lowest index in the last gap, i.e, in gap $q+1$, add
\[
p_{h,s_{q}+1}\ \ge\ \succ_h\!\bigl(\coord_h(e^{\sf last})\bigr).
\]

\medskip
\noindent
\emph{(H3c) Enforcing the guessed exceptional pattern.}
Finally, enforce that each selected point is covered by the declared vertical side of the declared rectangle.
For each $u\in[K]$:
\begin{itemize}
\item if $\exppat(\sigma,h)(u)=(i,L)$, add
\[
p_{h,u}\ =\ \ell_i
\qquad\text{and}\qquad
b_i\ \le\ y(h)\ \le\ t_i;
\]
\item if $\exppat(\sigma,h)(u)=(i,R)$, add
\[
p_{h,u}\ =\ r_i
\qquad\text{and}\qquad
b_i\ \le\ y(h)\ \le\ t_i.
\]
\end{itemize}
The inequality $b_i\le y(h)\le t_i$ guarantees that the chosen vertical side of $R_i$ intersects the horizontal line $h$.
All constraints above are expressible in \ddmtcsp\ since $y(h)$ is a constant and $\predop_h,\succ_h$ are monotone.

\medskip
\noindent
\textbf{Example (informal).}
Suppose $P_h$ has $x$-coordinates $D_h=\{1,6,10\}$.
Assume the endpoint order on $h$ yields three gaps and the guess is $\gapfn(\sigma,h)=(1,1,1)$, hence $K=3$.
Then (H3a) forces $(p_{h,1},p_{h,2},p_{h,3})=(1,6,10)$ in increasing order.
The gap constraints (H3b) place $p_{h,1}$ into gap $1$, $p_{h,2}$ into the middle gap, and $p_{h,3}$ into the last gap.
Finally, (H3c) uses $\exppat(\sigma,h)$ to decide whether (say) $p_{h,2}=6$ must equal some $\ell_i$ or some $r_i$,
and $b_i\le y(h)\le t_i$ ensures that this vertical side indeed crosses $h$. 

\medskip
\noindent
\textbf{Vertical lines.}
All constraints for a vertical line $v\in\impline_{\ver}$ are obtained by the same template as for a horizontal line,
but with the roles of the two axes swapped. Concretely:
we count along the $y$-axis instead of the $x$-axis, use $A_v,B_v$ instead of $A_h,B_h$,
and swap $(\ell_i,r_i)$ with $(b_i,t_i)$. See Figure~\ref{fig:v-analogue} for an illustration. 

\paragraph*{Endpoint symbols and their coordinates on $v$.}
Here the endpoint symbols on $v$ are the bottom/top labels of rectangles whose \emph{left} or \emph{right} side lies on $v$.
Formally,
\[
\indx(\sigma,v)=\{\,i\in[k]\mid \sigma(i,L)=v\ \text{or}\ \sigma(i,R)=v\,\},\qquad
m(\sigma,v)=|\indx(\sigma,v)|,
\]
\[
\ptsend(\sigma,v)=\{\,i_{\sf bot},i_{\sf top}: i\in\indx(\sigma,v)\,\},
\qquad
\pi=\pi(\sigma,v):\ptsend(\sigma,v)\to[\,2m(\sigma,v)\,].
\]
As before, endpoint symbols are only combinatorial labels. We interpret their $y$-coordinates by aliasing them to rectangle
variables via
\[
\coord_v:\ \ptsend(\sigma,v)\to \{b_1,t_1,\ldots,b_k,t_k\},
\qquad
\coord_v(i_{\sf bot}) := b_i,
\quad
\coord_v(i_{\sf top}) := t_i.
\]
Then (V1) order constraints are:
\[
\coord_v\!\bigl(\pi^{-1}(j)\bigr)\ \le\ \coord_v\!\bigl(\pi^{-1}(j+1)\bigr)
\qquad\text{for all }j\in[\,2m(\sigma,v)-1\,].
\]

\paragraph*{Gaps, gap counts, and exceptional points on $v$.}
Using the minimal-block indices $0=j_0<\cdots<j_q=2m(\sigma,v)$ of $\pi(\sigma,v)$, we obtain $q+1$ gaps on $v$ and a guessed
$\gapfn(\sigma,v):[q+1]\to\{0,1,\dots\}$.
These counts are enforced with $A_v,B_v$ exactly as in (H2), after replacing $\coord_h$ by $\coord_v$ and $t_h$ by $t_v:=|P\cap v|$.

We introduce $K(\sigma,v)$ variables
\[
p_{v,1},\ldots,p_{v,K(\sigma,v)}\in D_v
\quad\text{where}\quad
D_v := \{\,y(p)\mid p\in P\cap v\,\},
\]
order them from bottom to top using the successor map on $D_v$, place them into the $q+1$ gaps using predecessor/successor maps on
$D_v$, and finally enforce the guessed exceptional pattern.
Since an exceptional point on a vertical line cannot be covered by rectangles in $\indx(\sigma,v)$, it must be covered by a
\emph{horizontal} side (bottom/top) of some rectangle not in $\indx(\sigma,v)$.
Accordingly, the guessed pattern is
\[
\exppat(\sigma,v):[K(\sigma,v)]\to \indx_{\sf exp}(\sigma,v)\times\{B,T\}.
\]
For each selected exceptional point $p_{v,r}$:
\begin{itemize}
\item if $\exppat(\sigma,v)(r)=(i,B)$, add
\[
p_{v,r}=b_i
\qquad\text{and}\qquad
\ell_i\ \le\ x(v)\ \le\ r_i;
\]
\item if $\exppat(\sigma,v)(r)=(i,T)$, add
\[
p_{v,r}=t_i
\qquad\text{and}\qquad
\ell_i\ \le\ x(v)\ \le\ r_i.
\]
\end{itemize}
Here $x(v)$ is the constant $x$-coordinate of the line $v$. The inequality $\ell_i\le x(v)\le r_i$ guarantees that the chosen
horizontal side of $R_i$ intersects the vertical line $v$.

\begin{figure}[t]
\centering
\begin{tikzpicture}[x=0.9cm,y=0.9cm,>=Latex, font=\small]
  \draw[thick] (0,0) -- (0,8) node[above] {$v$};
  \draw[thick] (-0.18,1) -- (0.18,1) node[right] {$\coord_v(e^{\sf first})$};
  \draw[thick] (-0.18,6.5) -- (0.18,6.5) node[right] {$\coord_v(e^{\sf last})$};

  \fill[gray!15] (-0.28,0) rectangle (0.28,1);
  \fill[gray!15] (-0.28,6.5) rectangle (0.28,8);

  \fill (0,0.55) circle (2pt) node[left=2pt] {$p_{v,1}$};
  \fill (0,7.2) circle (2pt) node[left=2pt] {$p_{v,2}$};

  \draw[thick] (0.1,0.55) -- (3.0,0.55);
  \draw[thick] (0.1,7.2) -- (2.4,7.2);

  \node[align=left,anchor=west] at (0.6,3.6)
    {Vertical-line analogue:\\
     pick $p_{v,r}$ in gaps (along $y$),\\
     then enforce $p_{v,r}=b_i$ or $t_i$ and $\ell_i\le x(v)\le r_i$.};
\end{tikzpicture}
\caption{Schematic for the vertical-line constraints: swap the roles of $x/y$ and of $(\ell_i,r_i)$ with $(b_i,t_i)$.}
\label{fig:v-analogue}
\end{figure}

\subsection{The full CSP instance}
\label{subsubsec:full-csp-instance}

Fix a global guess $(\sigma,\Pi,\Gamma,\cE)$.
We define the \ddmtcsp\ instance $\cI(\sigma,\Pi,\Gamma,\cE)$ as the union of the following constraints and domain
restrictions:
\begin{enumerate}[label=(I\arabic*),itemsep=2pt]
\item \textbf{Global rectangle coordinates.}
For each $i\in[k]$, we include the four coordinate variables
$\ell_i,r_i\in\gridpts_X$ and $b_i,t_i\in\gridpts_Y$, together with
\begin{itemize}[itemsep=1pt]
\item the domain restrictions induced by the skeleton $\sigma$ (i.e., the constraints $D(\ell_i),D(r_i),D(b_i),D(t_i)$ as
defined in the ``Skeleton as domains'' paragraph), and
\item the proper-rectangle constraints enforcing $\ell_i<r_i$ and $b_i<t_i$ via the successor maps $\succX,\succY$.
\end{itemize}

\item \textbf{Constraints for horizontal lines.}
For every horizontal line $h\in\impline_{\hor}$ we add the constraints (H1)--(H3), namely:
\begin{itemize}[itemsep=1pt]
\item the order constraints induced by the guessed ordering $\pi(\sigma,h)$ (H1),
\item the gap-count constraints induced by the minimal-block decomposition and the guessed gap function $\gapfn(\sigma,h)$ (H2),
and
\item the exceptional-point constraints induced by $(K(\sigma,h),\indx_{\sf exp}(\sigma,h),\exppat(\sigma,h))$ (H3).
\end{itemize}

\item \textbf{Constraints for vertical lines.}
For every vertical line $v\in\impline_{\ver}$ we add the symmetric analogs of (H1)--(H3) obtained by swapping
$x\leftrightarrow y$, $(\ell_i,r_i)\leftrightarrow(b_i,t_i)$, and replacing
$A_h,B_h$ with $A_v,B_v$.
\end{enumerate}

\medskip
\noindent
\textbf{Validity as an \ddmtcsp\ instance.}
Every constraint in $\cI(\sigma,\Pi,\Gamma,\cE)$ is of the form $z \,\si\, f(z')$ with
$\si\in\{\le,\ge,=\}$ and $f$ monotone over the finite domain of $z'$:
the maps $\succX,\succY$ and the per-line predecessor/successor maps are monotone,
and the prefix-count functions $A_\lambda,B_\lambda$ are monotone nondecreasing.
Moreover, every variable domain is a finite subset of $\mathbb{N}$.
Hence $\cI(\sigma,\Pi,\Gamma,\cE)$ is a valid instance of \ddmtcsp.

\subsection{Correctness and running time}
\label{subsubsec:reduction-correctness-runtime}

We now argue correctness and analyze the running time of the reduction.

\begin{lemma}[Correctness of the reduction]
\label{lem:full-reduction-correct}
Let $(P,k)$ be an instance of \prbcshort.
Then $(P,k)$ is a \yes-instance if and only if there exists a global guess $(\sigma,\Pi,\Gamma,\cE)$ such that
the \ddmtcsp\ instance $\cI(\sigma,\Pi,\Gamma,\cE)$ is satisfiable.
\end{lemma}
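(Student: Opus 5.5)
We prove the two directions separately. Throughout we pad any solution to exactly $k$ rectangles, since rectangles may be duplicated, so the guessed objects are indexed by $[k]$.

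\textbf{($\Rightarrow$) Completeness.} Let $(P,k)$ be a \yes-instance. By \Cref{lem:rect-discretization} there is a grid-aligned solution $\cR=\{R_1,\dots,R_k\}$ with $R_i=[\ell_i^\ast,r_i^\ast]\times[b_i^\ast,t_i^\ast]$. Let $\sigma$ be the skeleton induced by $\cR$ (\Cref{lem:solution-induces-skeleton}). For each $\lambda\in\impline$ we apply \Cref{lem:gap-pattern-consistency} to obtain $\pi_\lambda$, $(K_\lambda,\gapfn_\lambda)$, $\indx_{{\sf exp},\lambda}$ and $\exppat_\lambda$. One adjustment is needed here. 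A point may be covered both by an aligned side and by a crossing side, and the ``gap'' intervals in (H3b) must contain exactly the exceptional points that lie there. We therefore \emph{redefine} the exceptional points on $\lambda$ as those points of $P_\lambda$ that are not aligned-covered, i.e.\ not in the union of the segments $[\ell_i^\ast,r_i^\ast]$ for $i\in\indx(\sigma,\lambda)$. These are exactly the points lying in the gaps. Each of them is covered by a rectangle crossing $\lambda$, because every point lies on some rectangle boundary. By \Cref{lem:line-rect-intersection}, a rectangle with no horizontal side on $\lambda$ meets $\lambda$ in at most two points, so \Cref{lem:few-exceptional} still gives $K_\lambda\le 2k$.

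We then assign $\ell_i:=\ell_i^\ast$, and similarly for $r_i,b_i,t_i$, and set $p_{\lambda,u}$ to the coordinate of the $u$-th exceptional point. We verify each constraint family in turn:
\begin{itemize}
\item The domains hold by compatibility with $\sigma$ (\Cref{def:compatible-skeleton}).
\item The proper-rectangle constraints hold, taking nondegenerate rectangles.
\item (H1) holds by the choice of $\pi_\lambda$ as the sorted order.
\item (H2) holds because the gaps between minimal Dyck blocks are exactly the maximal uncovered intervals. Since points on a block boundary are covered, the counts $A_h,B_h$ agree with $\gapfn$.
\item (H3a) and (H3b) hold by the ordering and gap placement of the chosen points.
\item (H3c) holds since the covering vertical side passes through the point and the rectangle crosses $h$, giving $b_i\le y(h)\le t_i$.
\end{itemize}
The vertical lines are handled symmetrically.

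\textbf{($\Leftarrow$) Soundness.} Take a satisfying assignment of some $\cI(\sigma,\Pi,\Gamma,\cE)$ and define $R_i=[\ell_i,r_i]\times[b_i,t_i]$. These are proper rectangles by the successor constraints. Fix $p\in P$. Since $\impline$ covers $P$ (\Cref{lem:line-cover}), $p$ lies on some $\lambda\in\impline$; say $\lambda=h$ is horizontal. If $p$ lies in some interval $[\ell_i,r_i]$ with $i\in\indx(\sigma,h)$, then the domain restriction puts a horizontal side of $R_i$ on $h$, so $p\in\bd(R_i)$. Otherwise $x(p)$ lies in one of the $q+1$ gaps.

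The key counting step is the following. Each block of a Dyck word, realized by an order-preserving assignment via (H1), covers the interval from its first endpoint coordinate to its last endpoint coordinate. The reason is that within a minimal block every internal prefix is incomplete, so consecutive endpoint coordinates are bridged by some open segment, and the union is therefore an interval. Hence the points not covered by aligned segments are exactly those in the gap intervals. By (H2) the number of such points in gap $g$ is exactly $\gapfn(g)$. By (H3a) and (H3b), the chosen distinct values $p_{h,s_{g-1}+1},\dots,p_{h,s_g}$ are $\gapfn(g)$ distinct points of $P_h$ lying in gap $g$, so they are all of them. Thus $x(p)=p_{h,u}$ for some $u$. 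By (H3c), $p$ then lies on the left or right side of some $R_i$ with $b_i\le y(h)\le t_i$, so $p\in\bd(R_i)$. The vertical case is symmetric.

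\textbf{Main obstacle.} The delicate point is the interval-union claim together with the matching of endpoint conventions. We must check that the strict versus non-strict counts $A_h,B_h$ in (H2) agree with the $\predop_h/\succ_h$ bounds in (H3b), in particular when consecutive endpoints coincide or a block's endpoints are equal. We also need the redefinition of ``exceptional'' (as not aligned-covered) so that completeness matches the soundness argument. We will handle this by proving a small auxiliary claim: for an order-preserving realization of a minimal Dyck block, the union of its segments equals $[\coord(\text{first}),\coord(\text{last})]$. We prove it by induction over the block's positions using incompleteness of internal prefixes.
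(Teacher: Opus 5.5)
Your proposal is correct and follows the same two-direction argument as the paper. Your redefinition of the exceptional points on $\lambda$ as the points of $P_\lambda$ outside every aligned segment, and your interval-union claim for minimal Dyck blocks, make precise two steps the paper only asserts. Under the paper's literal definition, a point covered both by an aligned side and by a crossing side is counted in $K_\lambda$ but lies in no gap, which is incompatible with (H2)--(H3b). Two small points remain.

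\textbf{Coinciding endpoints.} For the issue you flag, break ties in the completeness direction by placing \textsf{left} symbols before \textsf{right} symbols. Then the last coordinate of each block is strictly smaller than the first coordinate of the next, and the (H2) equation $A_h(\cdot)=B_h(\cdot)+\mathsf{GapFn}(r+1)$ really counts the open gap. An arbitrary tie-break forces $\mathsf{GapFn}(r+1)=-1$ whenever an input point sits at the shared coordinate.

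\textbf{No aligned rectangle.} When no rectangle is aligned with $\lambda$ (so $q=0$), (H2) must be read as the check $|P_\lambda|=\mathsf{GapFn}(1)$. Otherwise your soundness step that gap $g$ contains exactly $\mathsf{GapFn}(g)$ points has nothing behind it.
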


\begin{proof}
We prove both directions.

\smallskip
\noindent
($\Rightarrow$)
Let $\cR=\{R_1,\dots,R_k\}$ be a grid-aligned solution that boundary-covers all points of $P$
(recall that by discretization it suffices to consider grid-aligned rectangles).
By \Cref{lem:solution-induces-skeleton}, $\cR$ induces a skeleton $\sigma$.

Fix any line $\lambda\in\impline$.
By definition of $\indx(\sigma,\lambda)$, the rectangles indexed by $\indx(\sigma,\lambda)$ are exactly those whose
side is aligned with $\lambda$ (bottom/top for $\lambda\in\impline_{\hor}$; left/right for $\lambda\in\impline_{\ver}$).
Sorting the corresponding endpoint symbols from left to right (or bottom to top) yields an ordering
$\pi(\sigma,\lambda)$, and therefore a well-defined minimal-block decomposition
and the induced gaps.
The solution $\cR$ also determines:
(i) the gap counts $\gapfn(\sigma,\lambda)$, and
(ii) the set of exceptional points on $\lambda$ together with, for each such point, which (non-aligned) rectangle side
covers it; this yields $\indx_{\sf exp}(\sigma,\lambda)$ and $\exppat(\sigma,\lambda)$.

Consider the global guess $(\sigma,\Pi,\Gamma,\cE)$ that matches these objects.
Assign the CSP variables $(\ell_i,r_i,b_i,t_i)$ to be the true coordinates of $R_i$.
For each line $\lambda$, assign the auxiliary variables $p_{\lambda,1},\ldots,p_{\lambda,K(\sigma,\lambda)}$
to be the coordinates of the exceptional points on $\lambda$ in sorted order.

We verify that all constraints of $\cI(\sigma,\Pi,\Gamma,\cE)$ are satisfied.
\begin{itemize}
\item The domain restrictions encode the skeleton, hence each rectangle boundary variable is consistent with $\sigma$.
\item The ``proper rectangle'' constraints enforce $\ell_i<r_i$ and $b_i<t_i$, which hold for every rectangle.
\item For each $\lambda\in\impline$, (H1)/(V1) holds because $\pi(\sigma,\lambda)$ is defined by sorting true endpoints.
\item For each $\lambda\in\impline$, (H2)/(V2) holds by definition of $\gapfn(\sigma,\lambda)$ and the prefix-count functions.
\item For each $\lambda\in\impline$, (H3)/(V3) holds because the chosen exceptional-point variables are placed at the
true exceptional points, which lie in the claimed gaps, and each is covered by the declared side in $\exppat(\sigma,\lambda)$
(with the side intersection inequality ensuring the side crosses $\lambda$).
\end{itemize}
Thus the CSP instance is satisfiable.

\smallskip
\noindent
($\Leftarrow$)
Let $(\sigma,\Pi,\Gamma,\cE)$ be a global guess such that $\cI(\sigma,\Pi,\Gamma,\cE)$ has a satisfying assignment.
Interpret the assigned values of $(\ell_i,r_i,b_i,t_i)$ as rectangles
$R_i=[\ell_i,r_i]\times[b_i,t_i]$ for each $i\in[k]$; by the proper-rectangle constraints they are well-defined.

Fix a horizontal line $h\in\impline_{\hor}$ (the vertical case is analogous).
We show that every point of $P_h:=P\cap h$ is boundary-covered by $\{R_1,\ldots,R_k\}$.
\begin{itemize}
\item Constraints (H2) certify that the gaps induced by the minimal-block decomposition contain exactly
$\gapfn(\sigma,h)(g)$ points each. In particular, the complement of these gaps along $h$ is exactly the union of the
endpoint intervals prescribed by the (Dyck-consistent) ordering, hence it is covered by horizontal sides of rectangles
from $\indx(\sigma,h)$ (those aligned with $h$).
\item Constraints (H3a)--(H3b) select exactly $K(\sigma,h)$ points on $h$, place them into the gaps in the prescribed
quantities, and therefore account for all points in the gaps as \emph{exceptional} points.
\item Constraints (H3c) ensure that each selected exceptional point is covered by the declared vertical side
($p_{h,u}=\ell_i$ or $p_{h,u}=r_i$) and that this side intersects $h$ via $b_i\le y(h)\le t_i$.
\end{itemize}
Hence every point of $P_h$ is boundary-covered. The same argument applies to each vertical line in $\impline_{\ver}$.

Finally, since $\impline$ covers all points of $P$ (Setup), every point of $P$ lies on some $\lambda\in\impline$, and is
therefore boundary-covered by the constructed rectangles. Thus $(P,k)$ is a \yes-instance.
\end{proof}

\begin{lemma}[Running time of the reduction]
\label{lem:full-reduction-time}
There is an algorithm that, given $(P,k)$, enumerates all global guesses $(\sigma,\Pi,\Gamma,\cE)$ and builds each
corresponding \ddmtcsp\ instance $\cI(\sigma,\Pi,\Gamma,\cE)$ in total time
$2^{\cO(k \log k)} \cdot n^{\Oh(1)}$ for some computable function $f$.
\end{lemma}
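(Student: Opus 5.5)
The plan is to bound the number of global guesses $(\sigma,\Pi,\Gamma,\cE)$ by $2^{\Oh(k\log k)}$, show that each guess can be enumerated with polynomial overhead, and show that each CSP instance $\cI(\sigma,\Pi,\Gamma,\cE)$ can be written down in time $n^{\Oh(1)}$. The preprocessing comes first and is guess-independent. It consists of the discretization, whose grids $\gridpts_X,\gridpts_Y$ are sorted in $\Oh(n\log n)$ time, and the computation of $\impline$ via \Cref{lem:line-cover}, which takes polynomial time (a bipartite matching followed by K\H{o}nig's theorem). If the minimum vertex cover has size more than $4k$, we output \no\ immediately. Otherwise $|\impline|\le 4k$.

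Next I will count the guesses factor by factor. By \Cref{lem:enumerate-skeletons} there are at most $(4k+1)^{4k}=2^{\Oh(k\log k)}$ skeletons. For each of the at most $4k$ lines $\lambda$, \Cref{lem:number-valid-endptorder} bounds the number of endpoint orderings by $(2k)!=2^{\Oh(k\log k)}$. \Cref{lem:number-gap-vectors} bounds the number of gap vectors by $2^{\Oh(k)}$. \Cref{lem:number-exppat} bounds the number of triples $(K,\indx_{\sf exp},\exppat)$ by $k^{\Oh(k)}$.

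This count is the delicate step, because taking the product naively over $4k$ lines gives $(k^{\Oh(k)})^{4k}=2^{\Oh(k^2\log k)}$, not $2^{\Oh(k\log k)}$. To avoid this, I will amortize over lines instead of multiplying per-line worst cases.

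\emph{Endpoint orderings.} The $m(\sigma,\lambda)$ sum to at most $2k$ over all lines, since each rectangle has two horizontal and two vertical sides. Hence $\prod_\lambda (2m_\lambda)!\le (4k)!$.

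\emph{Gap vectors.} I will bound the product of $\binom{K_\lambda+q_\lambda}{q_\lambda}$ over all lines by $2^{\sum_\lambda (K_\lambda+q_\lambda)}$. This needs a global bound $\sum_\lambda K_\lambda=\Oh(k)$. It holds because a rectangle that crosses a line contributes two points there, but a rectangle's vertical sides can cross many horizontal lines. Instead, I would charge each exceptional point to a (rectangle, side) pair and restrict the guess to at most $2k$ exceptional points per line. If the total really can be $\Theta(k^2)$, I would fall back on the per-line bound together with the observation that the exceptional data per line is determined by the set $\indx_{\sf exp}$ and the counts. The honest fallback is to accept $2^{\Oh(k^2\log k)}$. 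I expect this amortization to be the main obstacle.

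\emph{Exceptional patterns.} For $\exppat$ the same issue arises. I would try to argue that, once the line is fixed, each declared side $(i,L)$ can cover at most one point of that line. A pattern is then an injective map into a set of size at most $2k$, but the per-line factor remains $k^{\Oh(k)}$.

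Finally, for each guess the instance has $4k+\sum_\lambda K_\lambda=\Oh(k^2)$ variables. Every domain is a subset of the grids, of size at most $n$. The number of constraints is polynomial in $k$. Each monotone function ($\succX$, $\predop_h$, $\succ_h$, $A_\lambda$, $B_\lambda$) is tabulated in $\Oh(n)$ time per line. Hence each instance is built in $n^{\Oh(1)}$ time, and combining this with the enumeration count proves the lemma. Running \Cref{prop:distinct-domian-csp} on each instance then yields \Cref{thm:prbc-fpt}.
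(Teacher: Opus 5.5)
Your proposal does not establish the stated bound: you end with a per-line factor $k^{O(k)}$ for the exceptional data and concede the fallback $2^{O(k^2\log k)}$. Your diagnosis is nonetheless correct, and it identifies an error in the paper's own proof. The paper bounds the combined choices per line by $k^{O(k)}$ and then asserts $\bigl(k^{O(k)}\bigr)^{4k}=k^{O(k)}$, but the left-hand side is $k^{O(k^2)}$. As written, the paper's argument therefore only gives $2^{O(k^2\log k)}\cdot n^{O(1)}$. That is still an $f(k)\cdot n^{O(1)}$ bound, which is all the leftover phrase ``for some computable function $f$'' needs, and it suffices for fixed-parameter tractability. Your proposal, completed with its fallback, proves exactly the same.

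\textbf{Endpoint orderings and gap vectors.} The endpoint orderings amortize as you say, except that $\sum_\lambda m(\sigma,\lambda)\le 4k$ ($2k$ from horizontal and $2k$ from vertical lines), which gives $(8k)!$.

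For gap vectors, the route through $\sum_\lambda K_\lambda=O(k)$ cannot work. A vertical side whose supporting line is not in $\impline$ can carry one exceptional point on every horizontal line it crosses. For example, take the $2k\times 2k$ grid covered by the $k$ rectangles $[2i-1,2i]\times[1,2k]$, with $\impline$ the $2k$ horizontal lines (a valid minimum vertex cover). Then $\sum_\lambda K_\lambda=\Theta(k^2)$, so no such restriction may be imposed on the guesses. Gap vectors amortize anyway, via $\binom{K_\lambda+q_\lambda}{q_\lambda}\le (K_\lambda+1)^{q_\lambda}\le (2k+1)^{q_\lambda}$ and $\sum_\lambda q_\lambda\le\sum_\lambda m(\sigma,\lambda)\le 4k$.

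\textbf{Exceptional data.} The genuine obstruction is $\cE$. The guess of $\indx_{\sf exp}$ alone costs $2^{k}$ per line; it is redundant, since (H3c) already imposes $b_i\le y(h)\le t_i$. More seriously, $\exppat_\lambda$ ranges over up to $(2k)^{K_\lambda}$ maps independently on each line, already $(2k)^{2k}$ per line for the all-$\bot$ skeleton. So when $|\impline|=\Theta(k)$, this encoding enumerates $k^{\Omega(k^2)}$ guesses, and no counting argument can fix that.

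To reach $2^{O(k\log k)}$ you must stop naming, line by line, the side that covers each exceptional point. One way to do this:
\begin{itemize}
\item Guess once the weak order of all $\ell_i,r_i$ together with the $x$-coordinates of $\impline_{\ver}$, and symmetrically for the $y$-coordinates. This costs $2^{O(k\log k)}$.
\item This order fixes, on every line, which rectangles are aligned, which cross, and in what order.
\item Then require that no point of $P_\lambda$ lies outside the covered stretches, using constraints $A_\lambda(z)\le B_\lambda(z')$. Each is equivalent to $z\le g(z')$ for a monotone $g$, so it is a valid \ddmtcsp\ constraint.
\end{itemize}
Without an idea of this kind, the lemma and Theorem~\ref{thm:prbc-fpt} should be stated with $2^{O(k^2\log k)}$.
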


\begin{proof}
We bound (i) the number of guesses and (ii) the cost per guess.

\smallskip
\noindent
\emph{Number of global guesses.}
By \Cref{lem:number-of-skeletons} the number of skeletons $\sigma$ is $k^{\Oh(k)}$.
Fix any skeleton.
There are at most $|\impline|\le 4k$ lines.
For each line $\lambda\in\impline$, we enumerate:
\begin{itemize}
\item an endpoint ordering $\pi(\sigma,\lambda)$ over $2m(\sigma,\lambda)\le 2k$ symbols;
\item a gap vector $(K(\sigma,\lambda),\gapfn(\sigma,\lambda))$; and
\item a set $\indx_{\sf exp}(\sigma,\lambda)\subseteq [k]\setminus \indx(\sigma,\lambda)$ and a pattern $\exppat(\sigma,\lambda)$.
\end{itemize}
By the counting bounds established in the corresponding subsections (endpoint-order counting via Dyck-word structure,
and gap/exception-pattern counting), the number of combined choices per line is bounded by $k^{\Oh(k)}$. Since there are at most \(4k\) lines, the total
number of global guesses is at most \(\bigl(k^{\Oh(k)}\bigr)^{4k} = k^{\Oh(k)}\).

\smallskip
\noindent
\emph{Cost per guess.}
For each fixed guess we construct $\cI(\sigma,\Pi,\Gamma,\cE)$ as follows.
For each line $\lambda\in\impline$, we process the set $P\cap\lambda$:
sort its coordinates along the line, build arrays that evaluate $A_\lambda(\cdot)$ and $B_\lambda(\cdot)$ on the relevant
grid/domain values, and build the induced predecessor/successor maps on the finite domain
$\{x(p)\mid p\in P\cap h\}$ or $\{y(p)\mid p\in P\cap v\}$.
All of this takes polynomial time in $n$.
The total number of variables and constraints we add is bounded by a function of $k$ times a polynomial in $|P|$
(indeed, we add only $\Oh(k)$ rectangle variables and, per line, at most $\Oh(K(\sigma,\lambda)\le \Oh(k)$ auxiliary point
variables and constraints). Finally we apply Proposition~\ref{prop:distinct-domian-csp} to solve the problem in polynomial time. 

\smallskip
\noindent
Multiplying the number of guesses by the per-guess construction time yields a total running time of
$k^{\cO(k)}\cdot n^{\Oh(1)}$.
\end{proof}

Combining \Cref{lem:full-reduction-correct,lem:full-reduction-time} with the polynomial time algorithm for \ddmtcsp (Proposition~\ref{prop:distinct-domian-csp}) yields an FPT
algorithm for \prbcshort.

\paragraph*{Acknowledgements.}
We thank the anonymous reviewers for their helpful comments and suggestions on an earlier manuscript, which improved the
presentation and clarity of the paper.




\bibliography{my_bib}
\end{document}